\documentclass[12pt, a4paper]{article}

\usepackage{graphicx}
\usepackage{bm}
\usepackage{amsmath, amssymb,theorem,verbatim}
\usepackage[round]{natbib} 
\usepackage{setspace}
\usepackage{multirow} 
\usepackage{hyperref}
\usepackage{algpseudocode}
\usepackage{algorithm}
\usepackage{textcomp} 
\usepackage{placeins}

\hypersetup{
  hypertexnames=false,
  pdftitle={LABS: Extending the scope of binary segmentation via a look-ahead device},
  pdfauthor={Piotr Fryzlewicz},
  pdfkeywords={change-point detection, piecewise-linear signal, recursive algorithm, trend estimation}
}

\long\def\comment#1{}

\newcommand{\qed}{\hfill$\square$}
\newenvironment{proof}[1][Proof]{\par\noindent\textit{#1.}\space}{\qed\par\medskip}

\newtheorem{theorem}{Theorem}[section]
\newtheorem{prop}{Proposition}[section] 
\newtheorem{lemma}{Lemma}[section]
\newtheorem{cor}{Corollary}[section] 
\newtheorem{defin}{Definition}[section] 
\newtheorem{assumption}{Assumption}[section]

\newtheorem{remark}{Remark}[section]

\newcommand{\mainthm}{Theorem~4.1}
\newcommand{\assterm}{Assumption~4.1}
\newcommand{\assnondeg}{Assumption~4.2}
\newcommand{\defpop}{Definition~4.1}
\newcommand{\proppop}{Proposition~4.1}
\newcommand{\algmain}{Algorithm~2}

\begin{document}

\title{LABS: Extending the scope of binary segmentation via a look-ahead device} 
\author{Piotr Fryzlewicz\thanks{Department of Statistics, London School of Economics and Political Science, London WC2A 2AE, UK. Correspondence: \texttt{p.fryzlewicz@lse.ac.uk}. ORCID: 0000-0002-9676-902X.}}
\date{}
\maketitle

\begin{abstract}
Binary segmentation is widely used for multiple change-point detection because it is fast, simple to describe, and simple to implement. Its validity rests on the requirement that, at each recursive stage, the procedure identifies one of the true change-points when several are present in the current interval. This holds for detecting changes in mean using the CUSUM statistic, but fails in some other settings, in particular in slope change detection for continuous piecewise-linear signals. We propose Look-Ahead Binary Segmentation (LABS), a modification in which the change-points returned by the two child recursions define a narrower interval on which the parent estimate is re-evaluated. LABS inherits the computational speed of standard binary segmentation, but achieves the near-optimal consistency rate of $O\{(n\log n)^{1/2}\}$ in the slope-change signal setting when the LABS model is chosen via either thresholding or a Schwarz-like information criterion. Simulations show that LABS is fast and achieves state-of-the-art performance.
\vspace{5pt}

\noindent {\bf Keywords:} Binary segmentation, change-point detection, piecewise-linear signal, recursive algorithm, trend estimation.
\end{abstract}

\section{Introduction}

Binary segmentation is a frequently used method for segmenting ordered data, such as time series, into multiple regions of perceived homogeneity. In its simplest form, it recursively subdivides each current section of the data into two maximally contrasting subsections, starting from the entire dataset, and stops on a given section when that section is either too short to be divided further or considered homogeneous according to a certain criterion. There is a 1-to-1 correspondence between the detected segments and the estimated change-points that separate them.

In a time series setting, binary segmentation is often used to identify regions with approximately the same expectation, either of the time series itself or of a transformation of it; see for example \cite{v81}, \cite{venkatraman1993}, \cite{b97}, \cite{ccs11}, \cite{cf12} and \cite{fsr11}, and \cite{f07a} for the link with Unbalanced Haar wavelets. \cite{kfe12} describe it as ``arguably the most widely used changepoint search method''. \cite{brodsky1993nonparametric} observe that many types of change, including distributional ones, can be seen as changes in the mean of certain diagnostic sequences, which makes binary segmentation applicable beyond mean shifts. \cite{f14a} proposes Wild Binary Segmentation, which preserves the binary segmentation logic but evaluates the contrast on many deterministically or randomly drawn sub-intervals, an idea also present in \cite{kbm23}. Both of these methods apply to the mean-shift model, but do not generalize beyond it.

Outside the classical time series setting, binary segmentation logic appears in Automated Interaction Detection \citep{ms63} and Classification and Regression Trees (CART; \cite{bfso84}), in which the best variable on which to subdivide is chosen at each stage in a greedy way from the set of available predictive features, and is inherited by modern descendants of CART offering improvements via bagging (random forests; \cite{b01}) or boosting \citep{f01, c16, kmfwcmyl17}. In these contexts the mechanism is alternatively referred to as recursive binary partitioning. In contrast to regression trees, which typically use constant models in each terminal node, \cite{hhz06} introduce partitioned regression, in which terminal nodes contain full regression models.

Binary segmentation is popular because it is fast (typically $O(n \log n)$, where $n$ is the data length), easy to explain, straightforward to implement, and usable with a range of contrast functions across a range of stochastic models. Its success, however, relies on whether it is capable, at each recursive stage, of identifying one of the change-points if multiple are present in the current interval. When this condition holds, binary segmentation can recursively isolate and estimate all change-points. When it fails, the algorithm may detect spurious change-points far from any true locations.
\cite{venkatraman1993} shows that, in the piecewise-constant signal setting, binary segmentation used with the CUSUM statistic (a maximum-likelihood-based contrast for additive i.i.d.\ Gaussian noise) does have this property, which underpins the consistency of binary segmentation for detecting changes in mean.
By contrast, \cite{bcf16} show that in the piecewise-linear signal model the corresponding maximum-likelihood-based contrast can achieve its maximum far from any of the true change-points when multiple are present in the current interval, a point reiterated by \cite{mfl19}. This is the obstacle to extending binary segmentation to this and similar settings.

Several methods address the limitations of classical binary segmentation in settings where it fails; we review only those applicable to the piecewise-linear signal model. In the Narrowest-Over-Threshold (NOT) method, \cite{bcf16} sample many intervals and select the narrowest one on which the contrast exceeds a threshold, the idea being that sufficiently narrow intervals are likely to contain at most one change-point. A related idea, based on interval expansion rather than sampling, appears in Isolate-Detect \citep[ID;][]{af22}.
\cite{mfl19} introduce CPOP, a dynamic programming method for detecting changes in slope using an $L_0$ penalty. This approach is optimal for the criterion it minimizes, but the authors find empirically that it has $O(n^2)$ computational cost when the number of change-points is fixed, making it orders of magnitude slower in practice than the other techniques discussed here, a point we illustrate later.
\cite{mf23} propose TrendSegment, a bottom-up adaptive wavelet-based approach that focuses on local features in the early stages of merging. 
\cite{koc24} propose a moving sum (MOSUM) procedure that scans for changes using local parameter estimates from adjacent sliding windows, with $O(n)$ cost for a fixed collection of bandwidths.

We propose Look-Ahead Binary Segmentation (LABS), a modification of classical binary segmentation that extends its scope to settings where the standard algorithm fails. It employs a look-ahead mechanism that uses information from the recursive calls on the child intervals to refine the change-point estimate in the parent interval, at each recursive step. This localizes the estimation problem adaptively, as the recursion proceeds. The modification is confined to a single refinement step within the recursive flow, so the algorithm remains straightforward to implement and to describe. It is naturally expressed through recursion, and it has the same typical $O(n \log n)$ cost as classical binary segmentation under balanced recursive splits.

The basic version of LABS requires only the threshold parameter, which is also needed by classical binary segmentation. We also describe a grid-based extension. It uses an $M$-point grid for the proposal and an $R$-point grid for the re-test, providing additional localization within each call at a fixed multiple of the cost. We prove consistency for every fixed $M,R \geq 2$, and for selection from a threshold solution path by a strengthened Schwarz criterion. This mechanism differs from the one in NOT: LABS is consistent even without grid refinement, corresponding to $M=R=2$, whereas NOT requires the number of sampled sub-intervals to grow with $n$.

The name look-ahead follows related uses in recursive search and decision-tree algorithms, where tentative decisions are assessed through their child problems; Section~S6 of the supplement gives details.

The remainder of this paper is organized as follows. Section~\ref{sec:motivation} presents the motivation and problem statement, focusing on the piecewise-linear signal model, reviews classical binary segmentation, and explains why it fails for detecting changes in slope. Section~\ref{sec:labs} introduces the LABS algorithm and its grid-based extension, and compares it with existing methods. Section~\ref{sec:consistency} establishes consistency for the piecewise-linear model. Section~\ref{sec:simulations} reports a simulation study. Section~\ref{sec:data} applies LABS to real data. Proofs and further discussion are in the supplement appended to this preprint.

\section{Motivation and problem statement}
\label{sec:motivation}

Although LABS applies to a variety of change-point models, including changes in mean, changes in variance, and changes in higher-order polynomial structure, we present it in the context of detecting changes in slope in a continuous piecewise-linear signal. This is a canonical setting in which classical binary segmentation is known to fail \citep{bcf16, mfl19}. The continuous piecewise-linear model arises in applications such as the analysis of tropospheric ozone trends \citep{csks23}, COVID-19 infection curves \citep{jzs23}, vehicle state estimation \citep{hkpck25}, and intracellular transport \citep{ddcm25}.
Our application is to a global temperature time series; see Section \ref{sec:data}.

We observe $X_1, \ldots, X_n$ generated by
\begin{equation}
\label{eq:model}
    X_t = f_t + \varepsilon_t, \qquad t = 1, \ldots, n,
\end{equation}
where $\varepsilon_t$ are zero-mean random variables with common variance $\sigma^2$, and the signal $f_t$ is continuous and piecewise-linear. Specifically, we assume there exist $N$ change-points $1 < \tau_1 < \cdots < \tau_N < n$ and segment parameters $(\theta_{j,1}, \theta_{j,2})$ for $j = 1, \ldots, N\!+\!1$, such that $f_t = \theta_{j,1} + \theta_{j,2}\, t$ for $t = \tau_{j-1}+1, \ldots, \tau_j$ (with $\tau_0 = 0$, $\tau_{N+1} = n$), subject to the continuity constraint
\begin{equation}
\label{eq:continuity}
    \theta_{j,1} + \theta_{j,2}\, \tau_j \;=\; \theta_{j+1,1} + \theta_{j+1,2}\, \tau_j, \qquad j = 1, \ldots, N,
\end{equation}
which requires the left and right linear pieces to agree at each change-point $\tau_j$. The change-points are the indices at which the slope changes, that is, $\theta_{j,2} \neq \theta_{j+1,2}$, and our task is to estimate $N$ and the locations $\tau_1, \ldots, \tau_N$. The contrast functions below are derived under independent Gaussian errors, and the simulations of Section~\ref{sec:simulations} use independent Gaussian noise; the algorithm itself does not depend on this and only requires a contrast function to be supplied.

We first review classical binary segmentation. Given a contrast function $\mathcal{C}_{s,e}(b)$ that quantifies the evidence for a change-point at location $b$ within the interval $(s, e]$, binary segmentation proceeds as follows. $\mathcal{C}_{s,e}(b)$ is taken to be non-negative, as is the case for the CUSUM \eqref{eq:cusum} and piecewise-linear GLR \eqref{eq:lin_contrast} contrasts used here, which are both defined as absolute values; for a general signed contrast, replace $\mathcal{C}$ by $|\mathcal{C}|$ throughout.

\begin{algorithm}[H]
\caption{Classical Binary Segmentation}
\label{alg:bs}
\begin{algorithmic}[1]
\Function{BinSeg}{$X, s, e, \lambda$}
    \State $\hat{b} \gets \arg\max_{s < b < e} \mathcal{C}_{s,e}(b)$
    \If{$\mathcal{C}_{s,e}(\hat{b}) > \lambda$}
        \State Declare a change-point at $\hat{b}$
        \State \Call{BinSeg}{$X, s, \hat{b}, \lambda$} \Comment{Recurse on $(s, \hat{b}]$}
        \State \Call{BinSeg}{$X, \hat{b}, e, \lambda$} \Comment{Recurse on $(\hat{b}, e]$}
    \EndIf
\EndFunction
\end{algorithmic}
\end{algorithm}

\noindent The algorithm is launched by calling $\textsc{BinSeg}(X, 0, n, \lambda)$, where $\lambda > 0$ is a threshold parameter. In the piecewise-constant signal model with additive Gaussian noise, the natural contrast is the CUSUM statistic
\begin{equation}
\label{eq:cusum}
    \mathcal{C}^{\mathrm{CUSUM}}_{s,e}(b) = \sqrt{\frac{(e-b)(b-s)}{e-s}} \left| \frac{1}{b-s}\sum_{t=s+1}^{b} X_t - \frac{1}{e-b}\sum_{t=b+1}^{e} X_t \right|.
\end{equation}
A result of \cite{venkatraman1993} establishes that the CUSUM achieves its maximum near one of the true change-points even when multiple are present in $(s, e]$, which underpins the consistency of binary segmentation for detecting changes in mean.

In the continuous piecewise-linear model, the contrast derived from the generalized likelihood ratio \citep{bcf16} takes a different form. For each candidate location $b \in \{s+2, \ldots, e-1\}$, define $\ell = e - s$ and the normalizing quantities
{\small
\begin{equation}
\label{eq:alpha_beta}
\begin{aligned}
\alpha^b_{(s,e]}
&= \left(\frac{6}{\ell(\ell^2 - 1)\big(1 + (e\!-\!b\!+\!1)(b\!-\!s) + (e\!-\!b)(b\!-\!s\!-\!1)\big)}\right)^{\!1/2}\!\!\!, \\
\beta^b_{(s,e]}
&= \left(\frac{(e-b+1)(e-b)}{(b-s-1)(b-s)}\right)^{\!1/2}\!\!\!.
\end{aligned}
\end{equation}
}
The contrast vector $\phi^b_{(s,e]} = (\phi^b_{(s,e]}(1), \ldots, \phi^b_{(s,e]}(n))'$ has components
{\footnotesize
\begin{equation}
\label{eq:contrast_vector}
\phi^b_{(s,e]}(t) = \begin{cases}
\alpha^b_{(s,e]} \, \beta^b_{(s,e]} \Big[\big(3(b\!-\!s)+(e\!-\!b)-1\big)\, t \,-\, \big(b(\ell-1)+2(s+1)(b-s)\big)\Big], & t = s\!+\!1, \ldots, b, \\[6pt]
-\,\dfrac{\alpha^b_{(s,e]}}{\beta^b_{(s,e]}} \Big[\big(3(e\!-\!b)+(b\!-\!s)+1\big)\, t \,-\, \big(b(\ell-1)+2e(e-b+1)\big)\Big], & t = b\!+\!1, \ldots, e, \\[6pt]
0, & \text{otherwise},
\end{cases}
\end{equation}
}
and the contrast function is
\begin{equation}
\label{eq:lin_contrast}
    \mathcal{C}^{\mathrm{lin}}_{s,e}(b) = \left| \langle X, \, \phi^b_{(s,e]} \rangle \right|.
\end{equation}
The vector $\phi^b_{(s,e]}$ is the unit-norm continuous piecewise-linear function on $(s,e]$ with a single knot at $b$ that is orthogonal to the constant and linear functions on that interval. Two observations to the left of $b$ and one to its right are needed for the knot to be identified, which gives the candidate set $\{s+2, \ldots, e-1\}$.

When the interval $(s, e]$ contains exactly one change-point, maximizing $\mathcal{C}^{\mathrm{lin}}_{s,e}(b)$ over $b$ yields a consistent estimator of its location with near-optimal guarantees \citep{bcf16}. When multiple change-points are present, however, the maximizer can be far from all of them, unlike for CUSUM in the piecewise-constant model. To illustrate this, consider the ``trap'' signal
\begin{equation}
\label{eq:trap}
    f_t = \begin{cases}
        t/\tau_1, & t = 1, \ldots, \tau_1, \\
        1, & t = \tau_1 + 1, \ldots, \tau_2, \\
        (\tau_3 - t)/(\tau_3 - \tau_2), & t = \tau_2 + 1, \ldots, \tau_3,
    \end{cases}
\end{equation}
where $0 < \tau_1 < \tau_2 < \tau_3$, illustrated in Figure~\ref{fig:trap} with $\tau_1 = 200$, $\tau_2 = 400$, $\tau_3 = 600$, alongside the contrast $\mathcal{C}^{\mathrm{lin}}_{0,n}(b)$ computed on the full interval. The contrast profile is unimodal, with its maximum near $b = 300$, the midpoint between the two true change-points, rather than near either $\tau_1$ or $\tau_2$ (neither $\tau_1$ nor $\tau_2$ is even a local maximum). Binary segmentation would therefore detect a spurious change-point at the midpoint before recursing to the left and to the right. This failure \citep{bcf16, mfl19} is the problem that LABS is designed to address. Its cause is that the contrast is evaluated on an interval containing multiple change-points, so that localization to narrower intervals is needed. As shown in the next section, LABS achieves the localization adaptively as the recursion progresses, by passing change-point information from children to parents and using it to localize the estimation for the parent.

\begin{figure}[htbp]
\centering
\includegraphics[width=0.65\textwidth]{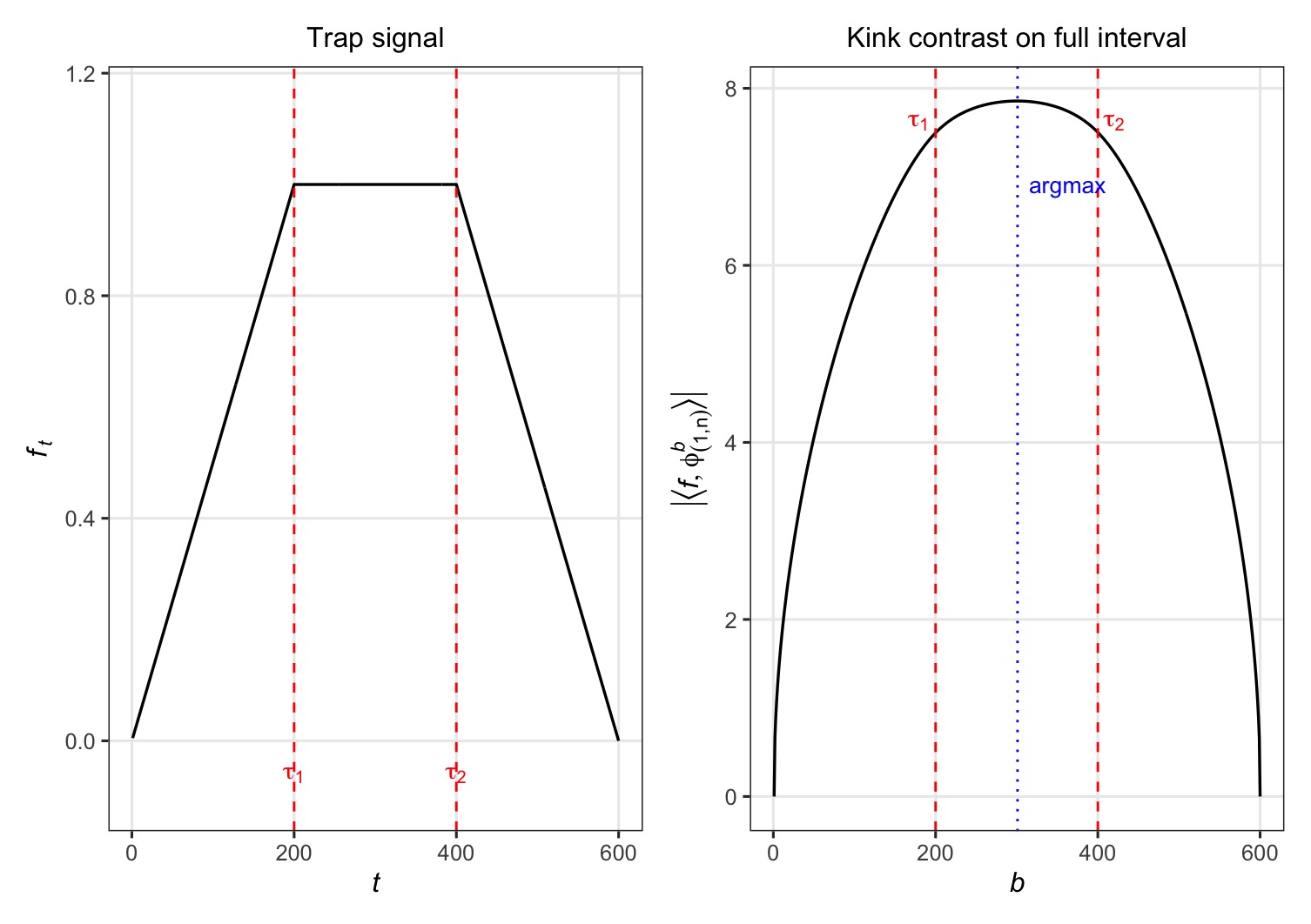}
\caption{Left: the trap signal~\eqref{eq:trap} with $\tau_1 = 200$, $\tau_2 = 400$, $\tau_3 = 600$; dashed red lines mark the true change-points. Right: the absolute kink contrast $|\langle f, \phi^b_{(0,n]} \rangle|$ computed over the full interval; its maximum (dotted blue) is near $b = 300$, away from either true change-point.}
\label{fig:trap}
\end{figure}

\section{Look-Ahead Binary Segmentation (LABS)}
\label{sec:labs}

\subsection{The algorithm}
\label{sec:labs_algorithm}

The failure of classical binary segmentation in the piecewise-linear setting stems from estimating change-point locations on intervals that contain multiple change-points. The methods reviewed in Section~\ref{sec:motivation}, most notably ID and NOT, address this by changing which intervals are considered. LABS instead preserves the recursive structure of binary segmentation: after detecting a candidate change-point and recursing into the left and right child intervals, the change-points detected in those children are used to define a narrower interval on which the parent change-point is re-evaluated. We refer to this device as look-ahead.

Suppose a candidate change-point $\hat{b}$ is detected in $(s, e]$, and on recursing let $\mathcal{L}$ and $\mathcal{R}$ be the sets of change-points detected in $(s, \hat{b}]$ and $(\hat{b}, e]$. When non-empty, $\max(\mathcal{L})$ provides a refined left boundary and $\min(\mathcal{R})$ a refined right boundary. Under the assumptions of Section~\ref{sec:consistency}, the resulting interval contains at most one previously undetected change-point, so re-evaluating the contrast on it gives a more accurate estimate.

Let $\mathcal{C}_{s,e}(b)$ denote a contrast function and $\lambda > 0$ a threshold. The algorithm is launched by calling $\textsc{LABS}(X, 0, n, \lambda)$. Here $B_{s,e}$ denotes the set of valid candidate locations for the contrast on $(s,e]$: $B_{s,e} = \{s+2, \ldots, e-1\}$ for the piecewise-linear GLR \eqref{eq:lin_contrast}, and $B_{s,e} = \{s+1, \ldots, e-1\}$ for the CUSUM \eqref{eq:cusum}. The contrast is evaluable on $(s,e]$ when $B_{s,e} \neq \emptyset$. We adopt the leftmost-maximizer convention throughout: $\arg\max$ is interpreted as $\min \arg\max$ when the maximizer is not unique.

\begin{algorithm}[H]
{\scriptsize
\caption{Look-Ahead Binary Segmentation (LABS)}
\label{alg:labs}
\begin{algorithmic}[1]
\Function{LABS}{$X, s, e, \lambda$}
    \If{$B_{s,e} = \emptyset$ (interval too short)}
        \State \Return $\emptyset$
    \EndIf
    \State $\hat{b} \gets \arg\max_{b \in B_{s,e}} \mathcal{C}_{s,e}(b)$
    \If{$\mathcal{C}_{s,e}(\hat{b}) > \lambda$} \Comment{Proposal phase}
        \State $\mathcal{L} \gets$ \Call{LABS}{$X, s, \hat{b}, \lambda$} \Comment{Recurse on $(s, \hat{b}]$}
        \State $\mathcal{R} \gets$ \Call{LABS}{$X, \hat{b}, e, \lambda$} \Comment{Recurse on $(\hat{b}, e]$}
        \State \Comment{Look-ahead refinement phase}
        \If{$\mathcal{L} \neq \emptyset$ \textbf{or} $\mathcal{R} \neq \emptyset$}
            \State $s' \gets \begin{cases} \max(\mathcal{L}) & \text{if } \mathcal{L} \neq \emptyset \\ s & \text{otherwise} \end{cases}$
            \State $e' \gets \begin{cases} \min(\mathcal{R}) & \text{if } \mathcal{R} \neq \emptyset \\ e & \text{otherwise} \end{cases}$
            \If{$B_{s',e'} \neq \emptyset$}
                \State $\hat{b}' \gets \arg\max_{b \in B_{s',e'}} \mathcal{C}_{s',e'}(b)$ \Comment{Re-test on $(s', e']$}
                \If{$\mathcal{C}_{s',e'}(\hat{b}') > \lambda$}
                    \State \Return $\mathcal{L} \cup \{\hat{b}'\} \cup \mathcal{R}$ \Comment{Accept refined estimate}
                \EndIf
            \EndIf
            \State \Return $\mathcal{L} \cup \mathcal{R}$ \Comment{Re-test failed: discard parent estimate}
        \EndIf
        \State \Return $\mathcal{L} \cup \{\hat{b}\} \cup \mathcal{R}$ \Comment{Both children empty: keep parent estimate}
    \Else
        \State \Return $\emptyset$
    \EndIf
\EndFunction
\end{algorithmic}
}
\end{algorithm}

There are two phases at each recursive level. The proposal phase is that of classical binary segmentation: a candidate $\hat{b}$ is identified by maximizing the contrast over the current interval, and if the contrast exceeds the threshold, the algorithm recurses into the two subintervals. In the look-ahead refinement phase, the change-point sets returned by the child recursions define a narrower interval $(s', e']$ on which the contrast is re-evaluated, and the parent change-point is discarded if the re-test does not exceed the threshold. The difference from classical binary segmentation is that information flows not only from parent to children, through the interval boundaries, but also from children back to parent, through the detected change-points.

To see how this corrects the trap problem, consider the trap signal~\eqref{eq:trap} with change-points at $\tau_1 = 200$ and $\tau_2 = 400$. The initial contrast on $(0, 600]$ is maximized near the midpoint $\hat{b} \approx 300$. Recursing, the left child $(0, 300]$ contains only $\tau_1$ and detects it; the right child $(300, 600]$ contains only $\tau_2$ and detects it. The look-ahead refinement then re-evaluates the contrast on the narrower interval $(\hat{\tau}_1, \hat{\tau}_2] \approx (200, 400]$, which corresponds to the flat middle section and contains no new change-points. The re-test does not exceed the threshold, and the spurious midpoint change-point is discarded. We revisit this example with noisy data in Section~\ref{sec:simulations}.


\subsection{Grid-based extension}
\label{sec:grid}

The basic LABS algorithm evaluates the contrast on the single interval $(s, e]$ at each recursive call. A natural extension evaluates the contrast on multiple sub-intervals of $(s, e]$, providing additional localization within each call. We form a grid of $M \geq 2$ equally spaced points within the interval $[s, e]$:
\[
g_1 = s, \quad g_2 = \left\lfloor s + \tfrac{e - s}{M - 1} \right\rceil, \quad \ldots, \quad g_M = e,
\]
where $\lfloor \cdot \rceil$ denotes rounding to the nearest integer. For every pair $(g_i, g_j)$ with $i < j$ and $g_j - g_i \geq 3$, we compute $\mathcal{C}_{g_i, g_j}(b)$ over each candidate $b \in \{g_i+2, \ldots, g_j-1\}$. The candidate change-point is the location achieving the global maximum across all such sub-intervals, and a change-point is declared if this maximum exceeds $\lambda$. When $M = 2$, only the full interval $(s, e]$ is evaluated, recovering Algorithm~\ref{alg:labs}.

The grid-based search complements the look-ahead mechanism. The look-ahead narrows the re-test interval using information from child recursions, across levels of the recursion tree. By contrast, the grid search identifies favorable sub-intervals within one recursive call. In our implementation it is applied in both the proposal and re-test phases. We use $M$ points for the proposal grid and $R$ points for the re-test grid. A single-interval evaluation, $R=2$, should usually suffice for the re-test because the look-ahead has already narrowed its interval, so the paper considers the case $R > 2$ for completeness rather than out of computational necessity.

Write $\textsc{GridSearch}(X,s,e,K)$ for the routine returning the pair $(\hat b,W)$, where $\hat b$ gives the maximum contrast across all sub-intervals formed by a $K$-point grid on $[s,e]$, and $W$ is that maximum. LABS-Grid replaces the proposal maximization in Algorithm~\ref{alg:labs} by $\textsc{GridSearch}(X,s,e,M)$ and the re-test maximization by $\textsc{GridSearch}(X,s',e',R)$. In each case it compares $W$ with $\lambda$. It also replaces the short-interval guard by the condition that the relevant grid search can be evaluated. Algorithm~S1 of the supplement gives the full procedure. A proposal search costs $O\{M^2(e-s)\}$ and a re-test search costs $O\{R^2(e'-s')\}$. The typical overall cost under balanced splits is therefore $O\{(M^2+R^2)n\log n\}$. For fixed small grids, such as those with $M\leq 5$ used below, these are fixed multiples of the corresponding LABS costs.

The same threshold $\lambda$ is used at the proposal and re-test stages. Because $\phi^b_{(s,e]}$ has unit norm, contrasts on sub-intervals of different lengths are directly comparable. A larger grid searches more sub-intervals, however, so its finite-sample null distribution changes. In Section~\ref{sec:simulations} an information criterion selects from a threshold path and accounts for this change. For a fixed-threshold implementation, the constant in $\lambda$ may need to depend on $M$ and $R$.

A distinction between the grid-based extension and NOT concerns the number of sub-intervals required for consistency. NOT draws $\mathcal{M}$ random intervals from the entire sample and requires $\mathcal{M} \to \infty$ as $n \to \infty$ to guarantee that at least one interval isolates each change-point. Under the assumptions of Theorem~\ref{thm:labs_consistency}, LABS is consistent even without grid refinement, the case $M=R=2$, because localization is provided by the look-ahead rather than by interval sampling. Increasing $M$ can improve finite-sample performance but is not needed for consistency. Section~S4 of the supplement extends the consistency result for LABS (Section \ref{sec:consistency}) to every fixed $M,R\geq2$. Its population recursion and its assumptions concern proposal windows, so they depend on $M$. A re-test window contains either no undetected kink or one undetected kink. With one kink, the full re-test interval has the unique largest population contrast for every fixed $R$, and with no kink the population contrast is zero. Thus no extra signal assumption is needed for $R$, and $R=2$ suffices. The proof uses the same threshold order for every fixed $M$ and $R$.

\section{Consistency of LABS}
\label{sec:consistency}

We now establish the consistency of LABS in the continuous piecewise-linear model of Section~\ref{sec:motivation}, with the contrast \eqref{eq:lin_contrast}. The algorithm analyzed is Algorithm~\ref{alg:labs} as stated, without grid refinement, so the result applies to the case $M = R = 2$. For an extension to $M, R \ge 2$, see Section S4 of the supplement.

We work in the fixed change-point regime, formulated as a triangular array. Fix $N$, fix $q_1 < \cdots < q_N$ in $(0,1)$ and non-zero reals $d_1, \ldots, d_N$, and put $q_0=0$, $q_{N+1}=1$ and $\delta_{\min}=\min_{0\leq i\leq N}(q_{i+1}-q_i)$. When $N\geq1$, also put $\underline d=\min_j|d_j|$ and $\bar d=\max_j|d_j|$. When $N=0$, the signal is affine, meaning that it is a straight line, with a constant signal included as a special case. For each $n$, let $f$ be as in \eqref{eq:model}, with change-points $\tau_j=\lfloor q_jn\rfloor$, slope jumps $\Delta_j:=\theta_{j+1,2}-\theta_{j,2}=d_j/n$, and independent, mean-zero, $\sigma$-sub-Gaussian errors. Let $g$ be the continuous piecewise-linear function on $[0,1]$ with kinks $q_j$ and slope jumps $d_j$. The overall level and initial slope of $f$ and $g$ are arbitrary. Adding a straight line to the signal does not change the contrast, which is orthogonal to constant and linear functions. Apart from this arbitrary component, $f_t$ differs uniformly from $g(t/n)$ by $O(n^{-1})$.

We prescribe $\Delta_j=d_j/n$ exactly. If instead one sets $f_t=g(t/n)$, the second difference at $\tau_j$ is $d_j(1-\{nq_j\})/n$. It then depends on the fractional part of $nq_j$ and can be arbitrarily small. The exact prescription avoids this accidental weakening of a slope change. All rates below use $|\Delta_j|\asymp n^{-1}$. The threshold is
\begin{equation}
\label{eq:threshold}
\lambda = \lambda_n = \Theta \, \sigma \, (8 \log n)^{1/2}, \qquad \Theta > 1 \text{ fixed}.
\end{equation}

The factor 8 is a convenient, non-sharp proof constant, not a finite-sample calibration. The proof applies a union bound, equivalent here to a Bonferroni bound, to at most $n^3$ triples $(s,e,b)$. Any fixed constant greater than 6 can replace 8 inside the square root and still make that probability tend to one. Model selection via the strengthened Schwarz Information Criterion (sSIC), introduced in Section \ref{sec:mms} and analyzed theoretically in S5 of the supplement, removes the need to select a threshold.

\subsection{Signal strength on a window}

On a window $(s,e]$, let $B_{s,e}$ be the valid candidate set defined in Section~\ref{sec:labs_algorithm}. The contrast vector \eqref{eq:contrast_vector} is the unit vector $\hat\psi_b=\psi_b/\|\psi_b\|$, where $\psi_b$ is the projection of the hinge $(t-b)_+$ onto the orthogonal complement of the constant and linear functions on the window. For a kink $\tau_j\in B_{s,e}$, set
\[
x_j=\min(\tau_j-s,e+1-\tau_j),
\]
its distance from the nearer edge, and set $x_j=0$ otherwise. Define
\[
\ell_j(s,e) \; = \; |\Delta_j| \, \|\psi_{\tau_j}\|
\]
as the \emph{signal strength} of the kink on that window, and set $\ell_j(s,e)=0$ when $\tau_j\notin B_{s,e}$. Then
$\max_{b\in B_{s,e}}|\langle f,\hat\psi_b\rangle|\leq\sum_j\ell_j(s,e)$, with equality when exactly one kink is present. Thus detectability is governed by the signal strength. Section~S2 of the supplement shows that there are absolute constants $0<c_\ell\leq C_\ell<\infty$ such that
\begin{equation}
\label{eq:levlaw}
c_\ell \, \frac{|d_j| \, x_j^{3/2}}{n} \;\leq\; \ell_j(s,e) \;\leq\; C_\ell \, \frac{|d_j| \, x_j^{3/2}}{n} .
\end{equation}
Up to universal constant factors, signal strength is therefore determined by the distance from the kink to the nearer edge, and grows as $x^{3/2}$; the exact norm depends on both distances, but only its order is controlled by $x$.

Write $\bar r_n = (n \log n)^{1/2}$. Two regimes follow from \eqref{eq:levlaw}: a kink within $O(\bar r_n)$ of a window edge has signal strength $O\{n^{-1/4}(\log n)^{3/4}\}=o(1)$, below the noise level $\sigma(8\log n)^{1/2}$, and is not detected. However, a kink at distance at least $\kappa n$ from both edges, for a constant $\kappa>0$, has signal strength at least $c_\ell\underline d\,\kappa^{3/2}n^{1/2}$, above that level, and is detected. The two regimes are separated by a factor of order $n^{3/4}(\log n)^{-3/4}$.

What has to be shown is that LABS produces no window in which a kink falls between the two regimes. The only window edge a recursive call creates is $\hat b$, and $\hat b$ is either within $O(\bar r_n)$ of a kink or, under Assumption~\ref{ass:nondeg} below, at distance of order $n$ from every kink.

\subsection{The population recursion}

The assumptions below are imposed on the finitely many windows the recursion visits, rather than on all intervals. These windows are described by a deterministic object depending on the signal alone. For $0\leq\alpha<\beta\leq1$, let $\Psi^{\alpha\beta}_v$ denote the continuum analogue of $\psi_b$ on $(\alpha,\beta)$. Let $V(\alpha,\beta)=\{j:\alpha<q_j<\beta\}$ be the set of \emph{live} kinks, meaning the true kinks still lying strictly inside the current population window. Define the noiseless profile and its leftmost maximizer
\begin{equation}
\label{eq:popprofile}
\bar D_{\alpha\beta}(v) = \Big\langle \sum_{j \in V(\alpha,\beta)} d_j \Psi^{\alpha\beta}_{q_j}, \ \Psi^{\alpha\beta}_v \big/ \|\Psi^{\alpha\beta}_v\| \Big\rangle, \qquad c(\alpha,\beta) = \min \arg\max_{v} \big| \bar D_{\alpha\beta}(v) \big| .
\end{equation}

\begin{defin}[Population LABS, POP]
\label{def:pop}
$\mathrm{POP}(\alpha,\beta)$ returns $\emptyset$ if $V(\alpha,\beta) = \emptyset$. Otherwise it sets $c = c(\alpha,\beta)$ and computes $\mathcal{L} = \mathrm{POP}(\alpha,c)$ and $\mathcal{R} = \mathrm{POP}(c,\beta)$; if both are empty it returns $\{c\}$, and otherwise, with $\alpha' = \max \mathcal{L}$ if $\mathcal{L} \neq \emptyset$ and $\alpha' = \alpha$ else, and $\beta'$ defined symmetrically, it returns $\mathcal{L} \cup \{ c(\alpha',\beta') \} \cup \mathcal{R}$ when $V(\alpha',\beta') \neq \emptyset$ and $\mathcal{L} \cup \mathcal{R}$ otherwise. Write $\mathcal{T}(g)$ for the set of windows at which $\mathrm{POP}$ is invoked, starting from $\mathrm{POP}(0,1)$.
\end{defin}

This is Algorithm~\ref{alg:labs} with ``the contrast exceeds $\lambda$'' criterion replaced by ``the window contains a kink in its interior''.

\begin{prop}
\label{prop:popexact}
If $\mathcal{T}(g)$ is finite and $c(\alpha,\beta)$ is well defined at each of its nodes, then $\mathrm{POP}(\alpha,\beta) = \{ q_j : j \in V(\alpha,\beta) \}$ at every node. In particular $\mathrm{POP}(0,1) = \{q_1, \ldots, q_N\}$.
\end{prop}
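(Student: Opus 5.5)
The plan is a structural induction on the recursion tree $\mathcal{T}(g)$, from the leaves upward. Because $\mathcal{T}(g)$ is finite, every call to $\mathrm{POP}$ terminates, and we may assume the claim $\mathrm{POP}(\alpha,\beta)=\{q_j: j\in V(\alpha,\beta)\}$ for both children of a node before proving it at that node. The base case is a node with $V(\alpha,\beta)=\emptyset$. There $\mathrm{POP}$ returns $\emptyset$, which is exactly the set of live kinks. At any other node, $c=c(\alpha,\beta)$ is well defined by hypothesis. It is a maximizer over $v\in(\alpha,\beta)$, so $\alpha<c<\beta$. Hence the live set splits as
\[
V(\alpha,\beta)=V(\alpha,c)\,\cup\,V(c,\beta)\,\cup\,\{j:q_j=c\},
\]
a disjoint union. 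By the induction hypothesis, $\mathcal{L}$ and $\mathcal{R}$ are the kinks lying strictly inside $(\alpha,c)$ and $(c,\beta)$ respectively. The proof then reduces to showing that the parent step contributes exactly the kink at $c$ when there is one, and nothing otherwise.

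\emph{Case 1: $\mathcal{L}=\mathcal{R}=\emptyset$.} Then $V(\alpha,\beta)\subseteq\{j:q_j=c\}$. Since $V(\alpha,\beta)\neq\emptyset$, we get $V(\alpha,\beta)=\{j\}$ with $q_j=c$, and the returned set $\{c\}$ is correct.

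\emph{Case 2: $\mathcal{L}\cup\mathcal{R}\neq\emptyset$.} When $\mathcal{L}\neq\emptyset$, $\alpha'=\max\mathcal{L}$ is the largest kink in $(\alpha,c)$; otherwise $\alpha'=\alpha$. In both situations no kink lies in $(\alpha',c)$, and symmetrically no kink lies in $(c,\beta')$. Hence $V(\alpha',\beta')\subseteq\{j:q_j=c\}$.
\begin{itemize}
\item If $c$ is not a kink, then $V(\alpha',\beta')=\emptyset$. $\mathrm{POP}$ returns $\mathcal{L}\cup\mathcal{R}$, which equals $V(\alpha,\beta)$ by the decomposition above.
\item If $c=q_j$, then $V(\alpha',\beta')=\{j\}$, and we must show that $c(\alpha',\beta')=q_j$. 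Then the returned set $\mathcal{L}\cup\{q_j\}\cup\mathcal{R}$ is exactly the live set.
\end{itemize}

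The remaining step, which I expect to be the main obstacle, is a one-kink lemma. If a window $(\alpha',\beta')$ contains exactly one live kink $q_j$, then $|\bar D_{\alpha'\beta'}(v)|$ has the unique maximizer $v=q_j$. In particular, $c(\alpha',\beta')$ is well defined and equals $q_j$ even if the hypothesis did not already guarantee well-definedness at this window. To prove it, write
\[
|\bar D_{\alpha'\beta'}(v)|=|d_j|\,\bigl|\langle \Psi^{\alpha'\beta'}_{q_j},\Psi^{\alpha'\beta'}_v\rangle\bigr|/\|\Psi^{\alpha'\beta'}_v\|.
\]
By Cauchy--Schwarz this is at most $|d_j|\,\|\Psi^{\alpha'\beta'}_{q_j}\|$, which is the continuum counterpart of the equality statement $\max_b|\langle f,\hat\psi_b\rangle|=\ell_j$ in Section~4.1. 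The value at $v=q_j$ attains the bound. Equality requires $\Psi_v$ to be collinear with $\Psi_{q_j}$. That would mean the hinges $(x-v)_+$ and $(x-q_j)_+$ are linearly dependent modulo the span of $1$ and $x$ on $(\alpha',\beta')$. This is impossible for $v\neq q_j$ in the open window: a nontrivial combination of the two hinges has a genuine kink, i.e.\ a jump in its derivative, at each of the two distinct interior points, so it cannot be affine. Here $d_j\neq0$ and $\Psi_{q_j}\neq0$ because $q_j$ is interior. This gives uniqueness, so the leftmost-maximizer convention plays no role at this step.

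The induction therefore closes at every node of $\mathcal{T}(g)$, and taking the root $(0,1)$ gives $\mathrm{POP}(0,1)=\{q_1,\ldots,q_N\}$.
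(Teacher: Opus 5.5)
Your proof is correct and follows essentially the same route as the paper's: induction on the height of the finite tree $\mathcal{T}(g)$, the partition of the live kinks at the split $c$, the observation that the re-test window lies between consecutive detected kinks (so it has no live kink when $c$ is not a kink, and exactly $q_{j^*}$ when $c=q_{j^*}$), and the one-kink fact, proved by Cauchy--Schwarz and the non-collinearity of projected hinges with distinct knots. The only differences are organizational: you split cases on whether the children return empty sets rather than on $k=|V|$ and on whether $c$ is a kink, and you state explicitly that the one-kink lemma guarantees $c(\alpha',\beta')$ is well defined on the re-test window, which is in general not a node of $\mathcal{T}(g)$.
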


Proposition~\ref{prop:popexact}, proved in Section~S3 of the supplement, makes precise the argument of Section~\ref{sec:labs_algorithm}. If the split lands away from every kink, each child returns its own kinks, the re-test interval falls between two consecutive detected kinks and so is empty, and the spurious parent estimate is discarded. If instead the split lands on a kink, that kink is detected by neither child and is recovered by the re-test, on an interval where it is the only live kink.

\subsection{Assumptions and the consistency theorem}

\begin{assumption}[Termination]
\label{ass:term}
The population recursion started at $(0,1)$ terminates, that is, $\mathcal{T}(g)$ is finite.
\end{assumption}

\begin{assumption}[Non-degeneracy]
\label{ass:nondeg}
For every $(\alpha,\beta) \in \mathcal{T}(g)$ with $V(\alpha,\beta) \neq \emptyset$, the function $|\bar D_{\alpha\beta}|$ has a unique maximizer $c(\alpha,\beta)$ on $(\alpha,\beta)$, and $\partial_v^2 |\bar D_{\alpha\beta}| (c(\alpha,\beta)) < 0$.
\end{assumption}

Assumption~\ref{ass:nondeg} is the standard requirement that a population criterion have a well-separated optimum; without it the sample maximizer need not converge at any rate.
Assumption~\ref{ass:term} holds in particular whenever $c(\alpha,\beta)$ lies strictly between the extreme live kinks at every node. In that case the population recursion tree $\mathcal{T}(g)$ of Definition~\ref{def:pop}, whose nodes are the windows visited by $\mathrm{POP}$, has depth at most $N+1$.
Both assumptions concern $g$ alone, and neither refers to $n$, to the noise, or to any class of intervals. 

No condition separating $c(\alpha,\beta)$ from the change-points is assumed. Because $\mathcal{T}(g)$ is finite, the quantity $\min_l |c(\alpha,\beta) - q_l|$, which at each node is either zero or strictly positive, has a strictly positive minimum over the nodes at which it does not vanish.

\begin{theorem}
\label{thm:labs_consistency}
Let $X_1, \ldots, X_n$ follow \eqref{eq:model} in the regime described above, and suppose Assumptions~\ref{ass:term} and~\ref{ass:nondeg} hold for $g$. Let $\hat{\mathcal{T}}_n$ be the set returned by Algorithm~\ref{alg:labs} with the contrast \eqref{eq:lin_contrast} and threshold \eqref{eq:threshold}, written in increasing order as $\hat\tau_{(1)} < \cdots < \hat\tau_{(\hat N)}$, where $\hat N = |\hat{\mathcal{T}}_n|$, and let $\tau_1 < \cdots < \tau_N$ be the true change-points. Then there is a constant $C < \infty$, depending only on $g$ and $\sigma$, such that as $n \to \infty$:
\begin{itemize}
\item if $N \geq 1$:
\[
\mathbb{P}\!\left( \hat{N} = N \;\;\text{and}\;\; \max_{j=1,\ldots,N} |\hat{\tau}_{(j)} - \tau_j| \leq C \, (n \log n)^{1/2} \right) \to 1;
\]
\item if $N = 0$: $\mathbb{P}(\hat N = 0) \to 1$.
\end{itemize}
\end{theorem}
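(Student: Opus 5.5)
The plan is to couple the sample recursion of Algorithm~\ref{alg:labs} with the population recursion $\mathrm{POP}$ of Definition~\ref{def:pop}, node by node, on a single high-probability event. Everything then follows from Proposition~\ref{prop:popexact}.

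\textbf{Step 1 (noise event).} Define the event
\[
\mathcal{E}_n=\Big\{\max_{s<b<e}\big|\langle \varepsilon,\phi^b_{(s,e]}\rangle\big|\le \sigma(6\log n+\log\log n)^{1/2}\Big\},
\]
taken over at most $n^3$ triples. Each $\phi^b_{(s,e]}$ has unit norm, so sub-Gaussianity and a union bound give $\mathbb{P}(\mathcal{E}_n)\to1$. On $\mathcal{E}_n$ the noise in any contrast is at most $\lambda/\Theta'$ for some $\Theta'>1$. The case $N=0$ is then immediate: the contrast is orthogonal to affine functions, so $f$ contributes nothing, every proposal falls below $\lambda$, and $\hat N=0$.

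\textbf{Step 2 (proposal accuracy on a population window).} Take a window $(s,e]$ whose endpoints lie within $C_0\bar r_n$ of $(\alpha n,\beta n)$ for some node $(\alpha,\beta)\in\mathcal{T}(g)$.

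\emph{Case $V(\alpha,\beta)\neq\emptyset$.} The discrete signal contrast $\langle f,\hat\psi_b\rangle$ equals $n^{1/2}\bar D_{\alpha\beta}(b/n)+o(n^{1/2})$ uniformly in $b$, by a Riemann-sum approximation using $\Delta_j=d_j/n$ and $f_t=g(t/n)+O(n^{-1})$. The endpoint perturbation of order $\bar r_n$ contributes only $O(n^{1/2}\bar r_n/n)$ in value. Two consequences follow:
\begin{itemize}
\item The maximal contrast is of order $n^{1/2}\gg\lambda$, so a change-point is proposed.
\item By Assumption~\ref{ass:nondeg}, the population profile has a quadratic peak at $c$. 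The signal drop at distance $u$ from $cn$ is therefore $\gtrsim n^{1/2}(u/n)^2$, while the noise is bounded by $O(\sqrt{\log n})$ on $\mathcal{E}_n$. It remains to bound the deterministic perturbation of the maximizer caused by the $O(\bar r_n)$ endpoint shift. This requires the derivative of the profile in the endpoints, which is $O(1)$ in rescaled units, and yields an $O(\bar r_n)$ shift.
\end{itemize}
The crude comparison gives only $u=O(n^{3/4}(\log n)^{1/4})$. To reach the $\bar r_n$ rate I would split into two sub-cases.
\begin{itemize}
\item If $c=q_l$ is a kink, I use the sharper local argument of \cite{bcf16} for a single kink: near a kink the profile behaves like $|u|$ rather than $u^2$, and the noise increments are controlled by local sub-Gaussian maxima. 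This gives $O(\bar r_n)$.
\item If $c$ is away from all kinks, only $|\hat b-cn|=o(n)$ is needed. By the remark after Assumption~\ref{ass:nondeg}, $\hat b$ is then at distance $\gtrsim n$ from every kink, so the children are again windows of the stated form.
\end{itemize}

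\emph{Case $V(\alpha,\beta)=\emptyset$.} Every kink in $(s,e]$ lies within $O(\bar r_n)$ of an edge. By \eqref{eq:levlaw} its signal strength is $o(1)$, so the contrast stays below $\lambda$ on $\mathcal{E}_n$ and the call returns $\emptyset$ (or $\mathcal{L}\cup\mathcal{R}$ with a failed re-test), matching $\mathrm{POP}$.

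\textbf{Step 3 (induction on $\mathcal{T}(g)$).} I proceed bottom-up on the finite tree guaranteed by Assumption~\ref{ass:term}. The inductive claim is that at each node the sample call returns, on $\mathcal{E}_n$, one estimate within $C\bar r_n$ of each $\tau_j$ with $j\in V(\alpha,\beta)$, and nothing else. The children match by induction, noting that the child windows built from $\hat b$ are of the required form. The re-test window $(s',e']$ has endpoints within $C\bar r_n$ of $(\alpha' n,\beta' n)$. It then contains either no live kink, in which case it is rejected by the Case $V=\emptyset$ argument, or exactly the single kink $c$ of Proposition~\ref{prop:popexact}. In the latter situation \cite{bcf16}'s single-kink localization on a window at distance $\gtrsim n$ from its edges gives $O(\bar r_n)$ accuracy. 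Because the depth of the tree is bounded, the constants accumulate only finitely, and the claim at the root gives the theorem.

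\textbf{Main obstacle.} The hardest part is the uniform $O(\bar r_n)$ localization at the second-stage windows whose edges are themselves random, within $\bar r_n$ of kinks. There I would first establish the single-kink bound uniformly over all windows with edges in $\bar r_n$-neighbourhoods of the prescribed points, via the union bound already built into $\mathcal{E}_n$.
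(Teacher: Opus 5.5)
Your architecture matches the paper's: a single noise event, coupling with $\mathrm{POP}$ by induction on the finite tree $\mathcal{T}(g)$, the signal-strength dichotomy for windows with no live kink, and re-test windows holding zero or one live kink. The $O(\bar r_n)$ localization, however, has a genuine gap, and it fails in two ways.

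\emph{The peak is quadratic, not a cusp.} Your premise that near a kink the profile behaves like $|u|$ is false for this contrast. The projected hinges depend smoothly on the knot, and $1-\rho(\theta,b)$ is bounded above and below by multiples of $\{(b-\theta)/x\}^2$ (Lemma~\ref{lem:sep}). So a single-kink peak is quadratic, like the non-degenerate peak at a non-kink split. Against the pointwise bound in your $\mathcal{E}_n$, every proposal, whether or not $c$ is a kink, is then localized only to $O\{n^{3/4}(\log n)^{1/4}\}$. Moreover, a node with $k\ge2$ live kinks whose population split is a kink is not a single-kink problem. The other live kinks contribute order $n^{1/2}$ to the contrast, and the noiseless discrete maximizer on the perturbed sample window is in general not $\tau_{j^*}$. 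No single-kink result applies there, and the single-kink $\bar r_n$ rate you cite is itself obtained only under extra conditions.

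\emph{The $o(n)$ shortcut breaks the induction.} Claiming that $|\hat b-cn|=o(n)$ suffices at non-kink splits does not preserve your induction hypothesis: the children are then not windows of your stated form. Because the maximizer depends Lipschitz-continuously on the endpoints, the error is in general inherited by descendant splits. At a descendant whose split is a kink, the sample split can land at distance of order $n^{3/4}(\log n)^{1/4}$ from that kink. That kink then sits this far from an edge of a child window, with signal strength of order $n^{1/8}(\log n)^{3/8}\gg\lambda$, and can be detected a second time. This is exactly the failure described in Remark~\ref{rem:rate}.

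\emph{The missing ingredient} is uniform control of normalized noise increments,
$|\langle\varepsilon,\hat\psi_b-\hat\psi_{b'}\rangle|\le\sigma(10\log n)^{1/2}\|\hat\psi_b-\hat\psi_{b'}\|$, simultaneously over all $(s,e,b,b')$ by a union bound over at most $n^4$ quadruples. It is used with $\|\hat\psi_b-\hat\psi_{b^*}\|\lesssim|b-b^*|/n$ whenever $b^*$ lies at distance of order $n$ from both edges. Let $u,u^*$ be the signed unit contrast vectors at $\hat b$ and $b^*$, and consider the basic inequality $\langle f,u^*\rangle-\langle f,u\rangle\le\langle\varepsilon,u-u^*\rangle$.
\begin{itemize}
\item The left side is at least $\varkappa'A(\Delta/n)^2-\varrho\|u-u^*\|$, where $A\asymp n^{1/2}$ and $\varrho=o(1)$ is the total strength of the near-edge kinks. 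This remainder must be bounded through $\|u-u^*\|$, not by $2\varrho$.
\item The right side is $O\{(\log n)^{1/2}\Delta/n\}$.
\end{itemize}
Cancelling one factor $\Delta/n$ gives $\Delta=O(\bar r_n)$ at every proposal, kink or not, and at every one-kink window.

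Two further details are needed. First, a preliminary crude bound together with positivity of $\langle\hat\psi_b,\hat\psi_{b'}\rangle$ for admissible $b,b'$. This makes the signs of $u,u^*$ agree, so that $u-u^*$ is a genuine difference and the increment bound applies. Second, at multi-kink nodes, a discretization error of order $n^{-1/2}$ at interior candidates, rather than your $o(n^{1/2})$, combined with your endpoint-Lipschitz argument. This places the noiseless discrete maximizer on the actual sample window within $O(\bar r_n)$ of $nc$, with a quadratic deficit around it. With every split accurate to $O(\bar r_n)$, the induction closes with tracking radii that grow with depth but stay finite, because $\mathcal{T}(g)$ is finite.
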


The proof, given in Section~S3 of the supplement, shows that the sample recursion follows the population recursion: on an event of probability tending to one, every window LABS visits has both endpoints within $O(\bar r_n)$ of $n$ times the endpoints of a node of $\mathcal{T}(g)$, and the call on it returns exactly the kinks live at that node, each localized to within $O(\bar r_n)$.

\begin{remark}[Rate]
\label{rem:rate}
{\em
The rate $(n\log n)^{1/2}$ agrees with the refined single-change-point rate of \cite{bcf16}, obtained in their Section~3.4 under additional regularity and a more restrictive threshold choice, rather than the $O_p\{n^{2/3}(\log n)^{1/3}\}$ of their general theorem. Thus LABS attains for $N$ change-points the rate that the contrast attains for one. A coarse comparison of the deterministic gap, meaning the loss in the noiseless contrast away from its population maximum, with a global noise bound gives only $n^{3/4}(\log n)^{1/4}$. At that distance a kink can remain strong enough for a child recursion to detect it again. The proof instead controls noise increments between nearby contrast vectors and obtains the sharper $(n\log n)^{1/2}$ rate. At this distance the kink's signal strength on the child window is $o(1)$, so the duplicate detection does not occur.
}
\end{remark}

\begin{remark}[Scope]
\label{rem:scope}
{\em
Theorem~\ref{thm:labs_consistency} covers Algorithm~\ref{alg:labs}, equivalently LABS-Grid with $M=R=2$. Section~S4 of the supplement proves the same conclusion for arbitrary fixed $M,R\geq2$, under conditions of the same form imposed on the $M$-grid proposal recursion. As explained there, the re-test contains at most one live kink, so no additional node-level condition is required for $R$. Section~S5 proves consistency when the model is chosen from a solution path by the strengthened Schwarz criterion, as in Section~\ref{sec:simulations}.
}
\end{remark}

\section{Numerical illustrations}
\label{sec:simulations}

\subsection{Simulation design}

We compare sSIC-selected LABS with five existing procedures on twelve
piecewise-linear scenarios.  The scenarios comprise two null signals and five
non-null signal families, each of the latter considered at two noise levels.
Table~\ref{tab:simulation_design} summarizes the design and
Figure~\ref{fig:signals} shows one realization of each.  The trapezoid is the
noisy analogue of the trap signal in Section~\ref{sec:motivation}.  The bump
contains three closely spaced changes, the teeth signal contains nineteen
regularly spaced changes, and the irregular signal combines unequal segment
lengths with heterogeneous slopes.  Each noise-free signal is constructed by
cumulatively summing its segment-wise slopes, so its change-points occur
exactly at the cumulative segment lengths in Table~\ref{tab:simulation_design}.

The two null scenarios differ in the presence of a global linear trend and in
the noise scale, which is $1$ in 1a and $200$ in 1b.  Every method considered
here uses a statistic that is invariant to the addition of a linear function of
the index and equivariant under rescaling, and the threshold is scaled by
$\hat\sigma$, so scenario 1b serves as a check on those two properties rather
than as an independent scenario; the differences between the two null columns
below are Monte Carlo variation.

For all methods except CPOP we use 500 Monte Carlo replications per scenario.
CPOP is substantially slower, so it is run for 50 replications per scenario.
It is applied to the first 50 realizations used by the other methods, making
those comparisons paired, but its estimates have appreciably larger Monte
Carlo uncertainty.

\begin{table}[h]
\centering
\small
\resizebox{\textwidth}{!}{%
\begin{tabular}{llrrl@{\hspace{3em}}l@{\hspace{3em}}l}
\hline
Scenario & Labels & $n$ & $N$ & $\sigma$ & Segment slopes & Segment lengths \\
\hline
Null, flat       & 1a    & 500  & 0  & 1       & $(0)$                         & $(500)$ \\
Null, linear     & 1b    & 500  & 0  & 200     & $(1)$                         & $(500)$ \\
Single kink      & 2a/2b & 400  & 1  & 50/100  & $(0,1)$                       & $(200,200)$ \\
Trapezoid        & 3a/3b & 600  & 2  & 50/100  & $(1,0,-1)$                    & $(200,200,200)$ \\
Short bump       & 4a/4b & 420  & 3  & 2/3     & $(0,1,-1,0)$                  & $(200,10,10,200)$ \\
Teeth            & 5a/5b & 2000 & 19 & 50/70   & $\mathrm{rep}\{(1,-1),10\}$   & $\mathrm{rep}\{100,20\}$ \\
Irregular slopes & 6a/6b & 1000 & 5  & 200/250 & $(2,-1,3,-2.5,0.5,-1.5)$     & $(100,300,100,200,150,150)$ \\
\hline
\end{tabular}%
}
\caption{Simulation scenarios.  The noise-free signal is
$f_t=\sum_{s=1}^t b_s$, where $b_s$ takes the listed segment slopes for the
listed numbers of observations.  The change-points are therefore the
cumulative segment lengths, excluding $n$.  The notation
$\mathrm{rep}\{x,k\}$ means that $x$ is repeated $k$ times.  The noise is
independent Gaussian with standard deviation $\sigma$.  Each non-null row
represents two scenarios, with the lower noise level labeled ``a'' and the
higher level labeled ``b''.}
\label{tab:simulation_design}
\end{table}

\begin{figure}[htbp]
\centering
\includegraphics[width=\textwidth,height=0.7\textheight,keepaspectratio]{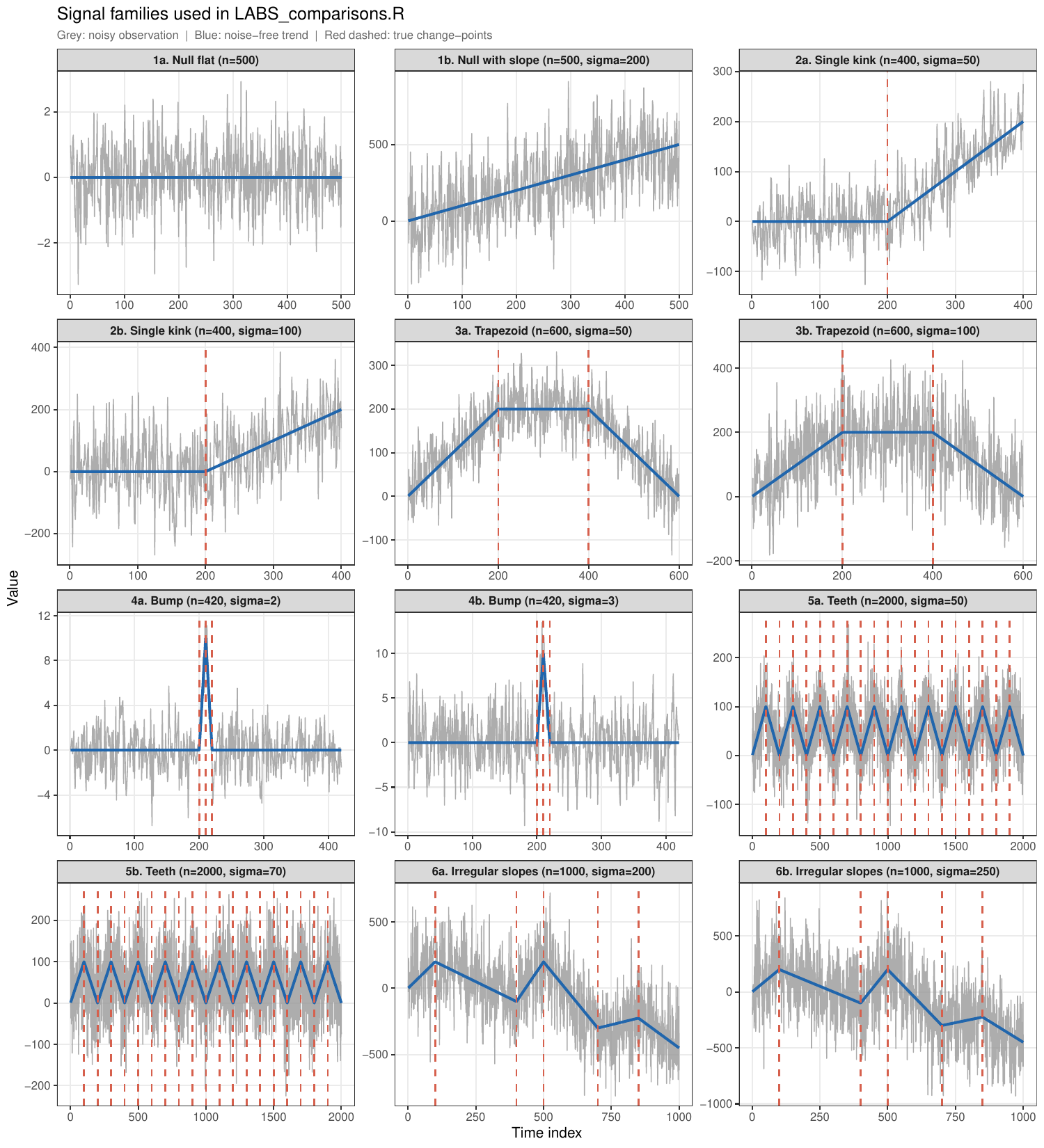}
\caption{One realization of each of the twelve scenarios of
Table~\ref{tab:simulation_design}.  Grey: observed data.  Blue: noise-free
signal.  Red dashed: true change-points.  Vertical scales differ between
panels.}
\label{fig:signals}
\end{figure}

\subsection{Methods and model selection}
\label{sec:mms}

For the piecewise-linear contrast \eqref{eq:lin_contrast}, define the robust
noise estimator
\[
    \hat{\sigma} =
    \mathrm{MAD}\left(\frac{\Delta^2 X}{\sqrt{6}}\right),
\]
where $\Delta^2 X_t=X_{t+1}-2X_t+X_{t-1}$ and the MAD includes its usual
Gaussian consistency factor.  Each SIC-LABS solution path is computed at
\[
    \lambda_a = a\hat{\sigma}\sqrt{2\log n},
    \qquad a=0.50,0.55,\ldots,1.50.
\]
For every distinct change-point set $\mathcal{T}$ on this path, we fit a
continuous piecewise-linear signal and minimize
\[
    \operatorname{sSIC}(\mathcal{T})
    =n\log\{\operatorname{RSS}(\mathcal{T})/n\}
     +(2|\mathcal{T}|+3)(\log n)^{1.01},
\]
where $\operatorname{RSS}(\mathcal{T})$ is the residual sum of squares from the
least-squares continuous piecewise-linear fit with change-point set
$\mathcal{T}$.  This is the strengthened Schwarz information criterion with
exponent $1.01$.  We use ``SIC'', ``sSIC'' and ``strengthened SIC'' for this
criterion below; the shorter SIC form is retained in method labels.  The same
continuous piecewise-linear convention is used for NOT and IDetect in the
\texttt{breakfast} R package \citep{breakfast}.  Section~S5 of the supplement shows that the selected model is
consistent, at the rate of Theorem~\ref{thm:labs_consistency}, provided the
grid of multipliers reaches into the range of thresholds admitted by that
theorem; the exponent being strictly greater than one is what separates the
correct model from models with superfluous change-points.  
Section~S5 also treats a variant in which the path is generated over the whole
range of thresholds, indexed by the number of change-points and truncated at a
cap.

We use sSIC selection because preliminary experiments, not reported here,
showed that fixed thresholds either lost power or did not control false
positives uniformly across the signal scales.  This also agrees with the usual
role of information criteria in change-point analysis: the threshold produces a
solution path and the criterion selects its model size.

We consider five LABS configurations:
\[
\begin{gathered}
\mathrm{LABS(SIC,}M=2),\quad
\mathrm{LABS(SIC,}M=3),\quad
\mathrm{LABS(SIC,}M=3,R=2),\\
\mathrm{LABS(SIC,}M=5),\quad
\mathrm{LABS(SIC,}M=5,R=2).
\end{gathered}
\]
Here $M$ is the proposal grid size and $R$ is the grid size used at the
look-ahead re-test; when $R$ is omitted, $R=M$.  Every configuration applies
the threshold in the proposal phase and again at the re-test.

The external competitors are IDetect \citep{af22} with SIC model selection,
NOT \citep{bcf16} with SIC model selection, TrendSegment \citep{mf23} in its
continuous and independent-noise setting, multiscale MOSUM \citep{koc24} with
BIC bandwidth merging, and CPOP \citep{mfl19} with penalty
$2\log n$.  For CPOP the noise standard deviation is estimated by the same
second-difference MAD used above.  MOSUM is run using the implementation of
\cite{koc24} with the tuning those authors recommend: $\eta=0.3$,
$\theta=0.8$, significance level $0.05$, and BIC bandwidth merging.  Its
bandwidths follow a Fibonacci sequence
$G_1,2G_1,3G_1,5G_1,\ldots$, truncated before $n/\log_{10}n$, where
$G_1=\max\{10,10\lceil n/1000\rceil\}$. The numerical work used R 4.5.2 \citep{Rsoftware}. The competitor package versions were 1.0.8 for \texttt{cpop} \citep{cpopSoftware}, 2.6 for \texttt{breakfast} \citep{breakfast} (IDetect and NOT), and 1.3.2 for \texttt{trendsegmentR} \citep{trendsegmentSoftware}. The MOSUM implementation was obtained from the authors' \texttt{MovingSumLin} repository \citep{movingSumLinSoftware}.

Two features of the MOSUM comparison should be borne in mind.  First, MOSUM
tests jointly for a jump and a slope change at each candidate location, and has
no variant that imposes continuity, so unlike the other methods considered here
it does not use the knowledge that the signal is continuous.  Second, its
theory requires the change-points to be separated by more than twice the
bandwidth.  For the bump signal, whose changes are ten observations apart, no
bandwidth in the recommended set satisfies this, so that scenario lies outside
the conditions under which the method is designed to operate; for the teeth and
irregular signals only the smallest bandwidths satisfy it.  We retain the full
recommended bandwidth set in all cases, as the authors do in their own study,
since the BIC merging step demotes poorly fitting bandwidths.

For each replication we record the absolute error in the estimated number of
change-points, the directed location errors in both directions, their maximum
(the Hausdorff distance), and computation time.  The principal
measure below is the probability of estimating the exact number of
change-points.  On a null signal its complement is the false-positive rate
(FPR).

\subsection{Results}

Table~\ref{tab:simulation_summary} gives null control and performance averaged
over the ten non-null scenarios.  All methods control the FPR below 0.05.  Each
SIC-LABS variant has FPR 0.008 on both null signals, and SIC-NOT has maximum
null FPR 0.008.  Under the specified bandwidth sequence, MOSUM also controls
the null, with FPR 0.002 on the flat signal and 0.006 on the linear signal.
CPOP has no false positive in its 50 flat-null replications and one in its 50
linear-null replications, giving FPR 0.020 on the latter.

\begin{table}[h]
\centering
\small
\resizebox{\textwidth}{!}{%
\begin{tabular}{lrrrrrrrr}
\hline
Method & Reps & Flat FPR & Linear FPR & Exact count & Count error & Hausdorff & Time (ms) & CPOP/time \\
\hline
CPOP                   &  50 & 0.000 & 0.020 & 0.844 & 0.202 &  46.10 & 1142.25 &   1.0 \\
LABS(SIC,$M=5,R=2$)   & 500 & 0.008 & 0.008 & 0.827 & 0.351 &  46.15 &   28.50 &  40.1 \\
LABS(SIC,$M=5$)       & 500 & 0.008 & 0.008 & 0.827 & 0.350 &  45.79 &   33.00 &  34.6 \\
LABS(SIC,$M=3$)       & 500 & 0.008 & 0.008 & 0.792 & 0.519 &  58.24 &   11.00 & 103.8 \\
LABS(SIC,$M=3,R=2$)   & 500 & 0.008 & 0.008 & 0.791 & 0.518 &  59.37 &   10.50 & 108.8 \\
NOT(SIC)                 & 500 & 0.008 & 0.002 & 0.726 & 0.921 &  57.19 &  108.50 &  10.5 \\
LABS(SIC,$M=2$)         & 500 & 0.008 & 0.008 & 0.704 & 1.386 & 104.35 &    4.50 & 253.8 \\
IDetect                  & 500 & 0.002 & 0.000 & 0.662 & 1.068 &  93.78 &   23.00 &  49.7 \\
TrendSegment             & 500 & 0.000 & 0.000 & 0.355 & 2.951 & 211.85 &  214.00 &   5.3 \\
MOSUM                    & 500 & 0.002 & 0.006 & 0.236 & 1.953 & 117.33 &    3.00 & 380.8 \\
\hline
\end{tabular}%
}
\caption{Null false-positive rates, averages over the ten non-null scenarios,
and computation times, ranked by mean exact-count probability. ``Reps'' is the
number of replications per scenario, ``Exact count'' is the mean probability of
estimating the correct number of change-points, ``Count error'' is the mean
absolute count error, and ``Hausdorff'' is the mean finite Hausdorff distance.
``Time'' is the median of the ten scenario-specific median wall-clock times, so
that each signal receives equal weight, and ``CPOP/time'' is the CPOP median
divided by the method median.  The common LABS solution-path preparation time
is apportioned equally among the five LABS variants.  Timings compare the
implementations used in the simulation and should not be read as
hardware-independent complexity measurements.  The CPOP estimates have greater
Monte Carlo uncertainty because CPOP uses only 50 replications per scenario,
compared with 500 for every other method.}
\label{tab:simulation_summary}
\end{table}

CPOP has the highest mean exact-count probability, 0.844, but this estimate is
based on only 50 replications per scenario.  Among the methods run for 500
replications, the $M=5,R=2$ SIC-LABS version is highest, while being around 40 times faster than CPOP (more on execution times below). Its mean exact-count
probability 0.827, followed almost exactly by $M=5$, also 0.827 after rounding.
The $M=3$ versions attain 0.791 to 0.792, followed by SIC-NOT (0.726), $M=2$
LABS (0.704), and IDetect (0.662).  The $R=2$ re-test therefore retains the
accuracy of the full $M=5$ re-test while reducing computation.  The gap between
$M=2$ and $M=5$ is substantial, and is largest on the teeth and bump signals,
where the noise is frequently large enough for both child recursions to return
empty sets; as noted in Section~\ref{sec:labs_algorithm}, the look-ahead does
not correct the proposal in that configuration, and the grid search is what
provides the localization instead.

CPOP requires 1.142 seconds for a typical replication, compared with 28.5
milliseconds for the $M=5,R=2$ LABS variant, a factor of about 40; the
unrestricted $M=5$ re-test is about 35 times faster than CPOP.  The difference
persists on the largest and most structurally complex signals: the maximum
scenario median is 7.016 seconds for CPOP against 0.134 seconds for the
$M=5,R=2$ LABS variant.  NOT and TrendSegment are also slower than the
recommended LABS version.  MOSUM is the fastest method and controls the null
false-positive rate, but its mean exact-count probability is substantially
lower.

Table~\ref{tab:simulation_exact_count} reports the exact-count probability
scenario by scenario.  The $M=5$ variants are particularly effective on the
dense teeth signals: their exact-count probabilities are 0.994 at
$\sigma=50$ and approximately 0.75 at $\sigma=70$.  The $M=5,R=2$ variant also
leads the methods run for 500 replications on both irregular-slope signals.
The unrestricted $M=5$ version is highest on the bump at $\sigma=3$, while
the smaller $M=2$ grid is preferable on the high-noise trapezoid, so no single
grid size dominates in every geometry.

\begin{table}[h]
\centering
\scriptsize
\resizebox{\textwidth}{!}{%
\begin{tabular}{lrrrrrrrrrr}
\hline
Method & 2a & 2b & 3a & 3b & 4a & 4b & 5a & 5b & 6a & 6b \\
\hline
LABS(SIC,$M=2$)       & .996 & .982 & .984 & .744 & .952 & .640 & .762 & .102 & .576 & .302 \\
LABS(SIC,$M=3$)       & .992 & .982 & .984 & .734 & .978 & .888 & .972 & .532 & .574 & .282 \\
LABS(SIC,$M=3,R=2$)   & .992 & .982 & .984 & .732 & .978 & .884 & .972 & .534 & .574 & .282 \\
LABS(SIC,$M=5$)       & .994 & .982 & .982 & .714 & .984 & .894 & .994 & .746 & .646 & .330 \\
LABS(SIC,$M=5,R=2$)   & .994 & .982 & .982 & .714 & .982 & .888 & .994 & .748 & .652 & .332 \\
IDetect                  & .996 & .956 & .984 & .696 & .964 & .514 & .916 & .212 & .284 & .096 \\
NOT(SIC)                 & .992 & .976 & .994 & .728 & .984 & .732 & .960 & .354 & .368 & .174 \\
TrendSegment             & .998 & .654 & .994 & .302 & .518 & .076 & .004 & .000 & .008 & .000 \\
MOSUM                    & .916 & .042 & .854 & .022 & .120 & .014 & .392 & .002 & .000 & .000 \\
CPOP                     & .980 & .940 & .960 & .780 & .980 & .980 & .980 & .900 & .640 & .300 \\
\hline
\end{tabular}%
}
\caption{Probability of estimating the exact number of change-points in each
non-null scenario.  Scenario labels are defined in
Table~\ref{tab:simulation_design}.  CPOP uses 50 replications per scenario; all
other methods use 500.}
\label{tab:simulation_exact_count}
\end{table}
\FloatBarrier

Overall, SIC model selection gives LABS good power without loss of null
control.  Among the LABS variants, the $M=5,R=2$ version has the highest
aggregate exact-count performance of the methods run for 500 replications, is
faster than $M=5,R=5$, and improves on SIC-NOT and IDetect in the dense- and
irregular-change settings.  CPOP performs well but is much slower, and its
50-replication estimates remain less precise than the 500-replication
comparisons for the other methods.

\section{Data application -- global temperature anomalies}
\label{sec:data}

We analyze the HadCRUT5 annual land--sea surface-temperature anomalies made
available by Our World in Data \citep{owidtemp}.  The series measure differences in degrees
Celsius from the 1861--1890 mean.  We consider the World, Northern Hemisphere
and Southern Hemisphere series separately.  Although the downloaded data run
from 1850 to 2026, the 2026 observation represents an incomplete year, so the
primary analysis uses the 176 annual observations from 1850 to 2025.  The grey
bands in the top row of Figure~\ref{fig:temperature_comparison} are the
pointwise 95\% uncertainty intervals supplied with the data; they are displayed
for context but are not used as observation-specific weights.

We use the best-performing LABS version among the methods run for 500
replications in Section~\ref{sec:simulations},
$\mathrm{LABS(SIC,}M=5,R=2)$, and all five external competitors from that
study: IDetect, NOT(SIC), TrendSegment, MOSUM and CPOP.  The implementations
and tuning are unchanged from the simulation study.  A reported change year
is the first year of the segment following the fitted slope change.  

\begin{table}[h]
\centering
\small
\setlength{\tabcolsep}{4pt}
\begin{tabular}{p{1.20in}p{1.20in}p{1.66in}p{1.66in}}
\hline
Method & World & Northern Hemisphere & Southern Hemisphere \\
\hline
LABS(SIC,\newline $M=5,R=2$) & 1913, 1942, 1972 & 1863, 1877, 1884, 1918, 1941, 1975 & 1881, 1910, 1942, 1965 \\
IDetect & 1914, 1945, 1977 & 1919, 1945, 1977 & 1912, 1967 \\
NOT(SIC) & 1910, 1943, 1972 & 1914, 1941, 1975 & 1880, 1912, 1945, 1965 \\
TrendSegment & 1974 & 1937, 1977 & 1912 \\
MOSUM & 1861, 1911, 1969 & 1909, 1946, 1974 & 1912, 1940, 1969 \\
CPOP & 1876, 1877, 1884, 1911, 1942, 1968, 2012 & 1862, 1876, 1878, 1879, 1912, 1941, 1974 & 1876, 1877, 1885, 1911, 1914, 1933, 1945, 1946, 1964, 2022 \\
\hline
\end{tabular}
\caption{First years following the slope changes detected in the three
temperature-anomaly series.  All methods use the configurations from the
simulation study.}
\label{tab:temperature_changes}
\end{table}

For the World series, LABS detects changes in 1913, 1942 and 1972
(right-hand column of Figure~\ref{fig:temperature_comparison}).  Its fitted slopes over 1850--1912,
1913--1941, 1942--1971 and 1972--2025 are, respectively, $-0.020$, $0.142$,
$-0.039$ and $0.205$ degrees Celsius per decade.  IDetect and NOT place each
of these three principal transitions within five years of LABS, while MOSUM
places the latter two in 1911 and 1969 and additionally reports an early
change in 1861.  TrendSegment returns only the onset of the most recent
warming trend, in 1974.  CPOP gives a much finer segmentation, including
adjacent changes in the late nineteenth century and an additional recent
change in 2012.

For the Northern Hemisphere, LABS detects six changes: 1863, 1877, 1884,
1918, 1941 and 1975 (left-hand column of
Figure~\ref{fig:temperature_comparison}).  The three short,
early segments have fitted slopes of $-0.247$, $0.206$ and $-0.258$ degrees
Celsius per decade; from 1884 to 1917 the fitted trend is nearly flat.  The
more persistent twentieth-century segments have slopes $0.220$ over
1918--1940, $-0.081$ over 1941--1974, and $0.287$ over 1975--2025.  The other
methods concentrate on the same broad transitions in the 1910s, 1940s and
1970s: IDetect, NOT and MOSUM each return one change in each period, and
TrendSegment returns changes in 1937 and 1977.  The additional early LABS
changes are supported in broad location by CPOP, although CPOP again resolves
them into several adjacent changes.

For the Southern Hemisphere, LABS detects changes in 1881, 1910, 1942 and
1965 (middle column of Figure~\ref{fig:temperature_comparison}).  Its corresponding segment slopes
are $0.027$, $-0.076$, $0.114$, $-0.010$ and $0.133$ degrees Celsius per
decade.  NOT gives a particularly close result, with changes in 1880, 1912,
1945 and 1965.  IDetect retains the 1912 and 1967 changes, MOSUM returns 1912,
1940 and 1969, and TrendSegment retains only 1912.  CPOP places changes near
these common locations but also returns several adjacent changes and one near
the end of the series, for a total of ten.

\begin{figure}[p]
\centering
\includegraphics[width=\textwidth,height=0.78\textheight,keepaspectratio]{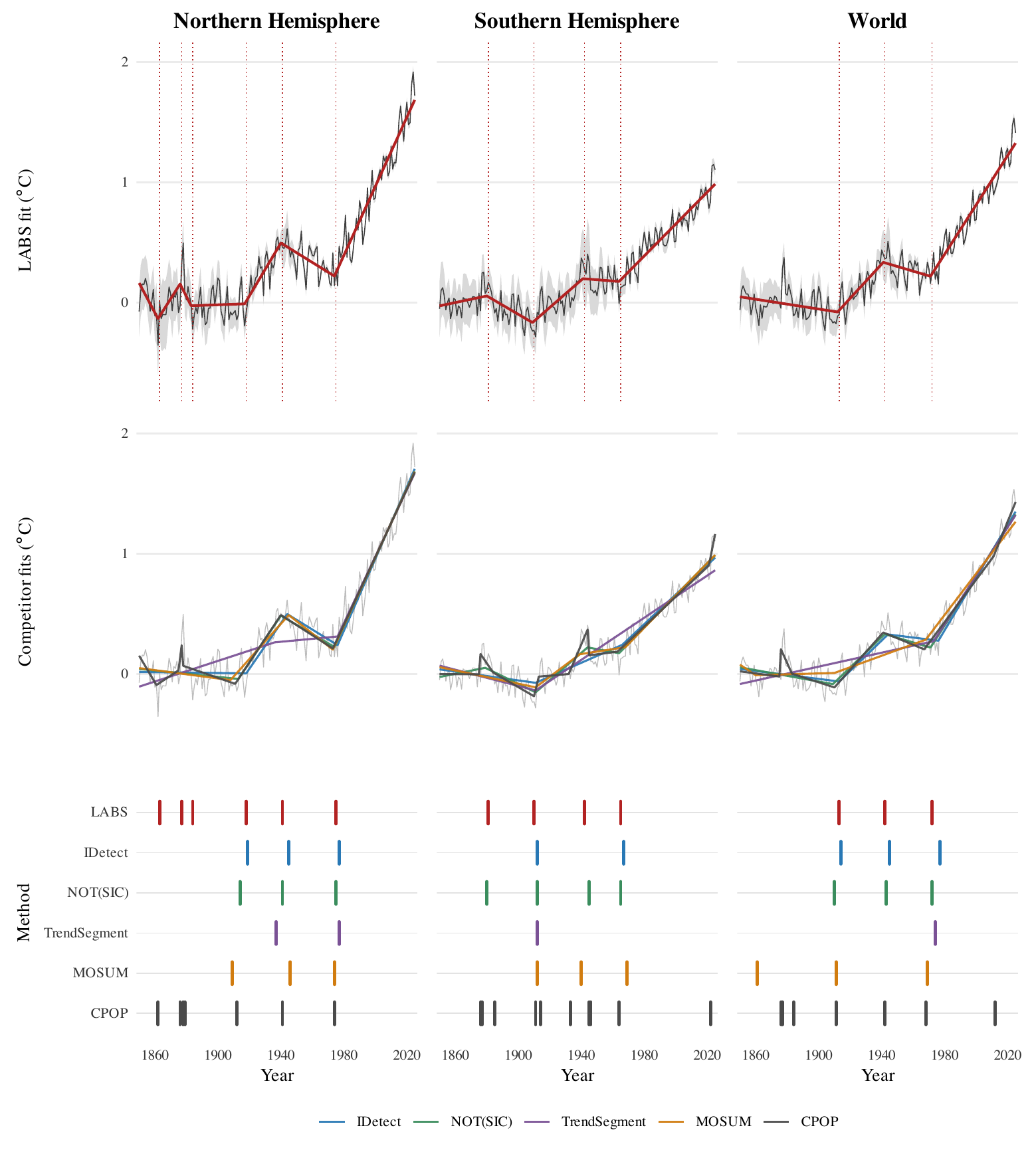}
\caption{Temperature anomalies and detected slope changes, 1850--2025.  The
columns show, from left to right, the Northern Hemisphere, Southern Hemisphere
and World series, all on common vertical and horizontal scales.  The top row
shows the annual anomaly, its supplied 95\% uncertainty interval, the LABS fit,
and dotted lines at the LABS changes.  The middle row superimposes the five
competitor fits; for comparability, each is the continuous piecewise-linear
least-squares refit at that method's estimated change-points.  The bottom row
shows the change-point locations for all six methods.  CPOP's automatic $0$
and $n-1$ boundaries are omitted.}
\label{fig:temperature_comparison}
\end{figure}

Taken together, the three analyses give a common chronology: early
twentieth-century warming, a mid-century flattening or reversal, and renewed
warming in the later twentieth century.  The principal differences are in
timing and magnitude.  LABS places the onset of the latest sustained warming
in 1965 in the Southern Hemisphere, seven years before the World change and
ten years before the Northern Hemisphere change.  The subsequent Northern
Hemisphere slope, $0.287$ degrees Celsius per decade, is more than twice the
Southern Hemisphere slope of $0.133$, with the World slope intermediate at
$0.205$.  The Northern series also has more short-lived nineteenth-century
features, on which the methods agree less strongly.  Across regions, LABS,
IDetect, NOT and MOSUM broadly agree on the major twentieth-century changes;
TrendSegment favors fewer changes, whereas CPOP systematically favors a
finer segmentation.  Including the incomplete 2026 value in a sensitivity
analysis leaves this principal twentieth-century comparison unchanged,
although it affects some fine early Northern Hemisphere changes and CPOP's
finer segmentation.

\section*{Declarations}
\textbf{Funding.} No funding was received. \textbf{Competing interests.} None. \textbf{Generative AI use.} OpenAI Codex (gpt-5.6-sol) and Anthropic's Opus 5 assisted with drafting and checking prose and mathematical proofs, R implementation, and numerical studies, for clarity and verification. The author reviewed, revised, and independently checked all AI output and takes full responsibility. \textbf{Data and code.} HadCRUT5 data are from Our World in Data \citep{owidtemp}; code, frozen data files, and simulation output are available at \url{https://github.com/pfryz/LABS}.

\clearpage
\setcounter{section}{0}
\setcounter{equation}{0}
\setcounter{algorithm}{0}
\renewcommand{\thesection}{S\arabic{section}}
\renewcommand{\theequation}{S\arabic{equation}}
\renewcommand{\thealgorithm}{S\arabic{algorithm}}
\renewcommand{\theHsection}{S.\arabic{section}}
\renewcommand{\theHequation}{S.\arabic{equation}}
\renewcommand{\theHalgorithm}{S.\arabic{algorithm}}

\title{Supplement to ``LABS: Extending the scope of binary segmentation via a look-ahead device''} 
\author{Piotr Fryzlewicz\thanks{Department of Statistics, London School of Economics and Political Science, London WC2A 2AE, UK. Correspondence: \texttt{p.fryzlewicz@lse.ac.uk}. ORCID: 0000-0002-9676-902X.}}
\date{}
\maketitle

\begin{abstract}
This document supplements the main paper. Section~\ref{sup:grid} states the
LABS-Grid algorithm in full. Section~\ref{sup:toolkit} establishes the
properties of the contrast used in the consistency proof.
Section~\ref{sup:proof} proves \proppop{} and \mainthm{}.
Section~\ref{sup:gridtheory} extends the consistency result to LABS-Grid.
Section~\ref{sup:ssic} treats selection from a solution path by a
strengthened Schwarz criterion. Section~\ref{sup:lookahead} describes related
uses of look-ahead in other algorithmic settings.
\vspace{5pt}

\noindent {\bf Keywords:} Binary segmentation, change-point detection, piecewise-linear signal, recursive algorithm, trend estimation.
\end{abstract}

Numbered items of the form ``Assumption 4.1'', ``Theorem 4.1'' and ``Algorithm 2'' refer to the main paper; items numbered with an S prefix refer to this document. Displayed equations (4) and (5) of the main paper are the contrast vector and the contrast function.
\section{The LABS-Grid algorithm}
\label{sup:grid}

Section 3.2 of the main paper describes LABS-Grid. It evaluates the proposal contrast on all sub-intervals formed by an $M$-point equally spaced grid, and the re-test contrast on those formed by an $R$-point grid. For completeness, the algorithm is stated in full below. Here $\textsc{GridSearch}(X,s,e,K)$ returns the pair $(\hat b,W)$, where $\hat b$ is the location giving the maximum contrast across all sub-intervals formed by the $K$-point grid and $W$ is that maximum. Setting $M=R=2$ recovers \algmain{} of the main paper, which is the algorithm analyzed in \mainthm{}.

\begin{algorithm}[H]
{\footnotesize
\caption{LABS with grid-based contrast evaluation (LABS-Grid)}
\label{alg:labs_grid}
\begin{algorithmic}[1]
\Function{LABS-Grid}{$X, s, e, \lambda, M, R$}
    \If{the $M$-grid search cannot be evaluated on $(s,e]$ (interval too short)}
        \State \Return $\emptyset$
    \EndIf
    \State $(\hat{b}, W) \gets \textsc{GridSearch}(X, s, e, M)$ \Comment{Proposal search on $(s, e]$}
    \If{$W > \lambda$} \Comment{Proposal phase}
        \State $\mathcal{L} \gets$ \Call{LABS-Grid}{$X, s, \hat{b}, \lambda, M, R$}
        \State $\mathcal{R} \gets$ \Call{LABS-Grid}{$X, \hat{b}, e, \lambda, M, R$}
        \State \Comment{Look-ahead refinement phase}
        \If{$\mathcal{L} \neq \emptyset$ \textbf{or} $\mathcal{R} \neq \emptyset$}
            \State $s' \gets \begin{cases} \max(\mathcal{L}) & \text{if } \mathcal{L} \neq \emptyset \\ s & \text{otherwise} \end{cases}$
            \State $e' \gets \begin{cases} \min(\mathcal{R}) & \text{if } \mathcal{R} \neq \emptyset \\ e & \text{otherwise} \end{cases}$
            \If{the $R$-grid search can be evaluated on $(s',e']$}
                \State $(\hat{b}', W') \gets \textsc{GridSearch}(X, s', e', R)$ \Comment{Grid re-test on $(s', e']$}
                \If{$W' > \lambda$}
                    \State \Return $\mathcal{L} \cup \{\hat{b}'\} \cup \mathcal{R}$ \Comment{Accept refined estimate}
                \EndIf
            \EndIf
            \State \Return $\mathcal{L} \cup \mathcal{R}$ \Comment{Re-test failed: discard parent estimate}
        \EndIf
        \State \Return $\mathcal{L} \cup \{\hat{b}\} \cup \mathcal{R}$ \Comment{Both children empty: keep parent estimate}
    \Else
        \State \Return $\emptyset$
    \EndIf
\EndFunction
\end{algorithmic}
}
\end{algorithm}

The proposal and re-test searches cost $O\{M^2(e-s)\}$ and $O\{R^2(e'-s')\}$, respectively. Under balanced recursive splits the typical total cost is $O\{(M^2+R^2)n\log n\}$, and the worst-case cost is $O\{(M^2+R^2)n^2\}$. Ordinary binary segmentation also has worst-case cost $O(n^2)$ when successive splits are maximally unbalanced. In the notation of the main paper, an omitted $R$ means $R=M$.

\section{Auxiliary properties of the contrast}
\label{sup:toolkit}

This section collects the four properties of the piecewise-linear GLR contrast (5) of the main paper that are used in the proof. Throughout, $(s,e]$ is a window, $W = \{s+1,\ldots,e\}$,
$m = e-s$, and $B_{s,e} = \{s+2,\ldots,e-1\}$; $\Pi^\perp_W$ is orthogonal
projection in $\mathbb{R}^W$ onto the orthogonal complement of
$\mathrm{span}\{\mathbf 1, t\}$; and
\[
\psi_\theta := \Pi^\perp_W\big[(t-\theta)_+\big]_{t \in W},
\qquad
\hat\psi_\theta := \psi_\theta / \|\psi_\theta\| .
\]
The contrast vector $\phi^b_{(s,e]}$ of the main paper equals $-\hat\psi_b$, extended by zero off $W$; with the opposite hinge convention it would equal $+\hat\psi_b$. Either way it is the unit vector in $\mathbb{R}^W$ orthogonal to $\mathbf 1$ and $t$ that is continuous and piecewise linear with its only knot at $b$, which is the defining property of the closed forms (4) and (5) there, and the sign is immaterial because the contrast takes an absolute value: $\mathcal{C}_{s,e}(b) = |\langle X, \hat\psi_b\rangle|$.

For a kink $\tau_j$, define its distance from the nearer end of $W$ by
\[
x_j(s,e):=\min(\tau_j-s,e+1-\tau_j)
\quad\text{if }\tau_j\in B_{s,e},
\]
and set $x_j(s,e):=0$ otherwise. Its signal strength on $(s,e]$ is
$\ell_j(s,e):=|\Delta_j|\,\|\psi_{\tau_j}\|$ when
$\tau_j\in B_{s,e}$, and zero otherwise. As in Section~4.1 of the main paper,
$\underline d:=\min_j|d_j|$ and $\bar d:=\max_j|d_j|$ when $N\ge1$.

\subsection{Which kinks the window sees}

\begin{lemma}[Representation]
\label{lem:rep}
Let $V(s,e) := \{ j : s+2 \le \tau_j \le e-1 \}$. Then:
\begin{enumerate}
\item[\emph{(a)}] there are $a,c \in \mathbb{R}$ with
$f_t = a + ct + \sum_{j \in V(s,e)} \Delta_j (t-\tau_j)_+$ for $t \in W$, and
hence
$\langle f, \hat\psi_b\rangle = \sum_{j \in V(s,e)} \Delta_j \langle
\psi_{\tau_j}, \hat\psi_b\rangle$ for every $b \in B_{s,e}$;
\item[\emph{(b)}] $\max_{b} |\langle f, \hat\psi_b\rangle| \le \sum_j
\ell_j(s,e)$, and for any $U$ and any $b, b' \in B_{s,e}$,
\[
\Big| \sum_{j \notin U} \Delta_j \langle \psi_{\tau_j},
\hat\psi_b - \hat\psi_{b'} \rangle \Big|
\;\le\; \Big( \sum_{j \notin U} \ell_j(s,e) \Big)\,
\|\hat\psi_b - \hat\psi_{b'}\| ;
\]
\item[\emph{(c)}] if $V(s,e) = \emptyset$ then $\langle f,\hat\psi_b\rangle = 0$
for every $b$; and if $V(s,e) = \{j\}$ then
$\max_b |\langle f,\hat\psi_b\rangle| = \ell_j(s,e)$, attained at $b = \tau_j$
and there only.
\end{enumerate}
\end{lemma}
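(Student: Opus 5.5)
For (a), I would write $f$ globally as a continuous piecewise-linear function. Continuity \eqref{eq:continuity} gives, for all $t$,
\[
f_t=\theta_{1,1}+\theta_{1,2}t+\sum_{j=1}^N\Delta_j(t-\tau_j)_+ .
\]
Restricted to $W=\{s+1,\ldots,e\}$, each hinge is then handled by the position of $\tau_j$:
\begin{itemize}
\item if $\tau_j\le s+1$, the hinge $(t-\tau_j)_+=t-\tau_j$ is affine on $W$;
\item if $\tau_j\ge e$, the hinge vanishes identically on $W$;
\item in either case the term can be absorbed into $a+ct$, leaving exactly the indices $s+2\le\tau_j\le e-1$, i.e.\ $V(s,e)$.
\end{itemize}
The boundary cases $\tau_j=s+1$ and $\tau_j=e$ are precisely where the hinge is affine or zero on $W$, and this is what matches the candidate set $B_{s,e}$. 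Since $\hat\psi_b\in\mathbb{R}^W$ is orthogonal to $\mathbf 1$ and $t$, we have $\langle f,\hat\psi_b\rangle=\sum_{j\in V}\Delta_j\langle(t-\tau_j)_+,\hat\psi_b\rangle$. Because $\Pi^\perp_W$ is self-adjoint and fixes $\hat\psi_b$, each inner product equals $\langle\psi_{\tau_j},\hat\psi_b\rangle$.

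For (b), I would apply Cauchy--Schwarz termwise. By the definition of signal strength, $|\Delta_j\langle\psi_{\tau_j},v\rangle|\le|\Delta_j|\,\|\psi_{\tau_j}\|\,\|v\|=\ell_j(s,e)\|v\|$ for $j\in V$. Taking $v=\hat\psi_b$, which has unit norm, and summing gives the first bound. Taking $v=\hat\psi_b-\hat\psi_{b'}$ and summing over $j\in V\setminus U$ gives the displayed bound. The sum is read as running over $j\in V(s,e)\setminus U$, and terms with $j\notin V$ contribute zero on both sides, since $\ell_j=0$ for $\tau_j\notin B_{s,e}$.

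For (c), the first claim is immediate from (a), since the sum is empty. If $V=\{j\}$, then $|\langle f,\hat\psi_b\rangle|=|\Delta_j|\,\|\psi_{\tau_j}\|\,|\langle\hat\psi_{\tau_j},\hat\psi_b\rangle|\le\ell_j$. Equality holds at $b=\tau_j$, and elsewhere only if $\hat\psi_b=\pm\hat\psi_{\tau_j}$.

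The main obstacle is the uniqueness claim in (c): showing that $\hat\psi_b\ne\pm\hat\psi_{b'}$ for $b\ne b'$ in $B_{s,e}$. Equivalently, $\psi_b$ and $\psi_{b'}$ must be linearly independent, meaning the vectors $\mathbf 1$, $t$, $(t-b)_+$, $(t-b')_+$ on $W$ are linearly independent. I would argue via second differences. Suppose a combination $\alpha+\beta t+\gamma(t-b)_++\delta(t-b')_+$ vanishes on $W$. Its second difference at an interior point $u$ of $W$ is $\gamma\mathbf 1\{u=b\}+\delta\mathbf 1\{u=b'\}$.
\begin{itemize}
\item Since $b,b'\in\{s+2,\ldots,e-1\}$, both are interior points of $W$ (that is, $s+1<b,b'<e$), so $\gamma=\delta=0$.
\item Then $\alpha+\beta t\equiv0$ on at least two points, so $\alpha=\beta=0$.
\end{itemize}
The same argument with $\gamma$ alone shows $\psi_b\neq0$, so $\hat\psi_b$ is well defined. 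The strict Cauchy--Schwarz inequality then yields uniqueness of the maximizer.
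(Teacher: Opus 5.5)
Your proposal is correct and follows essentially the paper's proof. Part (b) is Cauchy--Schwarz term by term. Part (c) is strict Cauchy--Schwarz, using that distinct projected hinges are not parallel because their second differences sit at different interior knots; you spell this out more carefully than the paper does. The only difference is cosmetic and lies in (a). You start from the global hinge expansion $f_t=\theta_{1,1}+\theta_{1,2}t+\sum_j\Delta_j(t-\tau_j)_+$ and discard the hinges that are affine or zero on $W$, whereas the paper matches the two sides at $t=s+1$ and compares first differences on $W$; the two arguments are equivalent.
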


\begin{proof}
Write $\nabla f_t = f_t - f_{t-1}$, which is constant on each linear segment
and changes value between $t = \tau_i$ and $t = \tau_i+1$. Such a change is
visible within $\{\nabla f_t : t \in \{s+2,\ldots,e\}\}$ precisely when
$s+2 \le \tau_i \le e-1$, that is, precisely when $i \in V(s,e)$. The two sides
of the identity in (a) agree at $t = s+1$ and have the same increments
throughout $W$; since $\hat\psi_b \perp \{\mathbf 1, t\}$ the affine part,
meaning the function $a+ct$, is annihilated, which gives the displayed formula.
Part (b) is Cauchy--Schwarz
term by term. In (c), the first claim is immediate; for the second,
Cauchy--Schwarz gives $|\langle \psi_{\tau_j},\hat\psi_b\rangle| \le
\|\psi_{\tau_j}\|$ with equality if and only if $\psi_b \parallel
\psi_{\tau_j}$, where $\parallel$ means that the two non-zero vectors are
scalar multiples of one another. Distinct projected hinges $\psi_b$ and
$\psi_{\tau_j}$ are not scalar multiples, because their discrete second
differences have knots at different locations. Equality therefore occurs only
at $b=\tau_j$.
\end{proof}

The two sets have different types of elements, but their relevant locations
coincide:
\[
\{\tau_j:j\in V(s,e)\}=\{\tau_j:\tau_j\in B_{s,e}\}.
\]
Here an admissible candidate location is an integer in
$B_{s,e}=\{s+2,\ldots,e-1\}$, the range over which the contrast is maximized.
Thus a kink contributes a hinge term to the contrast representation precisely
when its location belongs to the candidate range. Part (c) also shows that the
noiseless contrast vanishes identically when this set is empty.

\subsection{Signal strength}

\begin{lemma}[Norm and signal strength law]
\label{lem:lev}
Let $k = \tau - s$ and $k' = e - \tau$, so $k+k' = m$, and put $a = k$, $b = k'+1$ and $x = \min(a,b)$, the distance from $\tau$ to the nearer end of $W$; the asymmetry is that of the hinge. Then \begin{equation} \label{eq:normexact} \|\psi_\tau\|^2 \;=\; \frac{a\,b\,(a-1)(b-1)\,(2ab-a-b+2)}{6\,(a+b-2)(a+b-1)(a+b)} , \end{equation} and consequently, for $x \ge 2$, \begin{equation} \label{eq:sup_levlaw} c_\ell\, x^{3/2} \;\le\; \|\psi_\tau\| \;\le\; C_\ell\, x^{3/2}, \qquad c_\ell \frac{|d_j| x_j^{3/2}}{n} \;\le\; \ell_j(s,e) \;\le\; C_\ell \frac{|d_j| x_j^{3/2}}{n} , \end{equation} with the absolute constants $c_\ell = 192^{-1/2}$ and $C_\ell = 3^{-1/2}$.
In the rescaled continuum limit $\|\Psi_v\|^2 = v^3(1-v)^3/3$ on $[0,1]$, so
$\|\Psi_v\| \asymp \{\min(v,1-v)\}^{3/2}$.
\end{lemma}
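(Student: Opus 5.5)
The plan is to compute $\|\psi_\tau\|^2$ exactly as a residual sum of squares, and then bound the resulting rational function above and below by multiples of $x^3$.

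\emph{Exact norm.} Since $\psi_\tau$ is the projection of the hinge $h_t=(t-\tau)_+$ onto the complement of $\mathrm{span}\{\mathbf 1,t\}$ in $\mathbb{R}^W$, $\|\psi_\tau\|^2$ is the residual sum of squares from the least-squares regression of $h$ on $\mathbf 1$ and $t$ over $W$. Shift coordinates so that $W=\{1,\ldots,m\}$ and $\tau\mapsto k$, with $m=a+b-1$. Then
\[
\|\psi_\tau\|^2=\sum h_t^2-\frac{(\sum h_t)^2}{m}-\frac{\{\sum (t-\bar t)h_t\}^2}{\sum (t-\bar t)^2}.
\]
Every term is an explicit power sum over the $k'=b-1$ points $t>k$. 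Writing $u=t-k\in\{1,\ldots,b-1\}$, we have
\[
\sum h=\tfrac{(b-1)b}{2},\qquad \sum h^2=\tfrac{(b-1)b(2b-1)}{6},\qquad \sum t h=\sum (u+k)u,
\]
and $\sum (t-\bar t)^2=m(m^2-1)/12$. Substituting and factoring should yield \eqref{eq:normexact}. The denominator $(a+b-2)(a+b-1)(a+b)=(m-1)m(m+1)$ matches $m(m^2-1)$, and the factors $(a-1)$ and $(b-1)$ reflect that the norm vanishes when $\tau$ sits at either identifiability boundary. A clean route to the factorisation is to check symmetry in $a\leftrightarrow b$ (reflection $t\mapsto m+1-t$ maps the hinge to a reversed hinge differing by an affine function, which exchanges $a$ and $b$). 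Then verify the polynomial identity at enough points, or by direct expansion. I expect this algebraic simplification to be the only tedious step, and would check it symbolically.

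\emph{Bounds.} Assume without loss of generality that $x=a\le b$ and $x\ge2$. For the upper bound, use $a-1\le a$, $b-1\le b$, $2ab-a-b+2\le 2ab$ and $(a+b-2)(a+b-1)(a+b)\ge c\,b^3$. This gives a bound of the form $\|\psi\|^2\lesssim a^3b^3/b^3=a^3$. More carefully, since $a\le b$ gives $a+b\ge 2a$, tracking constants should yield $\|\psi\|^2\le a^3/3$, hence $C_\ell=3^{-1/2}$; this is consistent with the continuum value $v^3(1-v)^3/3\le v^3/3$. For the lower bound, use $a-1\ge a/2$ and $b-1\ge b/2$ (valid since $a,b\ge2$), $2ab-a-b+2\ge ab$ (since $ab\ge a+b-2$ when $a,b\ge2$), and $(a+b)^3\ge$ the denominator with $a+b\le 2b$. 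This gives
\[
\|\psi\|^2\ge \frac{a^3b^3/4}{6\cdot 8b^3}=\frac{a^3}{192},
\]
which is exactly $c_\ell=192^{-1/2}$. Multiplying by $|\Delta_j|=|d_j|/n$ gives the bounds on $\ell_j$.

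\emph{Continuum limit.} Rescale $t=nu$ with $v=\tau/m$ and let $m\to\infty$ in \eqref{eq:normexact} with $a\sim vm$ and $b\sim(1-v)m$. The leading order is
\[
\frac{a^2b^2\cdot 2ab}{6m^3}=\frac{a^3b^3}{3m^3},
\]
so $\|\Psi_v\|^2=v^3(1-v)^3/3$ under the $m^{-3}$ normalisation. Finally, $(1-v)^3\in[1/8,1]$ on the half where $v\le 1/2$, which gives $\|\Psi_v\|\asymp\min(v,1-v)^{3/2}$.
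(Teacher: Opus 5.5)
This is essentially the paper's argument: the exact norm as the residual of regressing the hinge on $\mathbf 1$ and $t$ via power sums, and the same elementary inequalities $(a-1)(b-1)\ge ab/4$, $ab\le 2ab-a-b+2\le 2ab$ and $b^3\le(a+b-2)(a+b-1)(a+b)\le 8b^3$ (for $2\le a\le b$), giving $c_\ell=192^{-1/2}$ and $C_\ell=3^{-1/2}$. The one difference is that you obtain the continuum norm as the leading-order limit of the discrete formula rather than by the direct $L^2$ computation in the paper's closed-form lemma. One small correction: the upper constant $1/3$ comes from $a+b-2\ge b$, that is from $a\ge 2$, not from $a+b\ge 2a$, which does not bound the denominator below by a multiple of $b^3$.
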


\begin{proof}
In local coordinates $\{1,\ldots,m\}$ put $S_0 = m$, $S_1 = \sum t$,
$S_2 = \sum t^2$, and let $g_0,g_1,g_2$ be the inner products of the hinge
$(t-k)_+$ with $\mathbf 1$, $t$ and itself. Then
\[
\|\psi_\tau\|^2 = g_2 - \frac{S_2 g_0^2 - 2S_1 g_0 g_1 + S_0 g_1^2}
{S_0S_2 - S_1^2}, \qquad S_0S_2 - S_1^2 = \frac{m^2(m^2-1)}{12},
\]
with $g_0 = k'(k'+1)/2$, $g_2 = k'(k'+1)(2k'+1)/6$ and $g_1 = kk'(k'+1)/2 + g_2$. Substituting and simplifying gives \eqref{eq:normexact}.

For \eqref{eq:sup_levlaw}, write $y = \max(a,b)$ and note that $a, b \ge 2$ when $x \ge 2$, so that
\[
(a-1)(b-1) \ge \tfrac14 ab, \qquad
ab \;\le\; 2ab-a-b+2 \;\le\; 2ab,
\]
\[
y^3 \;\le\; (a+b-2)(a+b-1)(a+b) \;\le\; 8y^3 .
\]
The first holds because $u-1 \ge u/2$ for $u \ge 2$; the second because $ab-a-b+2 = (a-1)(b-1)+1 > 0$ and $a+b \ge 2$; the third because $a+b-2 \ge y$, again by $x \ge 2$, and $a+b \le 2y$. Inserting these three in \eqref{eq:normexact} and using $a^3b^3 = x^3y^3$,
\[
\frac{x^3}{192} \;=\; \frac{ab \cdot \tfrac14 ab \cdot ab}{48\,y^3}
\;\le\; \|\psi_\tau\|^2 \;\le\;
\frac{ab \cdot ab \cdot 2ab}{6\,y^3} \;=\; \frac{x^3}{3} .
\]
These constants are not sharp. The signal strength bounds follow from $|\Delta_j| = |d_j|/n$.
\end{proof}

Write $\bar r_n := (n\log n)^{1/2}$.

\begin{cor}[The two signal-strength regimes]
\label{cor:gap}
Suppose $N\ge1$. Fix $C_\star>0$ and $\kappa>0$, and recall that
$\bar r_n=(n\log n)^{1/2}$. There is $n_0$ such that for $n\ge n_0$, every
window $(s,e]$ and every kink $j$ satisfy:
\begin{enumerate}
\item[\emph{(a)}] if $x_j(s,e) \le C_\star \bar r_n$, then
\[
\ell_j(s,e) \le C_\ell \bar d\, C_\star^{3/2}
n^{-1/4}(\log n)^{3/4}=o(1);
\]
\item[\emph{(b)}] if $x_j(s,e) \ge \kappa n$, then
$\ell_j(s,e) \ge c_\ell \underline d\,\kappa^{3/2}n^{1/2}$.
\end{enumerate}
\end{cor}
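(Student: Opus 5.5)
This is a direct consequence of Lemma~\ref{lem:lev}, so the plan is to substitute the two regimes for $x_j(s,e)$ into the signal-strength bounds \eqref{eq:sup_levlaw}, using $|d_j|\le\bar d$ for the upper bound and $|d_j|\ge\underline d$ for the lower bound. The only bookkeeping concerns the side condition $x\ge2$ under which \eqref{eq:sup_levlaw} was proved, together with the convention that $\ell_j(s,e)=0$ whenever $\tau_j\notin B_{s,e}$.

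For part (a), I split into cases according to the value of $x_j(s,e)$.
\begin{itemize}
\item If $\tau_j\notin B_{s,e}$, then $\ell_j(s,e)=0$ by definition and there is nothing to prove.
\item If $\tau_j\in B_{s,e}$, then $\tau_j-s\ge2$ and $e+1-\tau_j\ge2$, so $x_j\ge2$ automatically. Lemma~\ref{lem:lev} then gives
\[
\ell_j(s,e)\le C_\ell\bar d\,x_j^{3/2}/n\le C_\ell\bar d\,C_\star^{3/2}\bar r_n^{3/2}/n .
\]
Since $\bar r_n^{3/2}/n=(n\log n)^{3/4}/n=n^{-1/4}(\log n)^{3/4}$, this is the stated bound, which tends to zero.
\end{itemize}
No restriction on $n$ is needed in this part.

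For part (b), the hypothesis $x_j(s,e)\ge\kappa n>0$ forces $\tau_j\in B_{s,e}$, because otherwise $x_j$ would be $0$ by convention. Taking $n_0$ large enough that $\kappa n_0\ge2$ ensures $x_j\ge2$, so the lower half of \eqref{eq:sup_levlaw} applies. It gives
\[
\ell_j(s,e)\ge c_\ell\underline d\,(\kappa n)^{3/2}/n=c_\ell\underline d\,\kappa^{3/2}n^{1/2}.
\]
The constants are uniform over all windows and all kinks, since $c_\ell$ and $C_\ell$ are absolute and $\underline d$, $\bar d$ depend only on $g$. Hence a single $n_0$, depending only on $\kappa$, suffices.

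I see no real obstacle here. The only point requiring care is matching the hypothesis $x\ge2$ of Lemma~\ref{lem:lev} with the definition of $x_j$: the asymmetric edge $e+1-\tau_j$ in that definition corresponds exactly to $b=k'+1$ in the lemma, so $\tau_j\in B_{s,e}$ is equivalent to $a,b\ge2$.
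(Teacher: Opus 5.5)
Your proposal is correct and follows the paper's proof: substitute the two regimes for $x_j$ into the bounds \eqref{eq:sup_levlaw} of Lemma~\ref{lem:lev}, using $|d_j|\le\bar d$ for (a) and $|d_j|\ge\underline d$ for (b). Your check of the side condition $x\ge2$, which the paper leaves implicit, is a useful addition. In (b), however, the choice $\kappa n_0\ge2$ is redundant: $x_j\ge\kappa n>0$ already forces $\tau_j\in B_{s,e}$ and hence $x_j\ge2$, exactly as you argued in (a).
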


\begin{proof}
Both conclusions follow directly from the bounds for $\ell_j(s,e)$ in
\eqref{eq:sup_levlaw}. For (a), substitute
$x_j\le C_\star(n\log n)^{1/2}$ and $|d_j|\le\bar d$ in the upper bound. For
(b), substitute $x_j\ge\kappa n$ and $|d_j|\ge\underline d$ in the lower bound.
\end{proof}

\subsection{Noise}

\begin{lemma}[Uniform noise bounds]
\label{lem:noise}
Let
\[
E_n := \Big\{ \max_{s,e,b} |\langle \varepsilon, \hat\psi_b\rangle| \le
\sigma(8\log n)^{1/2} \Big\} \cap
\Big\{ \max_{s,e,b \ne b'} \frac{|\langle \varepsilon, \hat\psi_b -
\hat\psi_{b'}\rangle|}{\|\hat\psi_b - \hat\psi_{b'}\|} \le
\sigma(10\log n)^{1/2} \Big\},
\]
the maxima being over all $0 \le s < e \le n$ and $b,b' \in B_{s,e}$. Then
$\mathbb{P}(E_n) \to 1$. Write $Z_n := \sigma(8\log n)^{1/2}$.
\end{lemma}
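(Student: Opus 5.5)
The plan is a union bound over all index triples combined with a sub-Gaussian tail bound for fixed unit vectors. First, for every fixed window $(s,e]$ and $b\in B_{s,e}$, the vector $\hat\psi_b$ is deterministic with unit norm. Since the $\varepsilon_t$ are independent, mean-zero and $\sigma$-sub-Gaussian, the linear combination $\langle\varepsilon,\hat\psi_b\rangle$ is $\sigma$-sub-Gaussian, because variance proxies add with squared weights summing to one. Hence
\[
\mathbb{P}\big(|\langle\varepsilon,\hat\psi_b\rangle|>\sigma u\big)\le 2e^{-u^2/2}.
\]
With $u=(8\log n)^{1/2}$ this bound is $2n^{-4}$. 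The number of triples $(s,e,b)$ is at most $n^3$, so the union bound gives a failure probability of at most $2n^{-1}\to0$ for the first event.

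For the second event I would fix a quadruple $(s,e,b,b')$ with $b\neq b'$. The normalized difference $v=(\hat\psi_b-\hat\psi_{b'})/\|\hat\psi_b-\hat\psi_{b'}\|$ is well defined: by the argument in Lemma~\ref{lem:rep}(c), distinct projected hinges are not parallel, so $\hat\psi_b\neq\hat\psi_{b'}$. The vector $v$ is deterministic with unit norm, so the same tail bound applies to $\langle\varepsilon,v\rangle$. With $u=(10\log n)^{1/2}$ each probability is at most $2n^{-5}$. There are at most $n^4$ quadruples $(s,e,b,b')$, so the union bound gives at most $2n^{-1}\to0$.

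Combining the two bounds gives $\mathbb{P}(E_n^c)\le 4n^{-1}\to0$, which proves the lemma.

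The only point needing care is the counting and the resulting constants. The factor 8 is more than enough against the $n^3$ triples, and the factor 10 is exactly what is needed to beat the $n^4$ quadruples with margin $n^{-1}$; this explains the choice of constants in the statement. I would also note that the sub-Gaussian tail uses only independence and no Gaussianity, and that the sign convention of $\phi$ is irrelevant because only absolute values appear.
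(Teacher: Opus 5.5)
Your proof is correct and follows the paper's argument essentially verbatim. Independence makes each unit-vector projection $\sigma$-sub-Gaussian, and union bounds over at most $n^3$ triples and $n^4$ quadruples give failure probabilities $2/n$ each. Your check that $\hat\psi_b\neq\hat\psi_{b'}$ for $b\neq b'$, so that the normalized difference is well defined, is a small point the paper leaves implicit.
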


\begin{proof}
The assumption that each error is $\sigma$-sub-Gaussian means that
$\mathbb E\exp(t\varepsilon_i)\le\exp(\sigma^2t^2/2)$ for every real $t$.
Independence then implies that a linear combination
$\langle\varepsilon,v\rangle$ is
$\sigma\|v\|$-sub-Gaussian. In particular, each $\hat\psi_b$ is a unit vector,
so $\langle\varepsilon,\hat\psi_b\rangle$ is $\sigma$-sub-Gaussian. There are
at most $n^3$ triples, so the first event
fails with probability at most $2n^3 e^{-4\log n} = 2/n$. Likewise
$\langle\varepsilon,\hat\psi_b - \hat\psi_{b'}\rangle$ is
$\sigma\|\hat\psi_b-\hat\psi_{b'}\|$-sub-Gaussian and there are at most $n^4$
quadruples, so the second fails with probability at most
$2n^4e^{-5\log n} = 2/n$.
\end{proof}

The constant 8 in the first bound is not sharp. If it is replaced by a fixed
constant $C>6$, the sub-Gaussian tail bound and the union bound give failure
probability at most $2n^3n^{-C/2}=2n^{3-C/2}$, which tends to zero. The
constant comes from this union bound over all triples $(s,e,b)$, not from the
finite-sample threshold calibration used for the Narrowest-Over-Threshold
(NOT) method of \citet{bcf16} or in the simulation study.

The second bound yields the rate $\bar r_n$ in place of the weaker
$n^{3/4}(\log n)^{1/4}$ discussed in Remark~4.1 of the main paper. The relevant
index set consists of the quadruples $(s,e,b,b')$ with
$b,b'\in B_{s,e}$, and has cardinality at most $n^4$. A union bound over this
finite set therefore suffices. No chaining bound for the supremum of a
stochastic process is needed; see, for example, Section~2.2 of
\citet{vdvw96} for that concept.

\subsection{Separation}

\begin{lemma}[Two-sided separation]
\label{lem:sep}
Write $\rho(\theta,b) = |\langle \hat\psi_\theta, \hat\psi_b\rangle|$. There
are absolute constants $c_0 > 0$ and $C_0 < \infty$ such that for every window,
every $\theta$ with $x := \min(\theta-s, e+1-\theta) \ge 2$, and every
$b \in B_{s,e}$,
\[
c_0 \min\Big\{ \Big(\frac{b-\theta}{x}\Big)^2, 1 \Big\} \;\le\;
1 - \rho(\theta,b) \;\le\; C_0 \Big(\frac{b-\theta}{x}\Big)^2 .
\]
\end{lemma}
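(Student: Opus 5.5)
The plan is to reduce the two-sided separation bound to an identity for $\|\hat\psi_\theta-\hat\psi_b\|^2$ and then estimate that quantity with explicit Gram-type computations on projected hinges. Since $\hat\psi_\theta$ and $\hat\psi_b$ are unit vectors, we have $\|\hat\psi_\theta \mp \hat\psi_b\|^2 = 2(1\mp\langle\hat\psi_\theta,\hat\psi_b\rangle)$. It therefore suffices to show two things. First, $\langle\hat\psi_\theta,\hat\psi_b\rangle\ge0$ whenever $|b-\theta|$ is small relative to $x$. Second, $\|\hat\psi_\theta-\hat\psi_b\|^2$ is comparable to $\min\{((b-\theta)/x)^2,1\}$ from below and bounded by $C((b-\theta)/x)^2$ from above. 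For the opposite-sign case at larger distances, we also need $1-\rho$ bounded below by a constant. We will work in local coordinates, take $\theta<b$ without loss of generality (the hinge asymmetry only costs constants), and write $h=b-\theta$.

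\emph{Upper bound.} The lower bound on $\rho$ is easy if we take $1-\rho\le 1-\langle\hat\psi_\theta,\hat\psi_b\rangle=\tfrac12\|\hat\psi_\theta-\hat\psi_b\|^2$. We then use
\[
\|\hat\psi_\theta-\hat\psi_b\|\le \frac{2\|\psi_\theta-\psi_b\|}{\|\psi_\theta\|}.
\]
Since projection is a contraction, $\|\psi_\theta-\psi_b\|\le\|(t-\theta)_+-(t-b)_+\|$. This is a ramp rising to height $h$ and then constant on the remaining tail, of length at most $e-b$, so its norm is at most of order $h\cdot(e-\theta)^{1/2}$. That crude bound is too weak near the right edge, because the tail could be long while $x$ is set by the left distance. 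The fix is to use the projection to remove the constant tail. Replace $(t-\theta)_+-(t-b)_+$ by its difference from the affine function that matches it on the longer side, and choose whichever of the hinge or reversed-hinge representation makes the remaining piece live on the shorter side, of length about $x$. The projection bound then gives $\|\psi_\theta-\psi_b\|\lesssim h\,x^{1/2}$ for $h\le x/2$. Combined with $\|\psi_\theta\|\ge c_\ell x^{3/2}$ from Lemma~\ref{lem:lev}, this yields $1-\rho\le C(h/x)^2$. When $h>x/2$, the trivial bound $1-\rho\le1\le4(h/x)^2$ suffices.

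\emph{Lower bound.} We need $1-\rho^2\gtrsim\min\{(h/x)^2,1\}$, and since $1-\rho\ge(1-\rho^2)/2$ this is enough. Here $1-\rho^2=\mathrm{dist}(\hat\psi_b,\mathrm{span}\,\psi_\theta)^2$. This distance equals the squared residual of $\psi_b$ after projection onto $\mathrm{span}\{\mathbf 1,t,(t-\theta)_+\}$, divided by $\|\psi_b\|^2$.

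To bound this residual from below, we restrict attention to a sub-window of length of order $\min(h,x)$ around the kink at $b$, lying on the same side as $\theta$'s shorter distance. On that sub-window, every element of $\mathrm{span}\{\mathbf1,t,(t-\theta)_+\}$ is affine, whereas $(t-b)_+$ has a kink there. The squared $L^2$ distance from a hinge to affine functions on a window of length $L$, with the knot at relative position bounded away from the ends, is of order $L^3$ by Lemma~\ref{lem:lev}. This gives a residual $\gtrsim\min(h,x)^3$.

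\emph{Main obstacle.} This lower bound is where the difficulty lies. A residual of order $\min(h,x)^3$, against $\|\psi_b\|^2\asymp x_b^3$ with $x_b\approx x$, gives only $(h/x)^3$, not $(h/x)^2$. The localization idea is therefore too lossy, and we need a genuine second-order expansion. My first attempt will be to use the exact Gram determinant of the three vectors $\mathbf1$, $t$, and the hinges at $\theta$ and $b$. Each of these inner products is a polynomial obtainable as in the proof of Lemma~\ref{lem:lev}. From this I will compute $1-\rho^2=\det G(\theta,b)/(\|\psi_\theta\|^2\|\psi_b\|^2)$ in closed form. Expanding in $h$ then shows the leading term is $h^2$ times a quantity of order $x^{-2}$; this works because the derivative $\partial_b\hat\psi_b$ has norm of order $1/x$ and is orthogonal to $\hat\psi_b$. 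That order comes from the ramp-difference estimate above, used in reverse: the derivative of the unprojected hinge is an indicator, whose projected norm relative to $x^{3/2}$ is about $x^{-1}$ after removing its affine-parallel component.

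For $h\ge x/2$, I will argue instead by compactness of the rescaled continuum limit, together with the discrete-to-continuum error. In the continuum, distinct knots give non-parallel $\Psi$, so $1-\rho$ is bounded below uniformly on the compact set of configurations with the knot separation at least a fixed fraction of $x$. Handling the degenerate regimes where $x$ is tiny relative to $m$ requires rescaling by $x$ rather than by $m$. I expect this step to need the most care.
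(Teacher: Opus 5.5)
Your upper bound is correct, and it is more elementary than the paper's route, which reads both bounds off one exact identity. Subtract the affine function $h$ when the short side is on the left, and use the ramp itself when it is on the right. This gives $\|\psi_\theta-\psi_b\|^2\le h^2x+h^3\le 2h^2x$ for $h\le x$. Combined with $\|\hat\psi_\theta-\hat\psi_b\|\le 2\|\psi_\theta-\psi_b\|/\|\psi_\theta\|$, $\|\psi_\theta\|\ge c_\ell x^{3/2}$ and $1-\rho\le 1-\langle\hat\psi_\theta,\hat\psi_b\rangle$, you get an absolute $C_0$ of order $c_\ell^{-2}$. The bound $1-\rho\le 1\le (h/x)^2$ covers $h>x$. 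Your opening sign discussion is unnecessary: $1-\rho\ge\tfrac12(1-\rho^2)$ holds whatever the sign, and in any case $\langle\psi_\theta,\psi_b\rangle>0$ for all admissible pairs.

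The gap is in the lower bound, and it is exactly the content of the lemma. The constant $c_0$ must be absolute, valid for every $m$, every $\theta$ with $x\ge 2$ and every $b\in B_{s,e}$, but your plan produces only asymptotic statements. Expanding the closed form in $h$ gives a lower bound for small $h/x$ only if the remainder is controlled uniformly in the other two lengths. You do not do this, and doing it already amounts to bounding the rational function. The compactness step fails in three directions:
\begin{itemize}
\item the rescaled configuration space is not compact as $m/x\to\infty$;
\item nor is it compact as $h/x\to\infty$;
\item for bounded $x$ the continuum is not an approximation at all, since within $O(1)$ of an edge discrete and continuum quantities differ by constant factors (for instance $\|\psi_{s+2}\|\to 1$ whereas $m^{3/2}\|\Psi_{2/m}\|\to(8/3)^{1/2}$).
\end{itemize}
A configuration such as $x=2$, $h=1$, $m\to\infty$ is therefore reached by neither argument: $h/x=\tfrac12$ is not small, and no continuum limit is accurate there. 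You flag this regime as the delicate one but give no argument for it. Your heuristic that the leading coefficient is of order $x^{-2}$ is right, but it does not yield a uniform constant.

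Since you already plan to write $1-\rho^2=\det G/(\|\psi_\theta\|^2\|\psi_b\|^2)$ in closed form, the missing idea is to bound that rational function by exact inequalities instead of expanding it. Put $a=\theta-s$, $c=b-\theta$, $d=e+1-b$, $r=c+d$ and $L(p,q)=2pq-p-q+2$. The paper's identity is
$1-\rho^2=c(m^2-1)P(a,d,c)/\{(a+c)(a+c-1)r(r-1)L(a,r)L(d,a+c)\}$.
The key step is the polynomial sandwich $\tfrac{9}{16}\,adcQ\le P(a,d,c)\le\tfrac97\,adcQ$, with $Q=3ad+4ac+4dc+4c^2-3a-4c$. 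It holds for all integers $a,d\ge2$, $c\ge1$, because after the shift $a=A+2$, $d=D+2$, $c=C+1$ both differences have only positive coefficients. Elementary factor bounds, $pq\le L(p,q)\le 2pq$ and $2r(a+c)\le Q\le 4r(a+c)$, then give $\tfrac{27}{256}E^2\le 1-\rho^2\le\tfrac{144}{7}E^2$ with $E=c(a+r)/\{(a+c)r\}$. Finally, $\tfrac12\min\{c/x,1\}\le E\le 2c/x$ delivers both directions for every configuration at once, including the small-$x$ and large-$m/x$ regimes your asymptotic argument leaves open.
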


\begin{proof}[Proof]
The lemma is proved in Section~\ref{sup:sepproof}, from the exact identity of Lemma~\ref{lem:gramdet}(b), with the explicit constants $c_0 = 27/2048$ and $C_0 = 576/7$; Corollary~\ref{cor:sepcont} is the corresponding continuum statement. The following indicates the shape of the argument in the continuum. Rescale $W$ to $[0,1]$, $u = (\theta-s)/m$, $v = (b-s)/m$. By
Lemma~\ref{lem:gram} the discrete inner products agree with $m^3$ times their
continuum values up to relative error $O\{1/(\gamma m)\}$ on
$u \in [\gamma,1-\gamma]$. The map $\theta \mapsto \Psi_\theta$ is
differentiable in $L^2$ with
$\partial_u \Psi_u = -\Pi^\perp \mathbf 1_{\{\cdot > u\}}$, of norm bounded
above and below on compact subsets of $(0,1)$, so
$1 - \rho(u,v) = \tfrac12 \varsigma(u)(u-v)^2 + O(|u-v|^3)$ with $\varsigma$
bounded above and below, giving both bounds for $|u-v|$ small. For $|u-v|$
bounded away from zero, $\rho < 1$ because $\Psi_u$ and $\Psi_v$ are not
parallel, meaning that they are not non-zero scalar multiples. This is the
strict case of the Cauchy--Schwarz inequality, and follows here because the two
projected hinges have different knots. The continuous positive function
$1-\rho(u,v)$ attains a positive minimum on the relevant compact set by the
extreme-value theorem. If $|u-v|\ge\delta>0$, the upper bound follows from
$1-\rho\le1\le\delta^{-2}(u-v)^2$. The scale-free form, with $x$ rather than
$m$ in the denominator, holds because near an edge $\Psi_\theta$ depends on the
window only through the distance to that edge, up to the same relative error.
\end{proof}

\subsection{Exact Gram identities}
\label{sup:gramid}

Lemma~\ref{lem:lev} gave the norm $\|\psi_\tau\|$ in closed form. The Gram
matrix of vectors $v_1,\ldots,v_k$ is the matrix with entries
$\langle v_i,v_j\rangle$. Its off-diagonal entries are equally explicit here,
as is the determinant of the $2\times2$ Gram matrix for a pair of contrast
vectors; this determinant controls $\rho$. This subsection records those
identities and uses them to prove Lemma~\ref{lem:sep} in the continuum.

\begin{lemma}[Gram determinants]
\label{lem:gramdet}
\emph{(a) Continuum.} For $0 < u \le v < 1$, with $\Psi$ formed on $[0,1]$,
\[
\|\Psi_u\|^2\|\Psi_v\|^2 - \langle \Psi_u,\Psi_v\rangle^2
\;=\; \frac{u^3 (1-v)^3 (v-u)^2 \, Q(u,v)}{36},
\qquad Q(u,v) := 4v - u - 3uv ,
\]
and consequently
\begin{equation}
\label{eq:rhoexact}
1 - \rho(u,v)^2 \;=\; \frac{(v-u)^2\,Q(u,v)}{4\,(1-u)^3 v^3},
\qquad\text{with}\qquad
3u(1-u) \;\le\; Q(u,v) \;\le\; 4 .
\end{equation}

\emph{(b) Discrete.} Let $1 \le k < l \le m-1$ and put $a = k$, $c = l-k$ and
$b = m-l+1$, so that $a+b+c = m+1$. Then
\[
\|\psi_k\|^2\|\psi_l\|^2 - \langle\psi_k,\psi_l\rangle^2
\;=\; \frac{a\,b\,c\,(a-1)(b-1)\,P(a,b,c)}
{36\,(a+b+c)(a+b+c-1)^2(a+b+c-2)} ,
\]
where
\begin{align*}
P(a,b,c) &= 3a^2b^2c + 4a^2bc^2 - 3a^2bc + 2a^2b - 2a^2c^2 + 3a^2c - a^2 \\
&\quad + 4ab^2c^2 - 3ab^2c + 2ab^2 + 4abc^3 - 8abc^2 + 11abc - 4ab \\
&\quad - 2ac^3 + 6ac^2 - 7ac + 3a - 2b^2c^2 + 3b^2c - b^2 \\
&\quad - 2bc^3 + 6bc^2 - 7bc + 3b + c^3 - 4c^2 + 5c - 2 .
\end{align*}
Its terms of top degree are $abc(3ab+4ac+4bc+4c^2)$, and with $u = a/m$, $v = (a+c)/m$ and $m = a+b+c-1$,
\[
m^2 Q(u,v) \;=\; 3ab + 4ac + 4bc + 4c^2 - 3a - 4c ,
\]
The determinant here is the left-hand side of part (b), namely the determinant
of the $2\times2$ Gram matrix of $\psi_k$ and $\psi_l$. Each squared norm and
each inner product is of order $m^3$, so every term in that determinant is of
order $m^6$. Equivalently, the numerator in part (b) has total degree ten and
the denominator has degree four. After division by $m^6$, its highest-degree
terms give the continuum determinant in part (a), with
$u=a/m$ and $v=(a+c)/m$.

\emph{(c) Positivity.} For $1 \le b_1 \le b_2 \le m-1$, put $a = b_1$, $c = b_2-b_1$ and $d = m-b_2+1$. Then
\[
\langle \psi_{b_1}, \psi_{b_2}\rangle \;=\;
\frac{a\,d\,(a-1)(d-1)\big\{ 2ad-a-d+2 + 3c\,(a+d+c-1) \big\}}
{6\,(a+d+c-2)(a+d+c-1)(a+d+c)} ,
\]
which is strictly positive whenever $a, d \ge 2$, that is whenever $b_1, b_2 \in B_{s,e}$. In particular $\rho(b_1,b_2) = \langle\hat\psi_{b_1},\hat\psi_{b_2}\rangle$, with no absolute value needed.
\end{lemma}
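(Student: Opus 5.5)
The plan is to compute every quantity in the lemma through Gram determinants of hinges together with the affine functions, rather than by first forming the projections $\psi_k$ explicitly. The key identity is the Schur-complement formula: for vectors $h_1,h_2$ and the subspace $A=\mathrm{span}\{\mathbf 1,t\}$,
\[
\det G(\Pi^\perp h_1,\Pi^\perp h_2)\;=\;\frac{\det G(\mathbf 1,t,h_1,h_2)}{\det G(\mathbf 1,t)} ,
\]
where $G(\cdot)$ denotes the Gram matrix. Similarly, $\langle \Pi^\perp h_1,\Pi^\perp h_2\rangle$ is a ratio of bordered $3\times3$ Gram determinants. This reduces everything to polynomial moments of hinges: sums of $(t-k)_+$, $t(t-k)_+$ and $(t-k)_+(t-l)_+$ in the discrete case, and the corresponding integrals in the continuum.

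For part (a), I would work on $[0,1]$ with $h_u=(x-u)_+$ and $h_v=(x-v)_+$. Here $\det G(1,x)=1/12$. To simplify the $4\times4$ determinant, I would first change basis within $\mathrm{span}\{1,x,h_u,h_v\}$, which is the space of continuous piecewise-linear functions with knots $u,v$; for example, a basis of piecewise "hat"-type functions makes the Gram matrix sparse. A unimodular change of basis does not alter the determinant, and a non-unimodular one changes it only by a known squared factor. Expanding should then produce the factorisation $u^3(1-v)^3(v-u)^2Q/36$. This factorisation can be anticipated in advance: the determinant must vanish to second order at $v=u$, because the two projected hinges coincide there and the Gram determinant is quadratic in their difference. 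It must also vanish to third order as $u\to0$ or $v\to1$, by the norm law $\|\Psi_w\|^2=w^3(1-w)^3/3$ of Lemma~\ref{lem:lev}.

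Dividing by $\|\Psi_u\|^2\|\Psi_v\|^2=u^3(1-u)^3v^3(1-v)^3/9$ gives \eqref{eq:rhoexact} immediately. The bounds on $Q$ come from observing that $Q(u,v)=4v-u-3uv$ is affine in $v$ with slope $4-3u>0$. Hence it is minimised at $v=u$, where it equals $3u(1-u)$, and maximised at $v=1$, where it equals $4(1-u)\le4$.

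For part (b), I would carry out the same computation with the discrete moments. These are polynomials in $a,c,b$ obtained from the standard formulas for $\sum t$, $\sum t^2$ and $\sum t^3$ over integer ranges, and $\det G(\mathbf 1,t)=m^2(m^2-1)/12$. The numerator is then a polynomial identity of total degree ten, which I would verify symbolically. As a check independent of computer algebra, the claimed factors can be justified by vanishing arguments:
\begin{itemize}
\item $c=0$ makes the two vectors equal;
\item $a=1$ (respectively $b=1$) makes a hinge affine on $W$, so its projection is zero.
\end{itemize}
Hence $abc(a-1)(b-1)$ divides the determinant. The remaining factor $P$ can be pinned down by comparing coefficients, or equivalently by evaluating at sufficiently many integer points, since it is a polynomial of bounded degree. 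Finally, extracting the top-degree part $abc(3ab+4ac+4bc+4c^2)$ and substituting $a=um$, $c=(v-u)m$, $b\approx(1-v)m$ recovers part (a). This also confirms the stated expression for $m^2Q$.

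For part (c), I would compute the single bordered determinant for $\langle\psi_{b_1},\psi_{b_2}\rangle$ in the same way. I would then check two special cases: at $c=0$ the formula must reduce to \eqref{eq:normexact}, and it must vanish when $a=1$ or $d=1$. Both checks are consistent with the displayed closed form. Positivity then follows because, for $a,d\ge2$ and $c\ge0$, every factor is positive; in particular $2ad-a-d+2=(a-1)(d-1)+ad+1>0$. The main obstacle throughout is the bookkeeping in the degree-ten identity of part (b). I would handle it with symbolic computation, cross-checked by the vanishing and top-degree arguments above.
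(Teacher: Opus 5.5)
Your proposal is correct and is essentially the paper's route: a direct symbolic computation from the hinge moments, with the same positive-slope-in-$v$ argument for $3u(1-u)\le Q\le 4$ and the same factor-by-factor positivity check in (c). Your ratio $\det G(\mathbf 1,t,h_1,h_2)/\det G(\mathbf 1,t)$ is only a Schur-complement repackaging of the closed-form norms and projection-formula inner product that the paper substitutes. Two small corrections to your cross-checks: the factors $a$ and $b$ in (b) come from the further vanishing at $a=0$ (hinge equal to $t$ on $W$) and $b=0$ (hinge identically zero), not from the three cases you list; and the exact identity $m^2Q=3ab+4ac+4bc+4c^2-3a-4c$ is not confirmed by a top-degree comparison but follows directly from $m^2Q=3am+4cm-3a^2-3ac$ with $m=a+b+c-1$.
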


\begin{proof}
(a) Substitute the closed forms of Lemma~\ref{lem:gram} for
$\|\Psi_u\|^2 = u^3(1-u)^3/3$ and for $\langle\Psi_u,\Psi_v\rangle$ into
the left-hand side and simplify; this is a computation with polynomials in
two variables. Dividing by $\|\Psi_u\|^2\|\Psi_v\|^2 =
u^3(1-u)^3v^3(1-v)^3/9$ gives \eqref{eq:rhoexact}. For the bounds on $Q$,
write $Q = v(4-3u) - u$; since $v \ge u$ and $4-3u > 0$ we get
$Q \ge u(4-3u) - u = 3u(1-u)$, and $Q \le 4v \le 4$. The case $u > v$ follows
from the reflection $w \mapsto 1-w$, under which $\Psi_w \mapsto \pm\Psi_{1-w}$
and hence $\rho(u,v) = \rho(1-u,1-v)$.

(b) The same computation with the explicit sums of the proof of
Lemma~\ref{lem:lev} in place of the integrals; it is a computation with
polynomials in three variables. The identity for $m^2Q$ follows by substituting $m = a+b+c-1$ into $m^2Q = 3a(m-a-c)+4cm$.

(c) In local coordinates, let $h_b=[(t-b)_+]_{t=1}^m$, define
$g_{0,b}:=\langle h_b,\mathbf 1\rangle$ and
$g_{1,b}:=\langle h_b,t\rangle$, and retain
$S_r=\sum_{t=1}^m t^r$ for $r=0,1,2$. Orthogonal projection away from
$\operatorname{span}\{\mathbf 1,t\}$ gives
\[
\langle\psi_{b_1},\psi_{b_2}\rangle
=\langle h_{b_1},h_{b_2}\rangle-
\frac{S_2g_{0,b_1}g_{0,b_2}
-S_1(g_{0,b_1}g_{1,b_2}+g_{1,b_1}g_{0,b_2})
+S_0g_{1,b_1}g_{1,b_2}}{S_0S_2-S_1^2}.
\]
Substitution of the finite-sum formulae for these quantities and symbolic
simplification gives the expression in part (c), valid for every $m$. Setting
$c=0$, so that $b_1=b_2$, recovers the norm formula of
Lemma~\ref{lem:lev}. The identity was also verified in exact rational
arithmetic at every admissible pair for $4\le m<30$. For positivity,
$a-1\ge1$ and $d-1\ge1$;
$2ad-a-d+2=a(2d-1)-d+2\ge3d>0$;
$3c(a+d+c-1)\ge0$; and $a+d+c-2\ge2$, so the denominator is positive.
\end{proof}

\begin{cor}[Separation in the continuum]
\label{cor:sepcont}
Let $\varrho \in (0,\tfrac12]$. For every $u \in [\varrho, 1-\varrho]$ and
every $v \in (0,1)$,
\[
\tfrac38 \varrho \,(u-v)^2 \;\le\; 1 - \rho(u,v) \;\le\;
\varrho^{-6} (u-v)^2 .
\]
\end{cor}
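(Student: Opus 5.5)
The plan is to derive both bounds directly from the exact identity \eqref{eq:rhoexact} of Lemma~\ref{lem:gramdet}(a), combined with the elementary relation
\[
1-\rho=\frac{1-\rho^2}{1+\rho}.
\]
By Lemma~\ref{lem:gramdet}(c) in its continuum form, $\rho\in[0,1]$, so
\[
\tfrac12(1-\rho^2)\le 1-\rho\le 1-\rho^2 .
\]
It therefore suffices to bound $1-\rho^2$ above and below by multiples of $(u-v)^2$.

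First I treat the case $u\le v$. Here \eqref{eq:rhoexact} gives
\[
1-\rho^2=\frac{(v-u)^2Q(u,v)}{4(1-u)^3v^3},
\]
with $3u(1-u)\le Q\le 4$.

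\emph{Lower bound.} Use $Q\ge 3u(1-u)$, $(1-u)^3\le(1-u)$ and $v^3\le 1$. This gives $1-\rho^2\ge \tfrac34 u(v-u)^2/(1-u)^2\ge \tfrac34 u(v-u)^2\ge\tfrac34\varrho(v-u)^2$, using $(1-u)^{-2}\ge1$ and $u\ge\varrho$. Halving yields the constant $\tfrac38\varrho$.

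\emph{Upper bound.} Use $Q\le4$ and $v\ge u\ge\varrho$, $1-u\ge\varrho$. This gives $1-\rho^2\le(v-u)^2/\{(1-u)^3v^3\}\le\varrho^{-6}(v-u)^2$, and $1-\rho\le1-\rho^2$ finishes this case.

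Next comes the case $v<u$. Here I invoke the reflection symmetry $\rho(u,v)=\rho(1-u,1-v)$ stated in the proof of Lemma~\ref{lem:gramdet}(a). Put $u'=1-u$ and $v'=1-v$. Then $u'<v'$ and $u'\in[\varrho,1-\varrho]$, since this interval is symmetric about $\tfrac12$, and $|u'-v'|=|u-v|$. Applying the first case to $(u',v')$ gives both bounds with the same constants.

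The only delicate point is the sign of $\rho$, which is needed for $1+\rho\in[1,2]$. For this I will cite the positivity in Lemma~\ref{lem:gramdet}(c). Its continuum analogue follows either as the limit of the discrete formula divided by $m^3$, giving a nonnegative limit (strictly positive for $u,v\in(0,1)$), or from the closed form of $\langle\Psi_u,\Psi_v\rangle$ in Lemma~\ref{lem:gram}. If $\rho$ is taken as an absolute value, as in the definition $\rho=|\langle\hat\psi_\theta,\hat\psi_b\rangle|$, nonnegativity is automatic and this step is immediate.
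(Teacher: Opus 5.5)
Your proposal is correct and follows the paper's proof essentially line for line: reduce to $u\le v$ by the reflection $\rho(u,v)=\rho(1-u,1-v)$, bound $1-\rho^2$ from \eqref{eq:rhoexact} using $3u(1-u)\le Q\le 4$ together with $u\ge\varrho$, $1-u\ge\varrho$, $v\le 1$ and $v\ge u$, and then pass to $1-\rho$ via $\tfrac12(1-\rho^2)\le 1-\rho\le 1-\rho^2$. You also check explicitly that $\rho\in[0,1]$, which is automatic from the absolute value in the definition of $\rho$; the paper uses this step without comment.
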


\begin{proof}
By the reflection in Lemma~\ref{lem:gramdet}(a) we may assume $u \le v$. From
\eqref{eq:rhoexact} and $Q \ge 3u(1-u)$,
\[
1-\rho^2 \;\ge\; \frac{3u(1-u)(v-u)^2}{4(1-u)^3v^3}
\;=\; \frac{3u\,(v-u)^2}{4(1-u)^2v^3} \;\ge\; \tfrac34 \varrho\,(v-u)^2 ,
\]
using $u \ge \varrho$, $(1-u)^2 \le 1$ and $v \le 1$. From $Q \le 4$,
\[
1-\rho^2 \;\le\; \frac{(v-u)^2}{(1-u)^3 v^3} \;\le\;
\varrho^{-6}(v-u)^2 ,
\]
using $1-u \ge \varrho$ and $v \ge u \ge \varrho$. The claim follows from
$1-\rho \le 1-\rho^2 \le 2(1-\rho)$.
\end{proof}

Corollary~\ref{cor:sepcont} is Lemma~\ref{lem:sep} in the continuum, with
explicit constants, and it is the form in which the lemma is used: in every
application, in Lemma~\ref{lem:loc}, in the case $k=1$ of
Proposition~\ref{prop:track_prop} and in Lemma~\ref{lem:approx}(b), the point
$\theta$ lies at distance a fixed fraction of the window from both of its ends,
so $\varrho$ may be taken to be a constant depending only on $g$. The discrete statement is not deduced from the continuum one, and does not need to be; it is proved directly in Section~\ref{sup:sepproof}, in full rather than only in the interior regime.

That direct route uses Lemma~\ref{lem:gramdet}(b) and avoids the continuum altogether: since $\|\psi_k\|^2$, $\|\psi_l\|^2$ and the determinant are all explicit, $1-\rho^2$ is an explicit function of $(a,c,d)$, and the two-sided bound is an elementary inequality on $P$, of the same kind as the one proved for the norm in Lemma~\ref{lem:lev}.

\subsection{Proof of the separation lemma}
\label{sup:sepproof}

Lemma~\ref{lem:sep} follows from Lemma~\ref{lem:gramdet}(b) once the polynomial
$P$ is compared with a product of simple factors. That comparison is the
content of the next lemma.

\begin{lemma}[Polynomial comparison]
\label{lem:polycomp}
For integers $a, d \ge 2$ and $c \ge 1$, with
$Q_d := 3ad + 4ac + 4dc + 4c^2 - 3a - 4c$,
\[
\tfrac{9}{16}\, a d c\, Q_d \;\le\; P(a,d,c) \;\le\; \tfrac97\, a d c\, Q_d .
\]
\end{lemma}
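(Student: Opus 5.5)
Both inequalities reduce to showing that two explicit polynomials in $(a,d,c)$ are nonnegative on the integer region $a,d\ge2$, $c\ge1$:
\[
D_1 := P(a,d,c)-\tfrac{9}{16}\,adc\,Q_d \ge 0, \qquad D_2 := \tfrac97\,adc\,Q_d - P(a,d,c) \ge 0 .
\]
The first step is a change of variables $a=2+\alpha$, $d=2+\delta$, $c=1+\gamma$ with $\alpha,\delta,\gamma\ge0$. After this substitution I will expand $16D_1$ and $7D_2$, which have integer coefficients, as polynomials in $(\alpha,\delta,\gamma)$. The hope is that all, or nearly all, coefficients come out nonnegative. Any region in the nonnegative orthant where all coefficients are nonnegative gives nonnegativity at once, so the inequality is then proved there.

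As a check on the constants, compare leading behaviour. By Lemma~\ref{lem:gramdet}(b), the top-degree part of $P$ is $adc(3ad+4ac+4dc+4c^2)$, where the second slot of $P$ is now named $d$. The top-degree part of $adc\,Q_d$ is the same expression. So the ratio $P/(adcQ_d)$ tends to $1$ in every direction where all three variables grow, and $\tfrac9{16}<1<\tfrac97$ leaves slack at top degree. The binding cases therefore come from small values: $a$ or $d$ equal to $2$, or $c=1$, or one variable large while the others stay small. The constants $9/16$ and $9/7$ were presumably chosen so that these corner cases hold, possibly with equality at an extreme small triple. I will evaluate $P/(adcQ_d)$ at $(2,2,1)$ and along the edges, for example $a=d=2$ as a function of $c$, to confirm where the extremes lie.

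If negative coefficients survive in the shifted expansions, I will proceed as follows.
\begin{itemize}
\item Group the polynomial by total degree in the shifted variables. Show that each negative monomial is dominated, via AM--GM, by nonnegative monomials of the same or higher degree, using $\alpha,\delta,\gamma\ge0$ and the integrality $\gamma\in\{0,1,2,\dots\}$ where needed.
\item Alternatively, split into the cases $\gamma=0$ and $\gamma\ge1$, shifting again by $\gamma=1+\gamma'$. Only the residual low-degree pieces then need checking, and the small remaining lattice points can be checked directly.
\end{itemize}
The symmetry of $P$ under $a\leftrightarrow d$ (visible in the formula: swapping $a,b$ leaves $P$ unchanged) halves the bookkeeping.

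The main obstacle is purely algebraic. $P$ has many mixed-sign terms, such as $-2a^2c^2$, $-2ac^3$ and $-8abc^2$, so sign-certifying the degree-six difference polynomials may need a careful sum-of-nonnegative-terms decomposition rather than a naive coefficient check. This is especially true for the lower bound in the regime where $c$ is large and $a,d$ are small. There the cubic-in-$c$ terms $4adc^3-2ac^3-2dc^3+c^3$ must beat $\tfrac{9}{16}\cdot4adc^3$. I will first try the shifted-coefficient test symbolically. If it fails, I will fall back on treating $D_1$ and $D_2$ as polynomials in $c$ with coefficients in $(a,d)$ and proving each coefficient, or suitable partial sums of coefficients, nonnegative for $a,d\ge2$.
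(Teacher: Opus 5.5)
Your main plan is exactly the paper's proof: after the shift $a=A+2$, $d=D+2$, $c=C+1$, both $16P-9adcQ_d$ and $9adcQ_d-7P$ expand to 29-term polynomials in $(A,D,C)$ with every coefficient strictly positive (constant terms $936$ and $36$), so none of your fallback arguments are needed. One small correction: $P$ is symmetric under $a\leftrightarrow d$, but $Q_d$ is not because of its $-3a$ term, so the two differences are not symmetric and the symmetry does not halve the bookkeeping.
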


\begin{proof}
Substitute $a = A+2$, $d = D+2$, $c = C+1$ with $A, D, C \ge 0$ and expand. Both differences become polynomials in $(A,D,C)$ with $29$ terms and every coefficient strictly positive, so both are non-negative on $A,D,C \ge 0$. Explicitly,
\begin{align*}
16P - 9adcQ_d \;=\;& 28ADC^3 + 28AD^2C^2 + 28A^2DC^2 + 21A^2D^2C + 24DC^3 + 24D^2C^2 \\
&+ 24AC^3 + 216ADC^2 + 92AD^2C + 24A^2C^2 + 119A^2DC + 21A^2D^2 \\
&+ 192DC^2 + 84D^2C + 192AC^2 + 584ADC + 96AD^2 + 138A^2C \\
&+ 123A^2D + 144C^2 + 612DC + 108D^2 + 720AC + 588AD \\
&+ 162A^2 + 792C + 684D + 792A + 936 ,
\end{align*}
the smallest coefficient being $21$, and
\begin{align*}
9adcQ_d - 7P \;=\;& 8ADC^3 + 8AD^2C^2 + 8A^2DC^2 + 6A^2D^2C + 30DC^3 + 30D^2C^2 \\
&+ 30AC^3 + 108ADC^2 + 61AD^2C + 30A^2C^2 + 34A^2DC + 6A^2D^2 \\
&+ 81C^3 + 240DC^2 + 105D^2C + 240AC^2 + 271ADC + 39AD^2 \\
&+ 51A^2C + 12A^2D + 423C^2 + 441DC + 54D^2 + 333AC \\
&+ 87AD + 504C + 126D + 18A + 36 ,
\end{align*}
the smallest coefficient being $6$.
\end{proof}

\begin{proof}[Proof of Lemma~\ref{lem:sep}]
Suppose first that the candidate lies to the right of $\theta$, and put
\[
a = \theta - s, \qquad c = b - \theta \ge 1, \qquad d = e+1-b,
\qquad r = d + c = e+1-\theta ,
\]
so that $m = a + r - 1$, $a, d \ge 2$ because $\theta$ and $b$ lie in
$B_{s,e}$, and $x = \min(a,r)$ is the distance from $\theta$ to the nearer end
of $W$. Write $L(p,q) := 2pq - p - q + 2$, so that Lemma~\ref{lem:lev} reads
$\|\psi_\theta\|^2 = ar(a-1)(r-1)L(a,r)/\{6(m-1)m(m+1)\}$ and similarly for
$\|\psi_b\|^2$ with $(a+c,d)$ in place of $(a,r)$. Dividing the determinant of
Lemma~\ref{lem:gramdet}(b) by the product of the two norms cancels the common
factors $a$, $d$, $a-1$, $d-1$ and the common powers of $m$. The factors that
remain give exactly
\begin{equation}
\label{eq:rhodisc}
1 - \rho(\theta,b)^2 \;=\;
\frac{c\,(m^2-1)\,P(a,d,c)}
{(a+c)(a+c-1)\,r(r-1)\,L(a,r)\,L(d,a+c)} .
\end{equation}

We bound each factor. For $p,q \ge 2$ one has $pq \le L(p,q) \le 2pq$, since
$L - pq = (p-1)(q-1)+1 > 0$ and $2pq - L = p+q-2 \ge 0$. Next, substituting
$d = r-c$,
\[
Q_d \;=\; a\{3(r-1)+c\} + 4c(r-1) ,
\]
so that, using $c \le r$ and $r \ge 3$, $Q_d \le a(3r+c)+4cr \le 4r(a+c)$ and
$Q_d \ge (r-1)(3a+4c) \ge \tfrac23 r \cdot 3(a+c) = 2r(a+c)$. Finally
$m^2 - 1 \ge \tfrac12 m^2$, $a+c-1 \ge \tfrac12 (a+c)$ and
$r - 1 \ge \tfrac23 r$ for $m \ge 2$, $a+c\ge2$, $r \ge 3$. Inserting these and
Lemma~\ref{lem:polycomp} into \eqref{eq:rhodisc}, and writing
\[
E \;:=\; \frac{c\,(a+r)}{(a+c)r} ,
\]
simple algebra gives
$\tfrac{27}{256} E^2 \le 1-\rho^2 \le \tfrac{144}{7} E^2$.

It remains to compare $E$ with $c/x$. Since $a + c \ge a$,
\[
E \;\le\; \frac{c(a+r)}{ar} \;=\; c\Big(\frac1r + \frac1a\Big) \;\le\; \frac{2c}{x} .
\]
For the lower bound there are two cases. If $x = a$, so $a \le r$, then
$E \ge cr/\{(a+c)r\} = c/(a+c)$, which is at least $c/(2a) = \tfrac12 (c/x)$ when $c \le a$ and at least $\tfrac12$ when $c > a$. If $x = r$, so $r \le a$ and $c \le r$, then $a+c \le 2a$ and $E \ge ca/\{(a+c)r\} \ge c/(2r) = \tfrac12(c/x)$, while $c/x \le 1$. In both cases $E \ge \tfrac12\min\{c/x,\,1\}$.

Since $\tfrac12(1-\rho^2) \le 1-\rho \le 1-\rho^2$, we conclude
\[
\frac{27}{2048}\,\min\Big\{\Big(\frac{c}{x}\Big)^2, 1\Big\}
\;\le\; 1 - \rho(\theta,b) \;\le\;
\frac{576}{7}\Big(\frac{c}{x}\Big)^2 ,
\]
The constants differ from $27/256$ and $144/7$ above because the lower bound
uses both $E\ge\tfrac12\min\{c/x,1\}$ and
$1-\rho\ge\tfrac12(1-\rho^2)$, whereas the upper bound uses
$E\le2c/x$. These factors give $27/(256\cdot4\cdot2)=27/2048$ and
$4(144/7)=576/7$, respectively. This proves the assertion with
$c_0 = 27/2048$ and $C_0 = 576/7$. A candidate to
the left of $\theta$ is handled by the reflection $t \mapsto m+1-t$, under
which $\psi_k \mapsto \pm\psi_{m+1-k}$ and $x$ is unchanged.
\end{proof}

The constants are not sharp. Let $m$ range over $4 \le m \le 150$ together with $200$, $300$, $400$ and $600$, and let $\theta$ and $b$ range over all admissible values. Then the two ratios
\[
\frac{1-\rho}{\min\{(c/x)^2,\,1\}}
\qquad\text{and}\qquad
\frac{1-\rho}{(c/x)^2}
\]
had minimum $0.12798$ and maximum $2.19664$ respectively, against the proved bounds $27/2048 = 0.01318$ and $576/7 = 82.29$. The maximum is attained at $m = 5$, $\theta = 3$, $b = 2$. In that instance $\|\psi_3\|^2 = 7/10$, $\|\psi_2\|^2 = 2/5$ and $\langle\psi_2,\psi_3\rangle = 2/5$, so $\rho = 2/\sqrt7$ and the ratio is $9(1-2/\sqrt7)$.

\subsection{Proof of the discretization estimate}
\label{sup:discproof}

Work in local coordinates $1,\ldots,m$ and put
\[
h = \frac1m, \qquad u = \frac{k}{m}, \qquad v = \frac{r}{m}, \qquad w(v) = \min(v,1-v),
\]
\[
G_m(u,v) = \frac{\langle\psi_k,\psi_r\rangle}{m^3}, \qquad
N_m(v) = \frac{\|\psi_r\|^2}{m^3},
\]
and let $G(u,v) = \langle\Psi_u,\Psi_v\rangle$ and $N(v) = v^3(1-v)^3/3$ be the
continuum counterparts. Two facts about $G$ are used: for $u \le v$ the closed
form of Lemma~\ref{lem:gram} simplifies to
\begin{equation}
\label{eq:Gsimple}
G(u,v) \;=\; -\,\frac{u^2(1-v)^2(2uv+u-3v)}{6} ,
\end{equation}
with the reflected expression for $u \ge v$; and consequently, if
$u \in [\gamma, 1-\gamma]$, then $|G(u,v)| \le C_\gamma\, w(v)^2$.

\begin{lemma}[Weighted discretization]
\label{lem:weighted}
Fix $\gamma\in(0,1/2]$. There are constants $C$ and $C_\gamma$ such that,
for $m\ge4$ and admissible $k,r$ with $u\in[\gamma,1-\gamma]$,
\[
\text{\emph{(a)}}\ \ |N_m(v) - N(v)| \le C\,m^{-1} w(v)^2, \qquad
\text{\emph{(b)}}\ \ |G_m(u,v) - G(u,v)| \le C_\gamma\, m^{-1} w(v), \qquad
\]
\[
\text{\emph{(c)}}\ \
\bigg| \frac{G_m(u,v)}{\sqrt{N_m(v)}} - \frac{G(u,v)}{\sqrt{N(v)}} \bigg|
\;\le\; \frac{C_\gamma}{m\,\sqrt{w(v)}} .
\]
\end{lemma}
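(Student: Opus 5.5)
The plan is to reduce everything to Riemann-sum (Euler--Maclaurin) comparisons for the explicit closed forms already available, and to track where each error term vanishes. For (a) we have $N_m(v)$ exactly from Lemma~\ref{lem:lev}: with $a=r$, $b=m-r+1$, $N_m(v)=ab(a-1)(b-1)(2ab-a-b+2)/\{6m^3(m-1)m(m+1)\}$. Expand this as a rational function of $m$ at fixed $v$. The leading term is $v^3(1-v)^3/3=N(v)$. The first correction has order $m^{-1}$, and its coefficient is a polynomial in $v$ that vanishes to second order at $v=0$ and at $v=1$. This follows from the factors $a(a-1)\approx m^2v^2-mv$, and symmetrically from $b(b-1)$, and explains the weight $w(v)^2$.

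To make (a) uniform, I would write $a=mv$ and $b=m(1-v)+1$ and bound the difference $N_m-N$ termwise. The first step is to use $a-1\ge a/2$ (since $a\ge2$) and $v\ge 2/m$. These show that every correction term is at most $C\,m^{-1}v^2(1-v)^2$. The same reasoning shows that terms like $m^{-2}v$ are dominated, because $v\ge 2/m$ gives $m^{-2}v\le m^{-1}v^2$.

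For (b), I would take the exact discrete inner product from Lemma~\ref{lem:gramdet}(c), with $a=k$ (or the smaller index), $c=|r-k|$ and $d=m-\max(k,r)+1$. I would then compare it with \eqref{eq:Gsimple}, treating the cases $r\ge k$ and $r<k$ separately (the latter by reflection). Since $u\in[\gamma,1-\gamma]$, the factors involving $k$ are of order one. The only degeneration is as $v\to0$ or $v\to1$, through the factor $d(d-1)\approx m^2(1-v)^2$ (or $a(a-1)$ on the other side).

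The difference of the leading terms is again $O(m^{-1})$ times a polynomial. That polynomial retains at least one power of $(1-v)$, because one factor $d-1=d(1-1/d)$ loses a power only at relative size $1/d\sim 1/(m(1-v))$. This gives the bound $C_\gamma m^{-1}w(v)$. I expect this step to be the main obstacle: it requires checking that the cross-term coefficient does not lose both powers of $w(v)$. I would verify this by writing the exact numerator minus $m^6$ times the continuum expression as a polynomial in $(d,m)$ with $k$ treated as order $m$, and bounding each monomial using $d\ge2$.

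For (c), I would split the difference as
\[
\frac{G_m-G}{\sqrt{N_m}}+G\Big(\frac1{\sqrt{N_m}}-\frac1{\sqrt N}\Big).
\]
By (a), together with the lower bound $N_m\ge c\,w(v)^3$ from Lemma~\ref{lem:lev} (valid since $x\ge2$), and $N\asymp w^3$, the first term is at most $C_\gamma m^{-1}w/w^{3/2}=C_\gamma m^{-1}w^{-1/2}$. For the second term I would use
\[
\big|N_m^{-1/2}-N^{-1/2}\big|=\frac{|N_m-N|}{\sqrt{N_mN}(\sqrt{N_m}+\sqrt N)}\le \frac{Cm^{-1}w^2}{w^{9/2}} .
\]
Multiplying by $|G|\le C_\gamma w^2$ gives $C_\gamma m^{-1}w^{-1/2}$, and the two bounds together complete (c).
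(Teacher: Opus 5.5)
Your proposal is correct and follows essentially the paper's route. You subtract the continuum closed forms from the exact discrete ones (Lemma~\ref{lem:lev} for the norm, Lemma~\ref{lem:gramdet}(c) for the inner product), bound the difference using $w(v)\ge h=1/m$ from admissibility, and use the same two-term split for (c), where the first term is controlled by (b) rather than by (a) as you wrote. The paper only makes your ``retained power of $w(v)$'' explicit, through the exact factorizations $N_m-N=\frac{hv(v-1)}{6(h^2-1)}R_N(h,v)$ with $|R_N|\le C\,w(v)$, and, for $u\le v$, $G_m-G=\frac{hu(v-1)}{6(h^2-1)}R_{\le}(h,u,v)$ with $|R_{\le}|\le 38$.
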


\begin{proof}
(a) Both sides are explicit. Substituting $a = vm$ and $b = m(1-v)+1$ in
Lemma~\ref{lem:lev} and subtracting $N(v)$ gives, after factoring,
\[
N_m(v) - N(v) \;=\; \frac{h\,v(v-1)}{6(h^2-1)}\, R_N(h,v),
\]
\[
R_N(h,v) \;=\; -h^3 + h\,(2v^4-4v^3+7v^2-5v+1) - 3v(2v-1)(v-1) ,
\]
from which, since the coefficients of the terms involving $h$ have moduli summing to $20$ and the remaining part is $3v(1-v)(2v-1)$, $|R_N| \le 20h + 3v(1-v) \le 20\{h+v(1-v)\}$ for $0<h\le\tfrac12$ and $0\le v\le1$; the smallest admissible constant is $2.79$. Since
admissibility gives $x_b \ge 2$ and hence $w(v) \ge h$, and
$v(1-v) \le 2w(v)$. More precisely, the absolute value of the factor
$h v(v-1)/\{6(h^2-1)\}$ in the preceding factorization is at most
$C h w(v)$. Also $|R_N(h,v)|\le Cw(v)$ because $h\le w(v)$. Their product is
therefore at most $C h w(v)^2$, as required in (a).

(b) The same computation with $\langle\psi_k,\psi_r\rangle$, whose exact form is that used in the proof of Lemma~\ref{lem:gramdet}(b), in place of the norm. For $u \le v$ it gives
\[
G_m(u,v) - G(u,v) \;=\; \frac{h\,u(v-1)}{6(h^2-1)}\, R_{\le}(h,u,v) ,
\]
\begin{align*}
R_{\le}(h,u,v) &= -h^3 + 2hu^2v^2 - hu^2v - 3huv^2 + 6huv + hv^2 - 5hv + h \\
&\quad - 3u^2v - 3uv^2 + 6uv + 3v^2 - 3v ,
\end{align*}
and for $u \ge v$ the same with $u$ and $v$ interchanged, that is with
$h v(u-1)/\{6(h^2-1)\}$ in front and
$R_{\ge}(h,u,v)=R_{\le}(h,v,u)$. Both remainders are bounded on
$0<h\le\tfrac12$, $0\le u,v\le1$ by the sum of the moduli of their
coefficients, which is $38$ in each case. For $u\le v$, the absolute value of
the factor preceding $R_{\le}$ is
$h u(1-v)/\{6(1-h^2)\}$. It is at most $Chv$ when $v\le1/2$, because
$u\le v$, and at most $Ch(1-v)$ when $v\ge1/2$. Hence it is at most
$Chw(v)$. For $u\ge v$, the factor is
$h v(1-u)/\{6(1-h^2)\}$; it is at most $Chv$ when $v\le1/2$ and at most
$Ch(1-v)$ when $v\ge1/2$, because then $1-u\le1-v$. Thus this factor is also
at most $Chw(v)$, which proves (b).

(c) By Lemma~\ref{lem:lev}, $N_m(v) \asymp w(v)^3$ and $N(v) \asymp w(v)^3$.
Writing the difference as
\[
\frac{G_m - G}{\sqrt{N_m}} \;+\; G\,\frac{N - N_m}
{\sqrt{N_m N}\,\big(\sqrt{N_m} + \sqrt{N}\big)} ,
\]
the first term is at most $C_\gamma m^{-1}w \cdot w^{-3/2}$ by (b), and the
second at most $C_\gamma w^2 \cdot C m^{-1}w^2 \cdot w^{-3}\cdot w^{-3/2}$ by
(a) and the bound on $|G|$; both are $C_\gamma m^{-1} w^{-1/2}$.
\end{proof}

\begin{proof}[Proof of \eqref{eq:riemann}]
By Lemma~\ref{lem:rep}(a), $D_V(b) = \sum_{j \in V}\Delta_j
\langle\psi_{\tau_j},\hat\psi_b\rangle$ with $\Delta_j = d_j/n$, and
$\langle\psi_{\tau_j},\hat\psi_b\rangle = m^{3/2} G_m(u_j,v)/\sqrt{N_m(v)}$
with $u_j = (\tau_j-s)/m$ and $v = (b-s)/m$; and we first compare with the continuum expression formed from the same coordinates, $\sum_j (d_j/n)m^{3/2}G(u_j,v)/\sqrt{N(v)}$, which by the scaling relation of Lemma~\ref{lem:gram} is $n^{1/2}$ times the profile on $(\alpha_n,\beta_n)$ with kinks at $u_j$. The replacement of $u_j$ by the coordinate determined by $q_j$ is a separate step, taken at the end. Each
$j \in V$ has $u_j$ at a fixed fractional distance from both ends, by
Lemma~\ref{lem:track}, so Lemma~\ref{lem:weighted}(c) applies with a
$\gamma$ depending only on $g$, and
\[
\big| D_V(b) - n^{1/2}\bar D_{\alpha_n\beta_n}[V](b/n) \big|
\;\le\; \sum_{j \in V} \frac{|d_j|}{n}\, m^{3/2}\,
\frac{C_\gamma}{m\sqrt{w(v)}}
\;\le\; C\,\frac{m}{n}\, x_b(s,e)^{-1/2} ,
\]
using $m w(v) \le x_b(s,e) \le 2m w(v)$. The displacement of $\tau_j$ from
$nq_j$, which is at most one, moves $u_j$ by $O(m^{-1})$ and, since
$\partial_u\{G(u,v)/\sqrt{N(v)}\}$ is bounded uniformly for $u$ at a fixed
distance from the ends, contributes a further $O(m^{-1})$ to
$G(u_j,v)/\sqrt{N(v)}$. This is bounded by the right-hand side
$C_\gamma m^{-1}w(v)^{-1/2}$ of Lemma~\ref{lem:weighted}(c), because
$0<w(v)\le1$.
\end{proof}

\section{Proof of \mainthm{}}
\label{sup:proof}

The proof shows that the sample recursion follows the population recursion of \defpop{}: every window LABS visits has endpoints within $O(\bar r_n)$ of $n$ times the endpoints of a node of $\mathcal{T}(g)$, and returns the kinks live at that node. A kink is called live at $(\alpha,\beta)$ when its rescaled location lies strictly inside the window, that is, when $j\in V(\alpha,\beta)$. Section~\ref{sup:pop_proof}
develops the population objects and proves \proppop{};
Section~\ref{sup:constants} fixes the constants;
Section~\ref{sup:track_sec} proves the corresponding statement for the sample recursion; and
Section~\ref{sup:thmproof} deduces \mainthm{}.

\subsection{Population objects, and proof of \proppop{}}
\label{sup:pop_proof}

For $0 \le \alpha < \beta \le 1$ let $\Pi^\perp_{\alpha\beta}$ project
$L^2(\alpha,\beta)$ onto the orthogonal complement of
$\mathrm{span}\{\mathbf 1, u\}$ and put
$\Psi^{\alpha\beta}_v := \Pi^\perp_{\alpha\beta}[(u-v)_+]$.

\begin{lemma}[Closed form]
\label{lem:gram}
On $[0,1]$, for $p \le v$, with $a = 1-p$ and $b = 1-v$,
\[
\langle \Psi_p, \Psi_v\rangle = \frac{ab^2}{2} - \frac{b^3}{6} - a^2b^2
+ \frac{a^2b^3}{2} + \frac{a^3b^2}{2} - \frac{a^3b^3}{3},
\qquad
\|\Psi_v\|^2 = \frac{v^3(1-v)^3}{3} .
\]
For a general interval, $\langle \Psi^{\alpha\beta}_p,
\Psi^{\alpha\beta}_v\rangle = (\beta-\alpha)^3 \langle \Psi_{p'},
\Psi_{v'}\rangle$ with $p' = (p-\alpha)/(\beta-\alpha)$ and
$v' = (v-\alpha)/(\beta-\alpha)$.
\end{lemma}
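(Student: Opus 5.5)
The plan is a direct computation, using the orthonormal basis $e_0=\mathbf 1$, $e_1=\sqrt{12}\,(u-\tfrac12)$ of $\mathrm{span}\{\mathbf 1,u\}$ in $L^2(0,1)$. Write $h_v(u)=(u-v)_+$. Since $\Pi^\perp$ is an orthogonal projection,
\[
\langle\Psi_p,\Psi_v\rangle=\langle h_p,h_v\rangle-\langle h_p,e_0\rangle\langle h_v,e_0\rangle-\langle h_p,e_1\rangle\langle h_v,e_1\rangle .
\]
So I only need three elementary integrals, each computed with $p\le v$, $a=1-p$ and $b=1-v$.

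First, $\langle h_v,\mathbf 1\rangle=\int_v^1(u-v)\,du=b^2/2$. Second, for the moment against $u-\tfrac12$ I substitute $w=u-v\in[0,b]$ and use $v-\tfrac12=\tfrac12-b$:
\[
\int_0^b w\,(w+v-\tfrac12)\,dw=\frac{b^3}{3}+\Big(\frac12-b\Big)\frac{b^2}{2}=\frac{b^2}{4}-\frac{b^3}{6}.
\]
Third, for the cross term I use $p\le v$, so that on $[v,1]$ we have $u-p=w+(a-b)$. This gives
\[
\langle h_p,h_v\rangle=\int_0^b w\,(w+a-b)\,dw=\frac{b^3}{3}+\frac{(a-b)\,b^2}{2}.
\]
The corresponding quantities for $h_p$ are obtained by replacing $b$ with $a$.

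Substituting these into the projection formula gives
\[
\frac{b^3}{3}+\frac{(a-b)\,b^2}{2}-\frac{a^2b^2}{4}-12\Big(\frac{a^2}{4}-\frac{a^3}{6}\Big)\Big(\frac{b^2}{4}-\frac{b^3}{6}\Big).
\]
The last product expands to $\tfrac34a^2b^2-\tfrac12a^2b^3-\tfrac12a^3b^2+\tfrac13a^3b^3$. Collecting terms gives exactly the displayed expression: the $b^3$ terms combine to $-b^3/6$, and the $a^2b^2$ terms combine to $-a^2b^2$.

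For the norm, I set $p=v$, that is $a=b=1-v$. The expression then becomes
\[
\frac{b^3}{3}-b^4+b^5-\frac{b^6}{3}=\frac{b^3(1-b)^3}{3},
\]
using $(1-b)^3=1-3b+3b^2-b^3$. Since $1-b=v$, this equals $v^3(1-v)^3/3$.

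For a general interval I use the affine map $u=\alpha+(\beta-\alpha)u'$. Under this map the hinge transforms as $(u-p)_+=(\beta-\alpha)(u'-p')_+$. The subspace $\mathrm{span}\{\mathbf 1,u\}$ corresponds to $\mathrm{span}\{\mathbf 1,u'\}$, so $\Pi^\perp_{\alpha\beta}$ commutes with the change of variables. The Lebesgue measure contributes a Jacobian factor $(\beta-\alpha)$. Each inner product therefore acquires the factor $(\beta-\alpha)^2\cdot(\beta-\alpha)=(\beta-\alpha)^3$.

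No step is a real obstacle. The only place needing care is the bookkeeping in the final expansion, and the norm identity serves as a consistency check on that expansion.
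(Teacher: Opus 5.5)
Your proof is correct and follows the paper's argument essentially verbatim: the same orthonormal basis $\{\mathbf 1,\sqrt{12}(u-\tfrac12)\}$, the same three integrals $\langle h_v,\mathbf 1\rangle=b^2/2$, $\langle h_v,u-\tfrac12\rangle=b^2/4-b^3/6$ and $\langle h_p,h_v\rangle=ab^2/2-b^3/6$, the specialization $p=v$ for the norm, and the affine change of variables $u=\alpha+(\beta-\alpha)u'$ for a general interval. The only difference is that you write out the expansion and the scaling factor $(\beta-\alpha)^2\cdot(\beta-\alpha)$ explicitly, whereas the paper leaves these to the reader.
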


\begin{proof}
With the orthonormal basis $\{\mathbf 1, \sqrt{12}(u-\tfrac12)\}$ of
$\mathrm{span}\{\mathbf 1, u\}$ and $h_p = (u-p)_+$, one has
$\langle h_p,\mathbf 1\rangle = a^2/2$,
$\sqrt{12}\langle h_p, u-\tfrac12\rangle = \sqrt{12}(a^2/4 - a^3/6)$ and,
for $p \le v$, $\langle h_p,h_v\rangle = ab^2/2 - b^3/6$. Subtracting the two
projection terms and expanding gives the first formula; setting $p = v$ gives
$\|\Psi_v\|^2 = a^3(1-a)^3/3$ with $a = 1-v$, which is the second. The scaling
relation is the change of variables $u \mapsto \alpha + (\beta-\alpha)u$. 
\end{proof}

By Lemma~\ref{lem:gram} the profile $\bar D_{\alpha\beta}$ of
Section~4.2 of the main paper is well defined, $|\bar D_{\alpha\beta}(v)| \to 0$ as
$v \downarrow \alpha$ or $v \uparrow \beta$, so its maximum is attained in the
interior, and
\[
\Lambda(\alpha,\beta) := \max_v |\bar D_{\alpha\beta}(v)| \;>\; 0
\qquad \text{whenever } V(\alpha,\beta) \ne \emptyset,
\]
since $F := \sum_{j \in V} d_j \Psi_{q_j} \ne 0$ by linear independence of the
$\Psi_q$, and $F$ lies in the closed span of $\{\Psi_v\}$, so cannot be
orthogonal to every $\Psi_v$. If $V(\alpha,\beta) = \{j\}$ then
$c(\alpha,\beta) = q_j$ exactly, by the continuum form of
Lemma~\ref{lem:rep}(c).

\medskip
\noindent\textit{Proof of \proppop{}.}
Induction on the height of the subtree rooted at $(\alpha,\beta)$. This height
is finite by the hypothesis of \proppop{} that $\mathcal{T}(g)$ is finite.
Write $V:=V(\alpha,\beta)$ and $k:=|V|$.

If $k = 0$ the call returns $\emptyset$. If $k = 1$, say $V = \{j\}$, then
$c = q_j$ exactly. The live-kink sets of the children
$(\alpha,q_j)$ and $(q_j,\beta)$ are empty: $q_j$ is an endpoint and is
therefore excluded by the strict inequalities in the definition of $V$, and no
other kink lies in $(\alpha,\beta)$. Thus
$\mathcal{L} = \mathcal{R} = \emptyset$ and the call returns $\{q_j\}$.

Let $k \ge 2$.

\emph{Case A: $c \notin \{q_1,\ldots,q_N\}$.} Then $V(\alpha,c) =: V_L$ and
$V(c,\beta) =: V_R$ partition $V$, and by the induction hypothesis
$\mathcal{L} = \{q_j : j \in V_L\}$, $\mathcal{R} = \{q_j : j\in V_R\}$, with
$|V_L| + |V_R| = k \ge 2$, so the look-ahead branch is entered. If both are
non-empty, $\alpha'$ and $\beta'$ are consecutive elements of
$\{q_j : j\in V\}$; since $V$ contains \emph{all} kinks of $(\alpha,\beta)$
they are consecutive kinks of $g$, so $V(\alpha',\beta') = \emptyset$. The
population re-test in \defpop{} therefore finds no kink in its open interval
and omits the parent candidate; the call returns
$\mathcal{L}\cup\mathcal{R}$. If
$\mathcal{L} = \emptyset$ then $\alpha' = \alpha$ and
$\beta' = \min\{q_j : j\in V\}$, so again $V(\alpha',\beta') = \emptyset$ and
the call returns $\mathcal{R}$. The case $\mathcal{R} = \emptyset$ is
symmetric. In all three the value is $\{q_j : j \in V\}$.

\emph{Case B: $c = q_{j^*}$.} Necessarily $j^* \in V$. The live-kink sets,
meaning the indices of kinks strictly inside the two child windows, are
$V_L = \{j \in V: q_j < q_{j^*}\}$ and $V_R = \{j\in V : q_j > q_{j^*}\}$,
which exclude $j^*$, so $|V_L|+|V_R| = k-1 \ge 1$ and the look-ahead branch is
entered. More precisely, $\alpha'$ is the largest element of
$\{q_j:j\in V,\ q_j<q_{j^*}\}$, or $\alpha$ if this set is empty. Similarly,
$\beta'$ is the smallest element of
$\{q_j:j\in V,\ q_j>q_{j^*}\}$, or $\beta$ if this set is empty. Thus
way $q_{j^*}$ is the unique kink in $(\alpha',\beta')$, so
$V(\alpha',\beta') = \{j^*\}$ and, by the $k=1$ computation,
$c(\alpha',\beta') = q_{j^*}$. The call returns
$\mathcal{L} \cup \{q_{j^*}\} \cup \mathcal{R} = \{q_j : j\in V\}$. \qed

\medskip
Case A is the trap configuration and Case B the configuration in which the
split lands on a kink. \proppop{} shows that the look-ahead step handles both
under its stated well-posedness conditions: the population recursion tree is
finite and $c(\alpha,\beta)$ is defined at every node.

\subsection{Signal-dependent constants for the recursion proof}
\label{sup:constants}

Assume \assterm{} and \assnondeg{}, and assume throughout this subsection that $N \ge 1$; when $N = 0$ the extrema below are over empty sets, and that case is treated separately in the proofs of the theorems. Since $\mathcal{T}(g)$ is finite, each of the following is a minimum of a finite set of strictly positive numbers, hence strictly positive:
\begin{align}
\label{eq:kappa}
\kappa &:= \min\Big( \{\delta_{\min}\} \cup
\big\{ q_j - \alpha,\ \beta - q_j : (\alpha,\beta) \in \mathcal{T}(g),\
j \in V(\alpha,\beta) \big\} \notag\\
&\hspace{5.2em} \cup \big\{ |c(\alpha,\beta) - q_l| :
(\alpha,\beta)\in\mathcal{T}(g),\ 1 \le l \le N,\
c(\alpha,\beta) \ne q_l \big\} \Big), \\
\Lambda_{\min} &:= \min_{(\alpha,\beta)} \Lambda(\alpha,\beta), \quad \Lambda_{\max} := \max_{(\alpha,\beta)} \Lambda(\alpha,\beta), \quad \varrho_0 := \min_{(\alpha,\beta)} \frac{\min\{c(\alpha,\beta)-\alpha,\ \beta - c(\alpha,\beta)\}}{\beta-\alpha},
\end{align}
the three extrema defining $\Lambda_{\min}$, $\Lambda_{\max}$ and $\varrho_0$
being over nodes with $V(\alpha,\beta)\ne\emptyset$. We also require a lower
bound for the noiseless contrast on every window whose sample contrast must
exceed the threshold. These are the proposal windows at non-empty nodes and
the one-kink re-test windows in Case~B:
\begin{equation}
\label{eq:Lamdet}
\Lambda_{\mathrm{det}} \;:=\; \min\Big[\, \big\{ \Lambda(\alpha,\beta) : (\alpha,\beta) \in \mathcal{T}(g),\ V(\alpha,\beta) \ne \emptyset \big\}
\;\cup\; \big\{ |d_{j^*}| \, \|\Psi^{\alpha'\beta'}_{q_{j^*}}\| \big\} \Big] \;>\; 0,
\end{equation}
the second set running over the Case~B re-test windows $(\alpha',\beta')$ of
the proof of \proppop{}. On each such window,
$V(\alpha',\beta')=\{j^*\}$, so $q_{j^*}$ is the only kink strictly inside the
window. Both sets are finite, so $\Lambda_{\mathrm{det}}>0$. Note
$\Lambda_{\mathrm{det}}\le\Lambda_{\min}$, and the inequality can be strict;
see Remark~\ref{rem:uniformlam}. For LABS-Grid the corresponding constant is
\begin{equation}
\label{eq:LamdetMR}
\Lambda^{M,R}_{\mathrm{det}} := \min\Big[\, \big\{ \Lambda_M(\alpha,\beta) : (\alpha,\beta) \in \mathcal{T}_{M,R}(g),\ V(\alpha,\beta) \ne \emptyset \big\} \;\cup\; \big\{ |d_{j^*}|\,\|\Psi^{\alpha'\beta'}_{q_{j^*}}\| \big\} \Big],
\end{equation}
the first set now formed from the pooled amplitudes over the grid tree and the
second from the Case~B re-tests of the $R$-grid. The two trees and the two
families of re-test windows differ, so $\Lambda^{M,R}_{\mathrm{det}}$ is not
obtained from $\Lambda_{\mathrm{det}}$ by adjoining terms. No additional
separation assumption is required in \eqref{eq:kappa}: its third set is finite
because $\mathcal{T}(g)$ is finite, and each of its elements is positive by
construction.

\begin{lemma}[Consequences of Assumption 4.2]
\label{lem:conseq}
Fix a node $(\alpha,\beta)$. Suppose its live-kink set is non-empty, so
\assnondeg{} applies, and put
\[
V:=V(\alpha,\beta)\ne\emptyset.
\]
For a nearby window $(\alpha',\beta')$, write
\[
\bar D_{\alpha'\beta'}[V](v):=
\left\langle \sum_{j\in V}d_j\Psi^{\alpha'\beta'}_{q_j},
\frac{\Psi^{\alpha'\beta'}_v}{\|\Psi^{\alpha'\beta'}_v\|}\right\rangle .
\]
Thus this profile is formed from the fixed kinks $q_j$, $j\in V$, with their
slope jumps $d_j$; no kink is added or removed when the endpoints are
perturbed. There are constants
$\varkappa>0$, $\varsigma_0>0$ and $C_L<\infty$, depending only on $g$, such
that:
\begin{enumerate}
\item[\emph{(a)}] $\Lambda(\alpha,\beta) - |\bar D_{\alpha\beta}(v)| \ge
\varkappa (v - c(\alpha,\beta))^2$ for all $v \in (\alpha,\beta)$;
\item[\emph{(b)}] for $|\alpha'-\alpha| + |\beta'-\beta| < \varsigma_0$, the
profile $\bar D_{\alpha'\beta'}[V]$ has a unique maximizer $c'$ of its modulus,
with $|c' - c(\alpha,\beta)| \le C_L(|\alpha'-\alpha| + |\beta'-\beta|)$,
$\max_v |\bar D_{\alpha'\beta'}[V]| \ge \tfrac12\Lambda(\alpha,\beta)$,
\[
\max_v \big|\bar D_{\alpha'\beta'}[V]\big| -
\big|\bar D_{\alpha'\beta'}[V](v)\big| \ge \tfrac12 \varkappa (v-c')^2
\quad\text{on } (\alpha',\beta'),
\]
and $\min\{c'-\alpha', \beta'-c'\} \ge \tfrac12 \varrho_0(\beta'-\alpha')$.
\end{enumerate}
\end{lemma}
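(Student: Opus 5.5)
The plan is to treat $\bar D_{\alpha\beta}$ as a smooth function of $(v,\alpha,\beta)$ near the node and use a quantitative implicit-function/compactness argument. By Lemma~\ref{lem:gram} and the scaling relation, $\bar D_{\alpha'\beta'}[V](v)$ is an explicit rational-type expression in $(q_j-\alpha')/(\beta'-\alpha')$ and $(v-\alpha')/(\beta'-\alpha')$, built from the polynomial Gram entries divided by $\|\Psi_v\|=\{v^3(1-v)^3/3\}^{1/2}$. For $v$ in a compact subinterval of $(\alpha',\beta')$ this is real-analytic in all three variables, because the live kinks $q_j$ stay at distance at least $\kappa/2$ from the perturbed endpoints once $\varsigma_0<\kappa/2$. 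Near the endpoints the profile tends to zero uniformly: $|\bar D|\le C\,w^{1/2}$, by $|G|\le C w^2$ and $\|\Psi_v\|\asymp w^{3/2}$.

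\emph{Part (a).} Split $(\alpha,\beta)$ into an inner region $|v-c|\le\eta$ and its complement. On the inner region, $\bar D_{\alpha\beta}(c)\neq 0$ since $\Lambda>0$, so $|\bar D|$ is smooth there. \assnondeg{} gives $\partial_v^2|\bar D|(c)<0$ with vanishing first derivative, and Taylor's theorem gives $\Lambda-|\bar D(v)|\ge \tfrac14|\partial_v^2|\bar D|(c)|(v-c)^2$ for $\eta$ small. On the complement, uniqueness of the maximizer, continuity, and the decay at the edges give $\Lambda-|\bar D(v)|\ge \mu>0$. Because $(v-c)^2\le 1$, this yields the bound $\mu(v-c)^2$. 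Take $\varkappa$ to be the minimum of the constants so obtained over the finitely many nodes of $\mathcal{T}(g)$.

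\emph{Part (b).} The perturbation step is where the main care is needed: it requires uniform (in $v$) closeness of the profiles, including near the edges where $\Psi_v$ degenerates.
\begin{itemize}
\item I would first show that $\sup_v\big|\,|\bar D_{\alpha'\beta'}[V](v)|-|\bar D_{\alpha\beta}(\tilde v)|\,\big|\le C(|\alpha'-\alpha|+|\beta'-\beta|)$, where $\tilde v$ is the affine map of $v$ back to $(\alpha,\beta)$. Near the edges both quantities are $O(w^{1/2})$, so I would cut at $w=\epsilon$ and choose $\varsigma_0$ accordingly; in the interior, smoothness gives the Lipschitz bound.
\item Combined with (a), any near-maximizer of the perturbed profile lies within $O(\varsigma_0^{1/2})$ of $c$. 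In particular it lies in the inner region, where the mixed second derivatives are bounded and $\partial_v^2|\bar D|\le -\tfrac12|\partial_v^2|\bar D|(c)|$ persists by continuity.
\item The implicit function theorem applied to $\partial_v|\bar D_{\alpha'\beta'}[V]|(v)=0$ then gives a unique critical point $c'$ there, with $|c'-c|\le C_L(|\alpha'-\alpha|+|\beta'-\beta|)$.
\item The quadratic gap $\tfrac12\varkappa(v-c')^2$ follows by repeating the argument of (a) for the perturbed profile, with $\mu$ and the curvature halved.
\item The bound $\max\ge\tfrac12\Lambda$ follows from the uniform closeness.
\item The edge-distance bound follows from $|c'-c|\le C_L\varsigma_0$ together with $\min\{c-\alpha,\beta-c\}\ge\varrho_0(\beta-\alpha)$, for $\varsigma_0$ small relative to $\varrho_0\kappa$.
\end{itemize}
Finiteness of $\mathcal{T}(g)$ (\assterm{}) again lets all constants be taken uniform over the nodes.
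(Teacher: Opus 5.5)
Your proposal is correct and follows essentially the same route as the paper: a local Taylor bound at $c(\alpha,\beta)$ plus a uniqueness and compactness argument (using the decay of the profile at the window ends) for (a), and for (b) the implicit function theorem applied to $\partial_v|\bar D_{\alpha'\beta'}[V]|$ together with uniform closeness of the perturbed profiles to exclude maximizers away from $c$, with all constants made uniform over the finitely many nodes. One correction: the profile is not real-analytic in $v$. Since $\partial_v^2G(u,v)=\Psi_u(v)$ has a kink at $v=u$, $\bar D_{\alpha'\beta'}[V]$ is only $C^2$ across each live kink $v=q_j$, which is exactly where $c$ lies at single-kink and Case~B nodes. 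Your Taylor and implicit-function steps only need joint $C^2$ regularity, however, and this does hold because $\partial_u^2G$, $\partial_u\partial_vG$ and $\partial_v^2G$ are all continuous across the diagonal, so the argument stands.
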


\begin{proof}
(a) Write $c=c(\alpha,\beta)$ and
$\Lambda=\Lambda(\alpha,\beta)$. Near $c$, Taylor expansion and
\assnondeg{} give
\[
\Lambda-|\bar D_{\alpha\beta}(v)|
\ge \tfrac13|\partial_v^2|\bar D_{\alpha\beta}|(c)|(v-c)^2
\]
on some $(c-\epsilon,c+\epsilon)$. Moreover,
$|\bar D_{\alpha\beta}(v)|\to0$ at both endpoints. Consequently the function
$\Lambda-|\bar D_{\alpha\beta}(v)|$ has a continuous extension to
$[\alpha,\beta]$, obtained by assigning it the value $\Lambda$ at each
endpoint. By uniqueness of $c$, this extension is strictly positive on
\[
K_{\epsilon}:=\{v\in[\alpha,\beta]:|v-c|\ge\epsilon\}.
\]
It therefore has a positive minimum on $K_\epsilon$. Combining this minimum
with the local Taylor bound proves (a) for the fixed node. Finally, take the
smallest resulting positive constant over the finitely many nodes
$(\alpha,\beta)\in\mathcal T(g)$ with $V(\alpha,\beta)\ne\emptyset$.

(b) Here $C^p$ means that derivatives through order $p$ exist and are
continuous. With $V$ fixed, the map
$(\alpha',\beta',v)\mapsto\bar D_{\alpha'\beta'}[V](v)$ is continuous, is
$C^2$ in $v$, and has a jointly continuous second derivative in $v$, by
Lemma~\ref{lem:gram} and continuity of
$\Pi^\perp_{\alpha'\beta'}g$ in the endpoints. Since
$|\bar D_{\alpha\beta}(c)|=\Lambda>0$, its sign is constant in a neighbourhood
of $(\alpha,\beta,c)$, so the same regularity holds for the absolute profile.

Apply the implicit function theorem to
\[
H(\alpha',\beta',v):=\partial_v
|\bar D_{\alpha'\beta'}[V](v)|.
\]
At $(\alpha,\beta,c)$, $H=0$ and
$\partial_vH=\partial_v^2|\bar D_{\alpha\beta}|(c)<0$. The theorem therefore
provides neighbourhoods of $(\alpha,\beta)$ and $c$, and a unique $C^1$
function $(\alpha',\beta')\mapsto c'(\alpha',\beta')$ in those neighbourhoods
such that $H(\alpha',\beta',c')=0$ and $c'(\alpha,\beta)=c$. A $C^1$ function
on a sufficiently small compact neighbourhood is Lipschitz, which gives the
stated bound with a finite constant $C_L$.

We now choose $\varsigma_0>0$ small enough that every endpoint pair satisfying
$|\alpha'-\alpha|+|\beta'-\beta|<\varsigma_0$ lies in this neighbourhood. If
necessary, choosing a smaller positive value of $\varsigma_0$ ensures, by the
joint continuity just established, that the second derivative at $c'$ remains
negative, the maximum value is at least $\Lambda/2$, and $c'$ remains at least
half the relative endpoint distance attained by $c$. Part (a) gives
\[
\Lambda-|\bar D_{\alpha\beta}(v)|\ge\varkappa\epsilon^2
\quad\text{when }|v-c|\ge\epsilon.
\]
Uniform continuity of the absolute profile on the corresponding compact set
of endpoint and candidate triples preserves at least half this positive
difference for all such $(\alpha',\beta')$. Thus no point outside the local
neighbourhood of $c'$ can maximize the perturbed profile. Inside that
neighbourhood, the uniform negative-curvature bound and Taylor's theorem give
the quadratic inequality in (b), after reducing the common constant
$\varkappa$ if required. This also proves that $c'$ is the unique global
maximizer. Taking the smallest admissible $\varsigma_0$ and the largest
$C_L$ over the finitely many non-empty nodes makes the constants depend only
on $g$.
\end{proof}

\subsection{The sample recursion on tracking windows}
\label{sup:track_sec}

We work on the event $E_n$ of Lemma~\ref{lem:noise}.

\begin{defin}[Tracking]
\label{def:track}
A window $(s,e]$ \emph{tracks} a node $(\alpha,\beta) \in \mathcal{T}(g)$ at
scale $C_\star$ if $|s - n\alpha| \le C_\star \bar r_n$,
$|e - n\beta| \le C_\star\bar r_n$, and moreover $s = 0$ when $\alpha = 0$ and
$e = n$ when $\beta = 1$.
Thus the scale is the fixed constant multiplying the endpoint error
$\bar r_n$. For example, tracking at scale $K$ means that both endpoint
errors are at most $K\bar r_n$, subject to the exact boundary conditions just
stated.
\end{defin}

\begin{lemma}[Tracking separates interior and exterior kinks]
\label{lem:track}
Fix $C_\star$. There is $n_0$ such that for $n \ge n_0$, if $(s,e]$ tracks
$(\alpha,\beta)$ at scale $C_\star$ then
\[
x_j(s,e) \ge \tfrac12\kappa n \ \text{ for } j \in V(\alpha,\beta),
\qquad
x_j(s,e) \le C_\star\bar r_n + 1 \ \text{ for } j \notin V(\alpha,\beta).
\]
Hence the kinks in $V(\alpha,\beta)$ have signal strength, as defined in
Section~\ref{sup:toolkit}, at least
\[
c_\ell\underline d(\kappa/2)^{3/2}n^{1/2}.
\]
Moreover,
\[
\varrho:=\sum_{j\notin V(\alpha,\beta)}\ell_j(s,e)
=O\{n^{-1/4}(\log n)^{3/4}\}=o(1).
\]
\end{lemma}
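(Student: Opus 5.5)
The plan is to prove the two distance bounds directly from the tracking inequalities and the definition of $\kappa$ in \eqref{eq:kappa}, and then to read off both signal-strength statements from Lemma~\ref{lem:lev} and Corollary~\ref{cor:gap}.

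\emph{Live kinks.} Take $j\in V(\alpha,\beta)$. The definition of $\kappa$ gives $q_j-\alpha\ge\kappa$ and $\beta-q_j\ge\kappa$. Since $\tau_j=\lfloor q_jn\rfloor$ and $|s-n\alpha|\le C_\star\bar r_n$, we get
\[
\tau_j-s\ge n(q_j-\alpha)-1-C_\star\bar r_n\ge\kappa n-1-C_\star\bar r_n .
\]
For the other end, $|e-n\beta|\le C_\star\bar r_n$ gives
\[
e+1-\tau_j\ge n(\beta-q_j)-C_\star\bar r_n\ge\kappa n-C_\star\bar r_n .
\]
Because $\bar r_n=o(n)$, both right-hand sides are at least $\tfrac12\kappa n$ once $n\ge n_0$. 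In particular $\tau_j\in B_{s,e}$, so $x_j(s,e)$ equals this minimum rather than zero. Lemma~\ref{lem:lev} applies since $x_j\ge2$, and Corollary~\ref{cor:gap}(b) with $\kappa/2$ in place of $\kappa$ gives the lower bound $c_\ell\underline d(\kappa/2)^{3/2}n^{1/2}$.

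\emph{Non-live kinks.} Take $j\notin V(\alpha,\beta)$. If $\tau_j\notin B_{s,e}$, then $x_j=0$ by convention and there is nothing to prove. Otherwise $q_j\notin(\alpha,\beta)$, so either $q_j\le\alpha$ or $q_j\ge\beta$; take $q_j\le\alpha$, the other case being symmetric with $e+1-\tau_j$.
\begin{itemize}
\item If $\alpha=0$, then $q_j>0$ makes $q_j\le\alpha$ impossible. The exact boundary condition $s=0$ (and symmetrically $e=n$ when $\beta=1$) is what excludes these edge cases.
\item Otherwise $\tau_j-s\le nq_j-s\le n\alpha-s\le C_\star\bar r_n$. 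Since $x_j\le\tau_j-s$, this gives $x_j\le C_\star\bar r_n$.
\item The extra $+1$ in the statement absorbs the asymmetric edge term. In the case $q_j\ge\beta$ one gets $e+1-\tau_j\le e+1-(n\beta-1)$, which is at most $C_\star\bar r_n+2$. So the constant should be adjusted slightly, or, more cleanly, the $O(1)$ should be absorbed by enlarging $C_\star$ in what follows.
\end{itemize}

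\emph{Sum over non-live kinks.} Apply Corollary~\ref{cor:gap}(a) to each of the at most $N$ non-live kinks, with $C_\star$ replaced by $C_\star+1$ (valid since $\bar r_n\ge1$). Each contributes at most $C_\ell\bar d(C_\star+1)^{3/2}n^{-1/4}(\log n)^{3/4}$, and summing gives $\varrho=O\{n^{-1/4}(\log n)^{3/4}\}=o(1)$, with a constant depending only on $g$ and $C_\star$.

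The main obstacle is bookkeeping rather than analysis. One must check that the boundary conventions in Definition~\ref{def:track} rule out a non-live kink sitting at distance of order $n$ from both window ends, and track the $\pm1$ rounding from the floor and from the hinge asymmetry in $x_j$. Beyond that, the argument is a direct substitution into the results already established.
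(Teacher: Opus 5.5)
Your proof is correct. On live kinks it coincides with the paper's argument, and you additionally check $\tau_j\in B_{s,e}$ explicitly, which the paper leaves implicit.

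On non-live kinks you take a different and more economical route. The paper separates kinks at the node's endpoints, $q_j\in\{\alpha,\beta\}$, from kinks strictly outside. For the latter it uses the structure of $\mathcal{T}(g)$: each endpoint is $0$, $1$, a kink, or an ancestor's split point $c(\cdot,\cdot)$. Combined with the separation terms in \eqref{eq:kappa} and $\delta_{\min}$, this gives $\alpha-q_j\ge\kappa$, whence $\tau_j<s$ and $x_j=0$.

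You instead use the single monotone bound $\tau_j-s\le n\alpha-s\le C_\star\bar r_n$ for every $q_j\le\alpha$, and its mirror image. This handles both sub-cases at once and uses no separation property of the endpoints, so that half of the lemma holds for any pair $(\alpha,\beta)$, not only for nodes. The paper's route buys the stronger statement that strictly exterior kinks lie outside the window altogether. Neither the lemma nor its uses in Lemma~\ref{lem:nodet} and Lemma~\ref{lem:transfer} require that.

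Your off-by-one remark is also right. When $q_j=\beta$, $\tau_j=\lfloor n\beta\rfloor$ can lie almost one below $n\beta$, while $e$ can be as large as $\lfloor n\beta+C_\star\bar r_n\rfloor$. So $e+1-\tau_j$ can reach $\lfloor C_\star\bar r_n\rfloor+2>C_\star\bar r_n+1$, and the stated constant should really be $+2$. This is harmless, though the replacement $C_\star\mapsto C_\star+1$ in Corollary~\ref{cor:gap}(a) then needs $\bar r_n\ge2$ rather than $\bar r_n\ge1$.

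Finally, the exact boundary condition $s=0$ when $\alpha=0$ plays no role here: $q_j\le 0$ is excluded simply because $q_j\in(0,1)$.
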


\begin{proof}
If $j \in V(\alpha,\beta)$ then $q_j - \alpha \ge \kappa$ and
$\beta - q_j \ge \kappa$ by \eqref{eq:kappa}, so
$\tau_j - s \ge \kappa n - C_\star\bar r_n - 1 \ge \tfrac12\kappa n$ for $n$
large, and symmetrically on the right. If $j \notin V(\alpha,\beta)$ and
$q_j \in \{\alpha,\beta\}$ then $x_j \le C_\star\bar r_n + 1$. Otherwise
$q_j < \alpha$ or $q_j > \beta$ strictly, and in the first case
$\alpha - q_j \ge \kappa$, because $\alpha$ is $0$, or one of the $q_l$, or a
value $c(\cdot,\cdot)$ at an ancestor node. The definition \eqref{eq:kappa}
or the minimum kink spacing $\delta_{\min}$ then gives
$\alpha-q_j\ge\kappa$; hence $\tau_j<s$ and
$x_j = 0$. The case $q_j > \beta$ is symmetric. The signal strength statements are
Corollary~\ref{cor:gap}.
\end{proof}

\begin{lemma}[No detection]
\label{lem:nodet}
Fix $C_\star$. There is $n_0$ such that for $n \ge n_0$, on $E_n$: if $(s,e]$
is any window with $x_j(s,e) \le C_\star\bar r_n + 1$ for every $j$, then
$\max_b \mathcal{C}_{s,e}(b) \le o(1) + \sigma(8\log n)^{1/2} < \lambda$, and
the call on $(s,e]$ returns $\emptyset$.
\end{lemma}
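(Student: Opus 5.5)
The plan has two parts: bound the contrast on the window $(s,e]$ itself, then show the same bound holds on every window the call can visit. For the first part, split the sample contrast into signal and noise. For every $b\in B_{s,e}$,
\[
\mathcal{C}_{s,e}(b)=|\langle X,\hat\psi_b\rangle|\le|\langle f,\hat\psi_b\rangle|+|\langle\varepsilon,\hat\psi_b\rangle| .
\]
On $E_n$ the noise term is at most $Z_n=\sigma(8\log n)^{1/2}$ uniformly, by Lemma~\ref{lem:noise}. For the signal term, Lemma~\ref{lem:rep}(b) gives $\max_b|\langle f,\hat\psi_b\rangle|\le\sum_j\ell_j(s,e)$. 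Each kink has $x_j(s,e)\le C_\star\bar r_n+1\le 2C_\star\bar r_n$ for $n$ large, so Corollary~\ref{cor:gap}(a), applied with $2C_\star$ in place of $C_\star$, bounds every $\ell_j(s,e)$ by $O\{n^{-1/4}(\log n)^{3/4}\}$. Kinks with $x_j=0$ contribute nothing. There are at most $N$ kinks, with $N$ fixed, so the sum is $o(1)$ uniformly over such windows. Hence $\max_b\mathcal{C}_{s,e}(b)\le o(1)+Z_n$. Since $\lambda=\Theta Z_n$ with $\Theta>1$, the gap $(\Theta-1)Z_n\to\infty$ exceeds the $o(1)$ term for $n\ge n_0$, where $n_0$ depends only on $C_\star$, $g$, $\sigma$ and $\Theta$. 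This gives the strict inequality $<\lambda$.

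For the second part, read off the return value from Algorithm~\ref{alg:labs}. If $B_{s,e}=\emptyset$, the call returns $\emptyset$ immediately. Otherwise $\mathcal{C}_{s,e}(\hat b)\le\max_b\mathcal{C}_{s,e}(b)<\lambda$, so the proposal test fails and the \texttt{else} branch returns $\emptyset$. No recursion into children occurs, so nothing further needs to be checked.

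The one point needing care is uniformity. The bound must hold simultaneously for all windows satisfying the hypothesis, because the lemma will be applied to data-dependent windows produced by the recursion. This works for three reasons. $E_n$ is defined as a maximum over all $(s,e,b)$. Corollary~\ref{cor:gap} holds for every window and every kink once $n\ge n_0$. The constants $c_\ell$, $C_\ell$, $\bar d$ and $N$ do not depend on the window. The additive ``$+1$'' in the hypothesis is absorbed into the scale constant. No step is a real obstacle; the only thing to state explicitly is that $n_0$ is chosen independently of $(s,e]$.
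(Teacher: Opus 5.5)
Your proposal is correct and follows the paper's proof. Lemma~\ref{lem:rep}(b) with Corollary~\ref{cor:gap}(a) bounds the noiseless contrast by $O\{n^{-1/4}(\log n)^{3/4}\}$, the first event of $E_n$ adds at most $Z_n$, $\Theta>1$ gives the strict inequality, and the case $B_{s,e}=\emptyset$ is handled separately. Your step of absorbing the ``$+1$'' by applying the corollary at scale $2C_\star$ (or $C_\star+1$, since $\bar r_n\ge1$) is a small detail that the paper leaves implicit.
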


\begin{proof}
By Lemma~\ref{lem:rep}(b) and Corollary~\ref{cor:gap}(a),
\[
\max_b \big|\langle f,\hat\psi_b\rangle\big| \;\le\; \sum_j \ell_j(s,e) \;=\; O\big\{n^{-1/4}(\log n)^{3/4}\big\} \;\to\; 0 .
\]
Adding $Z_n$ and recalling that $\lambda = \Theta\sigma(8\log n)^{1/2}$ with $\Theta > 1$ gives the strict inequality for $n$ large, since $(\Theta-1)\sigma(8\log n)^{1/2} \to \infty$.
If $B_{s,e} = \emptyset$ the call returns $\emptyset$ immediately.
\end{proof}

\begin{lemma}[Detection and localization]
\label{lem:loc}
Let $(s,e]$ be a window and put
\[
D_U(w):=\sum_{j\in U}\Delta_j\langle\psi_{\tau_j},w\rangle,
\qquad D_U(b):=D_U(\hat\psi_b),
\qquad
\varrho := \sum_{j\notin U}\ell_j(s,e) ,
\]
for vectors $w\in\mathbb R^W$ and some
$U\subseteq\{1,\ldots,N\}$. Suppose there are $b^* \in B_{s,e}$ and constants $a_0, \varkappa', \varrho_1 > 0$ such that
\begin{enumerate}
\item[(i)] $\min(b^*-s,\, e-b^*) \ge \varrho_1 n$;
\item[(ii)] $A := |D_U(b^*)| \ge a_0 n^{1/2}$, and
$|D_U(b^*)| - |D_U(b)| \ge \varkappa' A \{(b-b^*)/n\}^2$
for every $b \in B_{s,e}$ with $|b-b^*| \ge r_0$, where $r_0 \le C_r n^{1/2}$;
\item[(iii)] $\varrho \le \sigma(\log n)^{1/2}$.
\end{enumerate}
Then there are $n_0$ and $C_1$, depending only on $a_0,\varkappa',\varrho_1,C_r,\sigma,c_0,C_0$, such that for $n \ge n_0$ on
$E_n$,
\[
\max_b \mathcal{C}_{s,e}(b) \ge \tfrac12 A > \lambda,
\qquad
|\hat b - b^*| \le C_1 \bar r_n .
\]
\end{lemma}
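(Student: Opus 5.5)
We split the sample contrast at each $b\in B_{s,e}$ as $\langle X,\hat\psi_b\rangle = D_U(b) + \sum_{j\notin U}\Delta_j\langle\psi_{\tau_j},\hat\psi_b\rangle + \langle\varepsilon,\hat\psi_b\rangle$, using Lemma~\ref{lem:rep}(a). By Lemma~\ref{lem:rep}(b) the middle term is at most $\varrho$ in modulus, and on $E_n$ the noise term is at most $Z_n=\sigma(8\log n)^{1/2}$. \emph{Detection} is then immediate. At $b^*$ we have $\mathcal{C}_{s,e}(b^*)\ge A-\varrho-Z_n\ge a_0n^{1/2}-O\{(\log n)^{1/2}\}$ by (ii) and (iii). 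Since $\lambda=O\{(\log n)^{1/2}\}$, this is at least $\tfrac12 A>\lambda$ for $n\ge n_0$.

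For \emph{localization} we do not use the crude global bound, which gives only $n^{3/4}(\log n)^{1/4}$, as Remark~4.1 notes. Instead we compare increments. Let $\hat b$ be the maximizer and suppose $|\hat b-b^*|=\Delta\ge\max(r_0,C_1\bar r_n)$. We choose the sign $\eta\in\{\pm1\}$ of $D_U(b^*)$. For $n$ large, $|D_U(b^*)|\ge a_0n^{1/2}$ dominates the perturbations, so the signed sample contrast $\eta\langle X,\hat\psi_{b^*}\rangle$ is positive and $\mathcal{C}(b^*)=\eta\langle X,\hat\psi_{b^*}\rangle$. Then
\[
\mathcal{C}(b^*)-\mathcal{C}(\hat b)\;\ge\;\{|D_U(b^*)|-|D_U(\hat b)|\}-\Big|\sum_{j\notin U}\Delta_j\langle\psi_{\tau_j},\hat\psi_{b^*}-\hat\psi_{\hat b}\rangle\Big|-|\langle\varepsilon,\hat\psi_{b^*}-\hat\psi_{\hat b}\rangle| .
\]
This uses $\mathcal{C}(\hat b)\le|\langle X,\hat\psi_{\hat b}\rangle|$ and handles both signs of $\langle X,\hat\psi_{\hat b}\rangle$ through the triangle inequality on each term. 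The first brace is at least $\varkappa' A(\Delta/n)^2$ by (ii). By Lemma~\ref{lem:rep}(b) and the second event in $E_n$, the last two terms together are at most $\{\varrho+\sigma(10\log n)^{1/2}\}\|\hat\psi_{b^*}-\hat\psi_{\hat b}\|$.

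The key step is bounding $\|\hat\psi_{b^*}-\hat\psi_{\hat b}\|$. Lemma~\ref{lem:gramdet}(c) shows the inner product is positive, so $\|\hat\psi_{b^*}-\hat\psi_{\hat b}\|^2=2(1-\rho(b^*,\hat b))$. Lemma~\ref{lem:sep} is applied with $\theta=b^*$, and (i) gives $x\ge\varrho_1 n$. Hence $\|\hat\psi_{b^*}-\hat\psi_{\hat b}\|\le(2C_0)^{1/2}\Delta/(\varrho_1 n)$; we also keep the trivial bound $2$ for large $\Delta$. Combining these, the difference is at least
\[
\varkappa'a_0n^{1/2}(\Delta/n)^2-2\sigma(10\log n)^{1/2}(2C_0)^{1/2}\Delta/(\varrho_1 n),
\]
which is strictly positive once $\Delta> C_1 (n\log n)^{1/2}$ with $C_1=4\sigma(20C_0)^{1/2}/(\varkappa'a_0\varrho_1)$. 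That contradicts maximality of $\hat b$. The case $\Delta<r_0$ is harmless because $r_0\le C_r n^{1/2}\le C_r\bar r_n$, so in every case $|\hat b-b^*|\le\max(C_1,C_r)\bar r_n$.

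The main obstacle is this increment argument. It must use the linear-in-$\Delta$ noise bound, not the constant bound $2Z_n$. Otherwise one recovers only $\Delta\asymp n^{3/4}(\log n)^{1/4}$, as in Remark~4.1. The sign bookkeeping at $b^*$ also needs care; it is where the lower bound $a_0n^{1/2}$ on $A$ is used a second time. The leftmost-maximizer convention plays no role, since the inequality is strict.
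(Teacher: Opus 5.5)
There is a genuine gap in your localization step. The displayed lower bound for $\mathcal{C}(b^*)-\mathcal{C}(\hat b)$, written in terms of the \emph{difference} $\hat\psi_{b^*}-\hat\psi_{\hat b}$, holds only when $\langle X,\hat\psi_{\hat b}\rangle$ has the same sign $\eta$ as $D_U(b^*)$.

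If instead that sign is $-\eta$, then $\mathcal{C}(b^*)-\mathcal{C}(\hat b)=\eta\langle X,\hat\psi_{b^*}+\hat\psi_{\hat b}\rangle$. The remainder and noise terms are then evaluated on the \emph{sum}, and no triangle inequality turns a sum into a difference. Concretely, take $R\equiv0$, $D_U(b^*)=A$, $D_U(\hat b)=-B$ and $\langle\varepsilon,\hat\psi_{b^*}\rangle=\langle\varepsilon,\hat\psi_{\hat b}\rangle=-z<0$. The left side is $A-B-2z$, while your right side is $A-B$.

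In this case the increment event of $E_n$ does not apply. By Lemma~\ref{lem:gramdet}(c), $\|\hat\psi_{b^*}+\hat\psi_{\hat b}\|\ge\sqrt2$, so the noise is controlled only by $2Z_n$. The signal part is only $|D_U(b^*)|-|D_U(\hat b)|\ge\varkappa'A(\Delta/n)^2$, which is of order $n^{-1/2}\log n$ when $\Delta\asymp\bar r_n$. That is far below $2Z_n$, so you reach no contradiction for $\Delta$ between $\bar r_n$ and $n^{3/4}(\log n)^{1/4}$. Your remark on sign bookkeeping concerns only the sign at $b^*$, which is the easy one; the sign at $\hat b$ is what needs proof.

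The missing idea is a preliminary stage, which is what the paper's proof does before its increment step.
\begin{itemize}
\item First use only the pointwise part of $E_n$, which is valid for either sign. From $\mathcal{C}(\hat b)\ge\mathcal{C}(b^*)$ you get $A-|D_U(\hat b)|\le2\varrho+2Z_n$. Hence, by (ii), $\varkappa'A(\Delta/n)^2\le2\varrho+2Z_n$, so $\Delta/n\to0$.
\item Lemma~\ref{lem:sep} and (i) then give $\rho(b^*,\hat b)\to1$. Lemma~\ref{lem:gramdet}(c) gives $\|\hat\psi_{\hat b}-\hat\psi_{b^*}\|^2=2(1-\rho)\to0$.
\item The noiseless signal projected onto the window has norm $O(n^{1/2})$. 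Therefore $\eta\langle X,\hat\psi_{\hat b}\rangle\ge A-\varrho-Z_n-o(A)>0$, so the sign at $\hat b$ is $\eta$.
\end{itemize}
Only after this is your difference-based inequality valid. Your final computation then gives $\Delta\le C_1\bar r_n$ as claimed.
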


\begin{proof}
Define the linear functional
\[
R(w):=\sum_{j\notin U}\Delta_j\langle\psi_{\tau_j},w\rangle.
\]
Then $\langle f,w\rangle=D_U(w)+R(w)$,
$|R(w)|\le\varrho\|w\|$, and
$|R(\hat\psi_b)-R(\hat\psi_{b'})|
\le\varrho\|\hat\psi_b-\hat\psi_{b'}\|$ by Lemma~\ref{lem:rep}(b). Since $A \ge a_0n^{1/2}$ and
$\varrho + Z_n = O\{(\log n)^{1/2}\}$ we get
$\mathcal{C}(b^*) \ge A - \varrho - Z_n \ge \tfrac12 A > \lambda$, the first
claim.

If $|\hat b - b^*| < r_0$ the second claim holds, since $r_0 \le C_r n^{1/2} \le C_r \bar r_n$; assume therefore $|\hat b - b^*| \ge r_0$, so that hypothesis (ii) may be used at $b = \hat b$. Put $\varsigma^* = \mathrm{sign}\, D_U(b^*)$, $u^* = \varsigma^*\hat\psi_{b^*}$,
and $u = \varsigma \hat\psi_{\hat b}$ with $\varsigma$ chosen so that
$\langle X,u\rangle = \mathcal{C}(\hat b) \ge 0$. From
$\langle X,u\rangle \ge \langle X,u^*\rangle$ and Lemma~\ref{lem:noise},
\begin{equation}
\label{eq:basic}
\langle f,u^*\rangle - \langle f,u\rangle \;\le\;
\langle \varepsilon, u - u^*\rangle .
\end{equation}
Since $D_U(u^*)=|D_U(b^*)|$ by the choice of $\varsigma^*$, and
$D_U(u)\le|D_U(\hat b)|$,
\[
\langle f,u^*\rangle-\langle f,u\rangle
\ge |D_U(b^*)|-|D_U(\hat b)|-|R(u^*-u)|
\ge |D_U(b^*)|-|D_U(\hat b)|-\varrho\|u-u^*\|.
\]
Combining this inequality with \eqref{eq:basic} and hypothesis (ii), and
writing $\Delta:=|\hat b-b^*|$, gives
\begin{equation}
\label{eq:gapineq}
\varkappa' A(\Delta/n)^2
\le |D_U(b^*)|-|D_U(\hat b)|
\le \varrho\|u-u^*\|+\langle\varepsilon,u-u^*\rangle.
\end{equation}

The noise term is treated differently before and after Step~2 proves that
$\varsigma=\varsigma^*$. The increment bound is the second event in the
definition of $E_n$; it controls
$\langle\varepsilon,\hat\psi_b-\hat\psi_{b'}\rangle$, a difference of
unsigned contrast vectors. If $\varsigma=-\varsigma^*$, then
$u-u^*=\pm(\hat\psi_{\hat b}+\hat\psi_{b^*})$ is instead a sum, so the second
event does not apply. Before Step~2 excludes this case, we use the pointwise
part of $E_n$, namely its first event, separately at $\hat b$ and $b^*$:
\[
\big| \langle\varepsilon, u-u^*\rangle \big| \;\le\; \big|\langle\varepsilon,u\rangle\big| + \big|\langle\varepsilon,u^*\rangle\big| \;\le\; 2Z_n ,
\]
This inequality holds for either relation between the signs. After Step~2 has
proved $\varsigma=\varsigma^*$,
$u-u^*=\pm(\hat\psi_{\hat b}-\hat\psi_{b^*})$, and the second event in $E_n$
can be used. Lemma~\ref{lem:sep} then makes both terms on the right of
\eqref{eq:gapineq} proportional to $\Delta/n$, whereas its left-hand side is
proportional to $(\Delta/n)^2$. For $\Delta>0$, one factor $\Delta/n$ can then
be canceled. The bound $|R(w)|\le\varrho\|w\|$ holds for every $w$ and does
not depend on the relation between the signs.

\emph{Step 1 (preliminary bound).} Using $\|u-u^*\| \le 2$, the pointwise bound just described, and (iii), $\varkappa' A(\Delta/n)^2 \le 2\varrho + 2Z_n \le 2\sigma(\log n)^{1/2} + 2\sigma(8\log n)^{1/2} \le 8\sigma(\log n)^{1/2}$, so $\Delta/n = O\{n^{-1/4}(\log n)^{1/4}\} \to 0$. The second, increment event in $E_n$ is not used at this stage.

\emph{Step 2 (sign of the inner product).} By Step~1 and Lemma~\ref{lem:sep} with $x = \min(b^*-s,e-b^*) \ge \varrho_1 n$ we get $\rho := \rho(b^*,\hat b) \to 1$. Suppose $\langle u,u^*\rangle = -\rho$. Then $\|u+u^*\|^2 = 2-2\rho \to 0$, while $\|\Pi^\perp_W f\| \le \sum_j \ell_j \le N\bar d\, n^{1/2}/\sqrt2$, so
$|\langle f,u\rangle + \langle f,u^*\rangle| \le Cn^{1/2}\sqrt{2-2\rho} =
o(A)$. Since $\langle f,u^*\rangle \ge A - \varrho \ge \tfrac34 A$, this forces
$\langle f,u\rangle \le -\tfrac12 A < 0$; but
$\langle f,u\rangle \ge \mathcal{C}(\hat b) - Z_n \ge \tfrac12 A - Z_n > 0$, a
contradiction. Hence $\langle u,u^*\rangle = \rho > 0$. This does not yet give $\varsigma = \varsigma^*$: a positive signed correlation could equally arise from opposite multipliers applied to negatively correlated vectors. What excludes that is Lemma~\ref{lem:gramdet}(c), by which $\langle\hat\psi_{\hat b},\hat\psi_{b^*}\rangle > 0$ for any two admissible candidates; since $\langle u,u^*\rangle = \varsigma\varsigma^*\langle\hat\psi_{\hat b},\hat\psi_{b^*}\rangle$, it follows that $\varsigma = \varsigma^*$. Consequently $\|u-u^*\| = \{2(1-\rho)\}^{1/2} \le (2C_0)^{1/2}\varrho_1^{-1}\Delta/n$.

\emph{Step 3 (conclusion).} The signs now agree, so
$u-u^*=\pm(\hat\psi_{\hat b}-\hat\psi_{b^*})$ and the increment event in
$E_n$ gives
$\langle\varepsilon,u-u^*\rangle
\le\sigma(10\log n)^{1/2}\|u-u^*\|$. Substituting this and the bound of
Step~2 into \eqref{eq:gapineq} gives
\[
\varkappa' A\Big(\frac{\Delta}{n}\Big)^2
\le \big\{\varrho+\sigma(10\log n)^{1/2}\big\}
(2C_0)^{1/2}\varrho_1^{-1}\frac{\Delta}{n}.
\]
For $\Delta>0$, canceling one factor of $\Delta/n$ gives
\[
\frac{\Delta}{n}
\le\frac{(2C_0)^{1/2}}{\varrho_1\varkappa' A}
\big\{\varrho+\sigma(10\log n)^{1/2}\big\};
\]
the conclusion is immediate when $\Delta=0$. By (iii), the quantity in braces
is at most $5\sigma(\log n)^{1/2}$, and $A\ge a_0n^{1/2}$. Therefore
$\Delta\le C_1\bar r_n$ with
\[
C_1=\max\{C_r,\ 5(2C_0)^{1/2}\sigma/(\varrho_1\varkappa'a_0)\}.
\]
The term $C_r$ covers the case $\Delta<r_0$ considered before
\eqref{eq:basic}; the second term covers $\Delta\ge r_0$.
\end{proof}

\begin{remark}
\label{rem:fixedpoint}
Within Lemma~\ref{lem:loc}, $\varrho$ is fixed once the window $(s,e]$ and the
set $U$ are fixed; it does not depend on the localization error
$\Delta=|\hat b-b^*|$. Consequently, the final bound on $\Delta$ cannot be
substituted into the definition
$\varrho=\sum_{j\notin U}\ell_j(s,e)$ to obtain a smaller value of $\varrho$.
The rate $\bar r_n$ follows directly because both
$R(u^*-u)$ and the noise increment
$\langle\varepsilon,u-u^*\rangle$ are bounded by a constant times
$\|u-u^*\|$, and hence by a constant times $\Delta/n$. By contrast, the
noiseless contrast loss on the left of \eqref{eq:gapineq} is bounded below by
a constant times $(\Delta/n)^2$.

If $|R(u^*)-R(u)|$ were instead bounded by $2\varrho$, without its dependence
on $u^*-u$, the resulting calculation would give only
$\Delta=O\{n^{5/8}(\log n)^{3/8}\}$. Corollary~\ref{cor:gap} then shows that a
kink at this distance from the child-window edge created by its own detection
has signal strength of order $n^{-1/16}$, apart from logarithmic factors. This
does not tend to zero quickly enough for the induction in
Proposition~\ref{prop:track_prop}.
\end{remark}

\begin{lemma}[Transfer to the sample window]
\label{lem:transfer}
Let $(\alpha,\beta)$ be a fixed rescaled window with
$\beta-\alpha\ge\kappa$ and $V:=V(\alpha,\beta)\ne\emptyset$, at which the
conclusions of Lemma~\ref{lem:conseq} hold with constants
$\varkappa,\varrho_0$ and maximizer $c$. Let $(s,e]$ track this window at scale
$C_\star$ in the sense of Definition~\ref{def:track}. There is $n_0$ such that
for $n\ge n_0$, hypotheses (i)--(iii) of Lemma~\ref{lem:loc} hold with
\[
U=V,\quad \varrho_1=\varrho_0\kappa/4,\quad
\varkappa'=\varkappa/(32\Lambda_{\max}),\quad
a_0=\Lambda_{\min}/4,\quad r_0=C_rn^{1/2}
\]
for a constant $C_r$. Moreover, $b^*$ may be chosen so that
\[
|b^*-nc|\le(C'_0+C'_LC_\star)\bar r_n.
\]
Here $C'_0$ does not depend on $n$ or $C_\star$, although it may depend on the
signal and on the fixed window through the constants of
Lemma~\ref{lem:conseq}. The additive term $C'_0\bar r_n$ cannot in general be
omitted. At the root, $C_\star=0$ because $(0,n]$ tracks $(0,1)$ exactly, but
$b^*$ is an integer whereas $nc$ need not be an integer. This conclusion
applies to every node of $\mathcal T(g)$ under \assnondeg{}. It also applies,
in Section~\ref{sup:gridtheory}, to every grid sub-window for which
Assumption~\ref{ass:gnondeg} states that the absolute population profile has a
unique maximizer with a strictly negative second derivative there.
\end{lemma}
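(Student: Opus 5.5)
The plan is to pass from the sample window $(s,e]$ to the rescaled window $(\alpha_n,\beta_n):=(s/n,e/n)$, which by tracking satisfies $|\alpha_n-\alpha|+|\beta_n-\beta|\le 2C_\star\bar r_n/n\to0$. For $n$ large this is below $\varsigma_0$, so Lemma~\ref{lem:conseq}(b) applies to the perturbed profile $\bar D_{\alpha_n\beta_n}[V]$. It yields a unique maximizer $c'$ with $|c'-c|\le 2C_LC_\star\bar r_n/n$, a maximum value of at least $\Lambda(\alpha,\beta)/2\ge\Lambda_{\min}/2$, and a quadratic gap $\tfrac12\varkappa(v-c')^2$. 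It also gives the relative distance of $c'$ from both ends, at least $\tfrac12\varrho_0(\beta_n-\alpha_n)\ge\tfrac14\varrho_0\kappa$ once $\beta_n-\alpha_n\ge\kappa/2$. The next step is to compare the discrete signal $D_V(b)$ on $(s,e]$ with $n^{1/2}\bar D_{\alpha_n\beta_n}[V](b/n)$. Here I will use the Riemann estimate from Section~\ref{sup:discproof}: Lemma~\ref{lem:weighted}(c) applies with a $\gamma$ depending only on $g$, because by Lemma~\ref{lem:track} every $j\in V$ sits at distance at least $\tfrac12\kappa n$ from both ends. The resulting error is $|D_V(b)-n^{1/2}\bar D_{\alpha_n\beta_n}[V](b/n)|\le C(m/n)x_b^{-1/2}\le C$ uniformly over $b\in B_{s,e}$, since $x_b\ge2$.

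Next I choose $b^*:=\arg\max_{b\in B_{s,e}}|D_V(b)|$. Hypothesis (iii) is the last clause of Lemma~\ref{lem:track}, since $\varrho=o(1)$. For (ii), evaluating at $b=\lfloor nc'\rceil$ gives $A=|D_V(b^*)|\ge n^{1/2}\Lambda_{\min}/2-O(1)\ge a_0 n^{1/2}$, taking $a_0=\Lambda_{\min}/4$. The quadratic gap transfers as follows:
\[
|D_V(b^*)|-|D_V(b)|\ge n^{1/2}\big\{\tfrac12\varkappa(b/n-c')^2\big\}-O(1)-n^{1/2}\cdot O(n^{-1}).
\]
The $O(n^{-1})$ term comes from rounding $nc'$ and is controlled by the bounded second derivative of the profile near $c'$. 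Comparing this with $\varkappa' A\{(b-b^*)/n\}^2$, where $A\le n^{1/2}\Lambda_{\max}$, I need the $O(1)$ loss to be absorbed. That is where $r_0=C_rn^{1/2}$ enters: for $|b-b^*|\ge C_rn^{1/2}$ the main term is of order $n^{1/2}(b-b^*)^2/n^2\ge C_r^2\varkappa n^{-1/2}\cdot$const, which is still not large.

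This is the main obstacle. The $O(1)$ discretization error is not small relative to $n^{1/2}(\Delta/n)^2$ when $\Delta\asymp n^{1/2}$. I would therefore recompute more carefully. Since $(b/n-c')^2\ge\tfrac14(b-b^*)^2/n^2-(b^*/n-c')^2$, I first need $|b^*-nc'|$ small, and that is obtained from the same inequality, giving $|b^*-nc'|=O(n^{3/4})$. This is too crude to yield the bound $C'_0\bar r_n$ directly. A sharper route is to use that the error in Lemma~\ref{lem:weighted}(c) is a smooth function $E(v)$ of $v$ with $E=O(m^{-1})$ and derivative $O(m^{-1})$ in the interior. Differences of the discrete profile then equal differences of the continuum one up to $O(n^{1/2}\cdot n^{-1}|b-b'|/n)$. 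Using this, the gap inequality holds with constant $\varkappa/4$ for all $|b-b^*|\ge C_rn^{1/2}$, and $|b^*-nc'|\le C'_0\bar r_n$; in fact it comes out of order $O(1)$, which is comfortably within that bound. Finally, with $\varkappa'=\varkappa/(32\Lambda_{\max})$ the factor $A\le\Lambda_{\max}n^{1/2}$ is absorbed.

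It remains to check (i): $\min(b^*-s,e-b^*)\ge n\{\tfrac14\varrho_0\kappa\}-O(\bar r_n)\ge\varrho_1 n$ with $\varrho_1=\varrho_0\kappa/4$, after slightly shrinking constants. For the location bound, combine $|b^*-nc'|\le C'_0\bar r_n$ with $|nc'-nc|\le 2C_LC_\star\bar r_n$ to obtain $|b^*-nc|\le(C'_0+C'_LC_\star)\bar r_n$, where $C'_L=2C_L$.
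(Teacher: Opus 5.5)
Your proposal is correct in outline, but it resolves the discretization obstacle with a different device from the paper's.

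\textbf{The paper's route.} The paper uses only the \emph{size} of the error: the uniform $O(1)$ bound of \eqref{eq:riemann} and the interior bound $O(n^{-1/2})$ of \eqref{eq:riemann2}. Both pass to absolute values by the reverse triangle inequality. The argument has three steps. First, $b^*$ is interior: a non-interior candidate has a continuum deficit of fixed size $\delta_0$, and $n^{1/2}\delta_0$ beats the $O(1)$ error. Your crude $O(n^{3/4})$ localization serves the same purpose. Second, $b^*$ is compared with the integer $b_n$ nearest $nc_n$ (your $nc'$), using \eqref{eq:riemann2} at both points. This gives $\tfrac12\varkappa n^{1/2}(b^*/n-c_n)^2\le O(n^{-1/2})$, hence $|b^*-nc_n|\le C_bn^{1/2}$. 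Third, $r_0=2C_bn^{1/2}$, with $C_b$ enlarged so that $\tfrac12\varkappa n^{1/2}(b/n-c_n)^2\ge\tfrac12\varkappa C_b^2n^{-1/2}$ dominates the two $O(n^{-1/2})$ errors.

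So the interior bound $E=O(m^{-1})$ that you already quote removes the obstacle by itself. At $\Delta\asymp C_rn^{1/2}$ the main term, of order $C_r^2n^{-1/2}$, only has to beat $O(n^{-1/2})$, not $O(1)$.

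\textbf{Your route.} The Lipschitz argument gives more: $|b^*-nc'|=O(1)$, and a gap valid down to $|b-b^*|$ of constant order. But it rests on an estimate the paper neither states nor needs, namely that the error has $v$-derivative $O(1/m)$ in the interior. This is true. It follows from the explicit factorizations of $N_m-N$ and $G_m-G$ in the proof of Lemma~\ref{lem:weighted}, together with boundedness of $\partial_u\partial_v\{G/\sqrt N\}$ to handle the $O(1/m)$ shift of $\tau_j$ from $nq_j$. It still has to be proved, though.

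You also need the case split you leave implicit. The derivative bound degenerates near the window ends. Turning signed differences into differences of absolute values also requires $\bar D_{\alpha_n\beta_n}[V]$ to keep one sign. So only candidates within a fixed $\epsilon n$ of $nc'$ can be treated this way. The rest must be handled by the crude $O(1)$ bound against an order-$n^{1/2}$ deficit.

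\textbf{Two small corrections.} First, the $O(1)$ claim needs the rounding loss to be $n^{1/2}O(n^{-2})$, which follows from the vanishing first derivative at $c'$. You wrote $n^{1/2}O(n^{-1})$; that would still give $O(n^{1/2})$, which is enough for the lemma. Second, for (i) use $\beta_n-\alpha_n\ge\kappa-o(1)$ instead of $\kappa/2$. Then $\tfrac12\varrho_0\kappa n-o(n)\ge\tfrac14\varrho_0\kappa n$, and $\varrho_1=\varrho_0\kappa/4$ holds without shrinking.
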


\begin{proof}
Write $\alpha_n = s/n$, $\beta_n = e/n$, so
$|\alpha_n-\alpha| + |\beta_n-\beta| \le 2C_\star\bar r_n/n \to 0$. This
quantity is less than $\varsigma_0$ for all sufficiently large $n$, so
Lemma~\ref{lem:conseq}(b) applies. By
Lemma~\ref{lem:rep}(a) restricted to $V$, together with Lemma~\ref{lem:gram}
and Lemma~\ref{lem:weighted}, proved in Section~\ref{sup:discproof},
\begin{equation}
\label{eq:riemann}
\Big|D_V(b)-n^{1/2}\bar D_{\alpha_n\beta_n}[V](b/n)\Big|
\le C\frac{m}{n}x_b(s,e)^{-1/2}
\qquad\text{for every }b\in B_{s,e},
\end{equation}
where $x_b(s,e)=\min(b-s,e+1-b)$. Two consequences will be used. The error
on the left of \eqref{eq:riemann} is $O(1)$ uniformly in $b$. For each fixed
$\eta>0$, it also satisfies the sharper interior bound
\begin{equation}
\label{eq:riemann2}
\sup_{b:\,x_b(s,e)\ge\eta m}
\Big|D_V(b)-n^{1/2}\bar D_{\alpha_n\beta_n}[V](b/n)\Big|
=O(n^{-1/2})
\end{equation}
when $m\asymp n$.

The quantities in \eqref{eq:riemann} are algebraic functions of $m$ and the
integer locations: they are formed from these arguments by finitely many
arithmetic operations and square roots. The square roots arise from the
normalization by $\|\psi_b\|$. Section~\ref{sup:discproof} proves the estimate
from Lemma~\ref{lem:weighted}. The error on the left of \eqref{eq:riemann}
accounts for two differences: the discrete Gram entries differ from $m^3$
times the continuum entries $G(u,v)$ and $N(v)$ defined in
Section~\ref{sup:discproof}, and the integer location
$\tau_j=\lfloor nq_j\rfloor$ differs from $nq_j$ by at most one. There is no
additional error from the slope jumps, because the triangular-array regime of
Section~4.1 of the main paper, meaning the sequence of signal models indexed by
$n$, prescribes $\Delta_j=d_j/n$ exactly.

For a more general sequence of signal models, the limits
$\tau_j/n\to q_j$ and $n\Delta_j\to d_j$ alone do not imply an $o(1)$ error in
\eqref{eq:riemann}. Moving a kink by $h$ changes $D_V$ by order
$hm^{1/2}/n$, and changing $n\Delta_j$ by $\epsilon$ changes it by order
$\epsilon n^{1/2}$. Thus an $o(1)$ error requires
$|\tau_j-nq_j|=o(n^{1/2})$ and
$|n\Delta_j-d_j|=o(n^{-1/2})$. A location error of order $\bar r_n$ does not
satisfy the first requirement and therefore does not ensure an $o(1)$
discretization error.

The dependence on $x_b$ in \eqref{eq:riemann} is necessary: the rate $O(n^{-1/2})$ does not hold uniformly, and fails at candidates within $O(1)$ of an edge, where the discrete and continuum norms differ by a constant factor rather than by a relative $O(1/m)$. At $b = s+2$, \eqref{eq:normexact} gives $\|\psi_b\|^2 = (m-1)(m-2)/\{m(m+1)\} \to 1$, whereas $m^{3/2}\|\Psi_{2/m}\| \to 2^{3/2}3^{-1/2}$. The form \eqref{eq:riemann} is sharp.
Let $c_n$ maximize $|\bar D_{\alpha_n\beta_n}[V]|$ and $b^*$ maximize
$|D_V|$ over $B_{s,e}$. Note that $\bar D_{\alpha\beta}$ is defined through the unnormalized inner product of $L^2(\alpha,\beta)$, so it already includes a factor $(\beta-\alpha)^{3/2}$ and the normalization in \eqref{eq:riemann} is the same for every window; this is what makes contrasts computed on windows of different lengths directly comparable. Then \eqref{eq:riemann} and Lemma~\ref{lem:conseq}(b) give $|D_V(b^*)| \ge \tfrac14\Lambda_{\min}n^{1/2}$. Fix $\eta := \varrho_0/4$ and call $b$ \emph{interior} if it is at distance at least $\eta m$ from both ends of the window; on interior $b$ the error is $O(n^{1/2}m^{-1}) = O(n^{-1/2})$ by \eqref{eq:riemann2}, and on all $b$ it is $O(n^{1/2}m^{-1/2}) = O(1)$ by \eqref{eq:riemann}.

\emph{The discrete maximizer is close to $nc_n$.} Let $b_n$ be the integer
nearest $nc_n$, which is interior. By \eqref{eq:riemann2} and the quadratic
bound of Lemma~\ref{lem:conseq}(b) for the profile
$|\bar D_{\alpha_n\beta_n}[V]|$ at its maximizer $c_n$,
\[
|D_V(b_n)|=n^{1/2}\max_w|\bar D_{\alpha_n\beta_n}[V](w)|+o(1).
\]
First, $b^*$ is interior. A non-interior $b$ lies within
$\eta(\beta-\alpha)$ of an endpoint, whereas $c_n$ is at least
$\tfrac12\varrho_0(\beta-\alpha)$ from each endpoint by the final inequality
in Lemma~\ref{lem:conseq}(b). Hence
$|b/n-c_n|\ge\tfrac14\varrho_0(\beta-\alpha)$. Comparison must be with the
perturbed maximum $\max_w|\bar D_{\alpha_n\beta_n}[V](w)|$, not with
$\Lambda(\alpha,\beta)$: the rescaled endpoints move by $O(\bar r_n/n)$, so
the two maxima differ by $O(\bar r_n/n)$. After multiplication by $n^{1/2}$,
this difference is $O\{(\log n)^{1/2}\}$ rather than $O(1)$.

For a non-interior $b$, the deficit
\[
\max_w|\bar D_{\alpha_n\beta_n}[V](w)|
-|\bar D_{\alpha_n\beta_n}[V](b/n)|
\]
is bounded below by a fixed constant $\delta_0>0$, uniformly for large $n$.
This follows from Lemma~\ref{lem:conseq}(b), the fixed lower bound on
$|b/n-c_n|$, and continuity in the endpoints. Equation~\eqref{eq:riemann}
therefore gives
\[
|D_V(b)|\le n^{1/2}
\{\max_w|\bar D_{\alpha_n\beta_n}[V](w)|-\delta_0\}+O(1),
\]
which is smaller than $|D_V(b^*)|$ for large $n$. Second, compare $b^*$ with
$b_n$. Both are interior, so the $O(n^{-1/2})$ approximation error in
\eqref{eq:riemann2} applies at both points. Hence
\[
\tfrac12\varkappa\, n^{1/2}\big( b^*/n - c_n \big)^2 \;\le\; |D_V(b_n)| - |D_V(b^*)| + O(n^{-1/2}) \;\le\; O(n^{-1/2}),
\]
since $b^*$ maximizes $|D_V|$. It follows that
$|b^*-nc_n|\le C_b n^{1/2}$ for a constant $C_b<\infty$.

\emph{Transfer of the profile deficit to $b^*$.} Let
$r_0:=2C_bn^{1/2}=:C_rn^{1/2}$ and suppose $|b-b^*|\ge r_0$. Then
$|b-nc_n|\ge|b-b^*|-C_bn^{1/2}\ge\tfrac12|b-b^*|$, so
$(b/n-c_n)^2\ge\tfrac14\{(b-b^*)/n\}^2$. For interior $b$, the continuum
profile deficit provided by Lemma~\ref{lem:conseq}(b), after multiplication by
$n^{1/2}$, is at least
$\tfrac12\varkappa n^{1/2}(b/n-c_n)^2$. By choosing $C_b$ large enough, this
lower bound exceeds twice the two $O(n^{-1/2})$ discretization errors from
\eqref{eq:riemann2}. For non-interior $b$, the continuum profile deficit is of
order $n^{1/2}$ and exceeds the $O(1)$ error in \eqref{eq:riemann}. Thus, in
both ranges,
\[
|D_V(b^*)| - |D_V(b)| \;\ge\; \tfrac{\varkappa}{16}\, n^{1/2} \big\{(b-b^*)/n\big\}^2
\qquad\text{whenever } |b-b^*| \ge r_0 ,
\]
which is hypothesis (ii) in the form required by Lemma~\ref{lem:loc} with
$\varkappa'=\varkappa/(32\Lambda_{\max})$. The additional factor two absorbs
the $\{1+o(1)\}$ term in
$|D_V(b^*)|\le\Lambda_{\max}n^{1/2}\{1+o(1)\}$. Finally,
Lemma~\ref{lem:conseq}(b) gives
$|c_n-c|\le2C_LC_\star\bar r_n/n$, and hence
\[
|b^*-nc|\le C_bn^{1/2}+2C_LC_\star\bar r_n
\le(C'_0+C'_LC_\star)\bar r_n,
\]
with $C'_0:=C_b$ and $C'_L:=2C_L$. Here we used
$n^{1/2}\le\bar r_n$. Also,
$|b^*-nc_n|=O(n^{1/2})=o(m)$, while Lemma~\ref{lem:conseq}(b) places $c_n$
at least $\tfrac12\varrho_0(\beta_n-\alpha_n)$ from each endpoint. Therefore
$\min(b^*-s,e-b^*)\ge\tfrac14\varrho_0m
\ge\tfrac14\varrho_0\kappa n$ for large $n$, which verifies hypothesis (i).
Hypothesis (iii) follows from Lemma~\ref{lem:track}.
\end{proof}

\begin{defin}[Node tracking constants]
\label{def:radii}
For $v\in\mathcal T(g)$, let $C_{1,v}$ be the localization constant provided
by Lemma~\ref{lem:loc} at $v$. Let $C'_0$ and $C'_L$ be the largest of the
corresponding constants in Lemma~\ref{lem:transfer} over the finitely many
non-empty nodes. Set $K_v:=0$ at the root and, for each child $v'$ of $v$, set
\[
K_{v'} \;:=\; \max\big\{ K_v, \ C_{1,v} + C'_0 + C'_L K_v \big\} ,
\qquad\text{and}\qquad
C_\star \;:=\; \max\Big\{ \max_{v \in \mathcal{T}(g)} K_v, \ C_{\mathrm{rt}} \Big\} ,
\]
where $C_{\mathrm{rt}}$ is the largest localization constant provided by
Lemma~\ref{lem:loc} at the finitely many Case~B re-test windows in the proof of
\proppop{}. On each such window, $U$ is a singleton, meaning that
$U=\{j^*\}$ contains only the unique kink strictly inside the re-test window.
The quantities $K_v\bar r_n$ are called the node tracking radii because they
bound the two endpoint errors of a sample window associated with node $v$.
Thus $K_v$ is the dimensionless tracking constant and $K_v\bar r_n$ is the
corresponding radius. Both maxima are over finite sets, so
$C_\star<\infty$ by \assterm{}.

The re-test constant is included separately because a Case~B re-test window is
generally not a node of $\mathcal T(g)$, so its localization constant is not
among the $C_{1,v}$. Also $K_v\le C_\star$ for every $v$, and $C_{1,v}$ does
not depend on a tracking constant $K$. Indeed, Lemma~\ref{lem:transfer} shows
that the parameters $a_0,\varkappa',\varrho_1,r_0$ used in
Lemma~\ref{lem:loc} are determined by
$\kappa,\varkappa,\Lambda_{\min},\Lambda_{\max},\varrho_0$. The tracking
constant enters only through the requirement
$2K\bar r_n/n<\varsigma_0$, which holds for all large $n$ for each of the
finitely many constants $K_v$.
\end{defin}

The constants are allowed to increase with depth. Suppose a sample window has
endpoint errors at most $K_v\bar r_n$ relative to node $v$. By
Lemma~\ref{lem:transfer}, the maximizer of its noiseless discrete profile is at
most $(C'_0+C'_LK_v)\bar r_n$ from the population split
$n c(\alpha,\beta)$. Lemma~\ref{lem:loc} places the maximizer of the observed
contrast at most a further $C_{1,v}\bar r_n$ away. Each child window of the
sample recursion retains one endpoint of its parent and uses this observed
maximizer as its other endpoint. Its two endpoint errors are therefore bounded
by $K_{v'}\bar r_n$, with $K_{v'}$ as defined above. The constants $K_v$ need
not decrease from parent to child. Finiteness of $\mathcal T(g)$ is enough:
the recursively defined constants remain finite and their maximum is
$C_\star$.

\begin{prop}[Recursion on tracking windows]
\label{prop:track_prop}
Assume \assterm{} and \assnondeg{}. There is $n_0$ such
that for $n \ge n_0$, on $E_n$, the following holds for every
$v = (\alpha,\beta) \in \mathcal{T}(g)$ and every window $(s,e]$ tracking it at
scale $K_v$: the call $\textnormal{\textsc{LABS}}(X,s,e,\lambda)$ returns a set
$\hat{\mathcal{T}}$ with $|\hat{\mathcal{T}}| = |V(\alpha,\beta)|$ and, writing
$V(\alpha,\beta) = \{j_1 < \cdots < j_k\}$ and the elements of
$\hat{\mathcal{T}}$ in increasing order as
$\hat\tau^{(1)} < \cdots < \hat\tau^{(k)}$,
\[
|\hat\tau^{(i)} - \tau_{j_i}| \;\le\; C_\star\bar r_n, \qquad i = 1,\ldots,k .
\]
\end{prop}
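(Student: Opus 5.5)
We will prove the statement by induction on the height of the subtree of $\mathcal{T}(g)$ rooted at $v=(\alpha,\beta)$, mirroring the proof of \proppop{}. Throughout we work on $E_n$ and take $n$ large enough that the finitely many applications of Lemmas~\ref{lem:track}, \ref{lem:nodet}, \ref{lem:loc} and \ref{lem:transfer} hold simultaneously. This is possible because $\mathcal{T}(g)$ is finite, the Case~B re-test windows are finite, and all tracking scales are bounded by $C_\star$.

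\emph{Base case, $V(\alpha,\beta)=\emptyset$.} By Lemma~\ref{lem:track}, every kink has $x_j(s,e)\le K_v\bar r_n+1$. Lemma~\ref{lem:nodet} then shows that the call returns $\emptyset$.

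\emph{Inductive step, $V=V(\alpha,\beta)\neq\emptyset$.} Let $k=|V|$.
\begin{enumerate}
\item[(1)] Apply Lemma~\ref{lem:transfer}, and then Lemma~\ref{lem:loc} with $U=V$. This shows that the proposal exceeds $\lambda$ and that $|\hat b-nc(\alpha,\beta)|\le (C_{1,v}+C'_0+C'_LK_v)\bar r_n$. Each child window keeps one endpoint of $(s,e]$ and has $\hat b$ as its other endpoint. It therefore tracks the corresponding population child $(\alpha,c)$ or $(c,\beta)$ at scale $K_{v'}$ as in Definition~\ref{def:radii}. The boundary conditions $s=0$ and $e=n$ are inherited.
\item[(2)] By the induction hypothesis, $\mathcal{L}$ and $\mathcal{R}$ contain exactly the live kinks of the child nodes, each within $C_\star\bar r_n$ of its true location.
\end{enumerate}
We then split into the cases of \proppop{}.

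\emph{Case~A: $c\notin\{q_l\}$.} Here $|V_L|+|V_R|=k$, so the look-ahead branch is entered. The re-test window $(s',e']$ has endpoints that are either consecutive estimated kinks or an original endpoint $s$ or $e$ together with the extreme estimated kink. We must show that every kink has $x_j(s',e')\le C\bar r_n+1$.
\begin{itemize}
\item Kinks outside $(\alpha,\beta)$ lie at distance of order $\kappa n$ outside, so they have $x_j=0$ or lie near $s$ or $e$.
\item Kinks in $V$ equal to $\tau_{j}$ for an endpoint kink lie within $C_\star\bar r_n$ of that endpoint.
\item All other kinks of $V$ lie outside the window by a margin of order $\kappa n$.
\end{itemize}
Lemma~\ref{lem:nodet} then shows that the re-test fails, so the call returns $\mathcal{L}\cup\mathcal{R}$, which has $k$ elements with the required accuracy.

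\emph{Case~B: $c=q_{j^*}$.} The kink $\tau_{j^*}$ lies within $O(\bar r_n)$ of $\hat b$, so it is an edge kink in both children. It has signal strength $o(1)$ there by Corollary~\ref{cor:gap}(a), and is correctly excluded by the children. The children return the $k-1$ remaining kinks. The re-test window $(s',e']$ tracks the Case~B window $(\alpha',\beta')$ with endpoint errors at most $C_\star\bar r_n$ (exactly at an original endpoint $s$ or $e$). On it $U=\{j^*\}$, and Lemma~\ref{lem:rep}(c) together with Lemma~\ref{lem:sep}, in its continuum form via Corollary~\ref{cor:sepcont}, supplies hypothesis (ii). A one-kink analogue of Lemma~\ref{lem:transfer} applies with $b^*=\tau_{j^*}$ exactly, so no $C'_0$ term arises. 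Hypothesis (iii) follows from Lemma~\ref{lem:track}, and the lower bound on signal strength is given by $\Lambda_{\mathrm{det}}$. Lemma~\ref{lem:loc} therefore yields acceptance with $|\hat b'-\tau_{j^*}|\le C_{\mathrm{rt}}\bar r_n\le C_\star\bar r_n$. The ordering of the returned set matches that of $V$ because all errors are $o(n)$ while the kinks are $\kappa n$ apart.

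\emph{Main obstacle.} We expect the hardest part to be the bookkeeping that shows each child window, and each re-test window, genuinely tracks a fixed population window with a scale not exceeding $C_\star$. This needs particular care in Case~B. There the re-test window is not a node of $\mathcal{T}(g)$, so the tracking scale must be taken from the children's localization errors, which are bounded by $C_\star$. Uniform applicability of Lemma~\ref{lem:loc} must hold at that scale, which requires $2C_\star\bar r_n/n<\varsigma_0$ for large $n$. The step from the $n$-dependence to a single $n_0$ is legitimate because only finitely many fixed windows are involved.
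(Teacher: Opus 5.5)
Your proposal follows the paper's proof: induction on subtree height, Lemma~\ref{lem:nodet} at empty nodes, Lemmas~\ref{lem:transfer} and~\ref{lem:loc} for the proposal and for the child tracking scales of Definition~\ref{def:radii}, the induction hypothesis for $\mathcal L$ and $\mathcal R$, and then rejection of the re-test by Lemma~\ref{lem:nodet} in Case~A and acceptance by Lemma~\ref{lem:loc} with $U=\{j^*\}$, $b^*=\tau_{j^*}$ in Case~B. Two small corrections. For $k=1$ (where $c=q_j$) your Case~B text does not apply: both children are empty, no re-test is made, and the call returns $\{\hat b\}$, so you must read its localization off step~(1), whereas the paper treats $k=1$ separately with $b^*=\tau_j$. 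In Case~B, and for $k=1$, you need neither the continuum detour nor a one-kink analogue of Lemma~\ref{lem:transfer}, because Lemma~\ref{lem:rep}(c) and the discrete Lemma~\ref{lem:sep} give hypothesis~(ii) directly: $|D_U(\tau_{j^*})|-|D_U(b)|=\ell_{j^*}\{1-\rho(\tau_{j^*},b)\}\ge c_0\,\ell_{j^*}\{(b-\tau_{j^*})/n\}^2$.
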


\begin{proof}
The radii of Definition~\ref{def:radii} are fixed first and $n_0$ afterwards. We induct on the height of the subtree rooted at $(\alpha,\beta)$, finite by
\assterm{}. Write $V := V(\alpha,\beta)$ and $k := |V|$.

\emph{Case $k=0$.} By Lemma~\ref{lem:track} every kink has
$x_j(s,e) \le C_\star\bar r_n + 1$, so Lemma~\ref{lem:nodet} returns
$\emptyset$.

\emph{Case $k=1$, $V = \{j\}$.} Here $c(\alpha,\beta) = q_j$ exactly, and it is
simplest to apply Lemma~\ref{lem:loc} directly with $U = \{j\}$ and
$b^* = \tau_j$: (i) holds with $\varrho_1 = \kappa/2$ by
Lemma~\ref{lem:track}; (ii) holds with $A = \ell_j \ge c_\ell\underline d(\kappa/2)^{3/2}n^{1/2}$ and $\varkappa' = c_0$, since by
Lemma~\ref{lem:rep}(c) and Lemma~\ref{lem:sep},
\[
|D_U(\tau_j)| - |D_U(b)| = \ell_j\{1-\rho(\tau_j,b)\}
\ \ge\ \ell_j c_0 \min\{ ((b-\tau_j)/x_j)^2, 1\} \ \ge\ \ell_j c_0 \big((b-\tau_j)/n\big)^2, \] using $x_j \le n$ and $|b-\tau_j| \le n$; and (iii) is Lemma~\ref{lem:track}.
Lemma~\ref{lem:loc} now shows that the proposal statistic
$\max_{b\in B_{s,e}}\mathcal C_{s,e}(b)$ exceeds $\lambda$, so the proposal
test on $(s,e]$ is accepted, and
$|\hat b-\tau_j|\le C_{1,v}\bar r_n$. For the left child, the endpoint $s$ is
unchanged from the parent and still satisfies
$|s-n\alpha|\le K_v\bar r_n$, while its new right endpoint satisfies
$|\hat b-nq_j|\le C_{1,v}\bar r_n$. Similarly, the right child retains $e$
and has new left endpoint $\hat b$. Hence $(s,\hat b]$ and $(\hat b,e]$ track
$(\alpha,q_j)$ and $(q_j,\beta)$ at scale
$\max\{K_v,C_{1,v}\}\le K_{v'}$. Neither population child contains a kink
strictly inside it, so both live-kink sets are empty and both recursive calls
return $\emptyset$. The look-ahead branch is not entered, and the parent call
returns $\{\hat b\}$, with $|\hat b-\tau_j|\le C_\star\bar r_n$.

\emph{Case $k \ge 2$.} Lemma~\ref{lem:transfer}, applied at scale $K_v$,
provides a $b^*$ with
$|b^*-n\,c(\alpha,\beta)|\le(C'_0+C'_LK_v)\bar r_n$ and verifies hypotheses
(i)--(iii) of Lemma~\ref{lem:loc}, which then gives
$|\hat b-b^*|\le C_{1,v}\bar r_n$. Hence
\[
\big| \hat b - n\,c(\alpha,\beta) \big| \;\le\; \big( C_{1,v} + C'_0 + C'_L K_v \big)\bar r_n \;\le\; K_{v'}\bar r_n ,
\]
and since the other endpoint of each child is inherited unchanged, at scale $K_v \le K_{v'}$, the children $(s,\hat b]$ and $(\hat b,e]$ track $(\alpha,c)$ and $(c,\beta)$ at scale $K_{v'}$. The induction hypothesis applies to both, giving $|\mathcal{L}| = |V_L|$ and $|\mathcal{R}| = |V_R|$ with each returned point within $C_\star\bar r_n$ of its kink. We follow the two cases of \proppop{}.

\emph{Case A: $c \notin \{q_l\}$.} Then $|V_L|+|V_R| = k \ge 2$ and the
look-ahead branch is entered. Every kink has $x_j(s',e') \le C_\star\bar r_n+1$:
if both children are non-empty, $s'$ and $e'$ lie within $C_\star\bar r_n$ of
$\tau_{j_a}$ and $\tau_{j_b}$ for consecutive kinks $q_{j_a} < q_{j_b}$ of $g$,
so no kink lies strictly between and all others are outside $(s',e']$ by the
$\kappa$-separation of \eqref{eq:kappa}; if $\mathcal{L} = \emptyset$ then
$s' = s$ and $e'$ is within $C_\star\bar r_n$ of $\tau_{j_1}$, and no kink lies
in $(\alpha, q_{j_1})$; the case $\mathcal{R} = \emptyset$ is symmetric. By
Lemma~\ref{lem:nodet}, the maximum contrast in the re-test window is below
$\lambda$. The re-test therefore rejects the parent candidate and the call returns
$\mathcal{L}\cup\mathcal{R}$, of size $k$. Order preservation holds because
$\tau_{j_b} - \tau_{j_a} \ge \delta_{\min}n - 1 > 2C_\star\bar r_n$ for $n$
large.

\emph{Case B: $c = q_{j^*}$, $j^* \in V$.} Then $|V_L|+|V_R| = k-1 \ge 1$ and
the look-ahead branch is entered. Exactly as in \proppop{}, $\tau_{j^*}$ is the
only kink at distance more than $C_\star\bar r_n + 1$ from both edges of
$(s',e']$, its neighbors in $\{q_j : j\in V\}\cup\{\alpha,\beta\}$ sitting at
the edges, and
\[
\min(\tau_{j^*} - s',\ e' - \tau_{j^*}) \ \ge\
n\min(\kappa,\delta_{\min}) - C_\star\bar r_n - 1 \ \ge\ \tfrac12\kappa n .
\]
In particular $e'-s' \ge \kappa n$, so $B_{s',e'} \ne \emptyset$ and the
re-test is performed; applying Lemma~\ref{lem:loc} to $(s',e']$ as in the case
$k=1$, with $U = \{j^*\}$ and $b^* = \tau_{j^*}$, the re-test statistic exceeds
$\lambda$ and $|\hat b' - \tau_{j^*}| \le C_\star\bar r_n$. The call returns
$\mathcal{L}\cup\{\hat b'\}\cup\mathcal{R}$, of size $k$, and order
preservation follows as in Case~A.
\end{proof}

\subsection{Proof of \mainthm{}}
\label{sup:thmproof}

Work on $E_n$, which has probability tending to one by Lemma~\ref{lem:noise},
and let $n \ge n_0$ with $n_0$ as in Proposition~\ref{prop:track_prop} and $C_\star$ as in Definition~\ref{def:radii}.

If $N \ge 1$, the initial call is on $(0,n]$, which tracks the root node $(0,1) \in \mathcal{T}(g)$ at scale $K_{(0,1)} = 0$, both endpoints being exact, and
$V(0,1) = \{1,\ldots,N\}$ since every kink is interior.
Proposition~\ref{prop:track_prop} applied at the root gives $\hat N = N$ and
$|\hat\tau_{(j)} - \tau_j| \le C_\star\bar r_n$ for every $j$, which is the
assertion with $C = C_\star$.

If $N = 0$ then $g$ is affine, meaning that it is a straight line, and no
window contains a kink. Lemma~\ref{lem:nodet} applies to $(0,n]$: the maximum
contrast is at most $Z_n<\lambda$, the root proposal test is rejected, and
$\hat{\mathcal{T}}_n=\emptyset$. \qed

\begin{remark}[Uniformity in the threshold]
\label{rem:uniformlam}
Fix $\eta>0$ and $\varsigma>0$ with
$\varsigma<\tfrac12\Lambda_{\mathrm{det}}$, or with
$\varsigma<\tfrac12\Lambda^{M,R}_{\mathrm{det}}$ for LABS-Grid. Conditional
on $E_n$, the argument above is deterministic once $\lambda$ is specified.
The threshold enters only through two comparisons. On a window with no visible
kink, the maximum contrast is at most $o(1)+Z_n$ and must be below $\lambda$.
On every non-empty population node used for a proposal, and on every Case~B
one-kink re-test window, the corresponding maximum population contrast is at
least $\tfrac12\Lambda_{\mathrm{det}}n^{1/2}$ and the sample statistic must
exceed $\lambda$. Both comparisons hold simultaneously for every
$\lambda\in[(1+\eta)Z_n,\varsigma n^{1/2}]$ once $n$ is large. Hence, on
$E_n$ and for $n\ge n_0(\eta,\varsigma)$, the conclusion of \mainthm{} holds
for all thresholds in this band simultaneously, including a data-dependent
threshold whose value lies in the band. The same statement holds for
Theorem~\ref{thm:grid}, with $\Lambda^{M,R}_{\mathrm{det}}$ from
\eqref{eq:LamdetMR} in place of $\Lambda_{\mathrm{det}}$.

It is $\Lambda_{\mathrm{det}}$ and not $\Lambda_{\min}$ that must appear here, because the Case~B re-test window is in general not a node of $\mathcal{T}(g)$ and its amplitude can be the smaller of the two. Take $q_1 = \tfrac12-\delta$, $q_2 = \tfrac12$, $q_3 = \tfrac12+\delta$ with $d_1=d_2=d_3=1$ and $\delta$ small. By symmetry the root maximizer is $q_2$, so the root split is Case~B; each child carries one live kink, of amplitude $\delta^{3/2}(1-2\delta)^{3/2}/\sqrt3$, and this is $\Lambda_{\min}$, the root amplitude being of order one. The re-test acts on $(q_1,q_3)$, where the amplitude is $\delta^{3/2}/(2^{3/2}\sqrt3)$, a fraction $\{2^{3/2}(1-2\delta)^{3/2}\}^{-1} \to 2^{-3/2} < \tfrac12$ of $\Lambda_{\min}$.  Any $\lambda = c\,n^{1/2}$ with the re-test amplitude below $c$ and $c < \tfrac12\Lambda_{\min}$ therefore lies in the band that $\Lambda_{\min}$ would permit, yet LABS recovers $q_1$ and $q_3$ and discards $q_2$.
\end{remark}

\section{Consistency of LABS-Grid}
\label{sup:gridtheory}

\mainthm{} concerns \algmain{}, which is LABS-Grid with $M = R = 2$. This section
extends it to arbitrary fixed $M \ge 2$ and $R \ge 2$. The conclusion and the
rate are unchanged. The conditions required are the analogues of \assterm{}
and \assnondeg{} for the grid, and they constrain $M$ only. The proposal grid
size $M$ determines the proposed split and hence the population recursion
tree. By Proposition~\ref{prop:gpopexact}, a re-test window contains either no
live kink or exactly one live kink; in the latter case its pooled profile has
a unique, non-degenerate maximizer for every fixed $R$. Therefore no analogous
condition is needed for the re-test grid size $R$.

Two points make the extension possible. The first is that the noise bound of
Lemma~\ref{lem:noise} is already a maximum over \emph{all} sub-intervals of
$\{1,\ldots,n\}$, so it covers every sub-interval a grid of any size can
produce, and the same threshold works for every $M$. The second is that a grid
sub-interval of the current window is contained in that window, so a kink close
to an edge of the window is close to an edge of, or outside, every sub-interval;
the signal strength dichotomy of Corollary~\ref{cor:gap} therefore transfers to the
sub-intervals on which the grid search evaluates the contrast. What the grid
does change is the location of the split, and hence the recursion tree, so the
population objects have to be redefined for each $M$.

\subsection{The grid statistic and the population grid recursion}
\label{sup:gridpop}

For a window $(s,e]$ and $M \ge 2$ let $g_1 = s < g_2 < \cdots < g_M = e$ be
the $M$-point grid of Section~3.2 of the main paper, and write
\[
\mathcal{P}_M(s,e) := \big\{ (i,j) : 1 \le i < j \le M, \ g_j - g_i \ge 3 \big\}
\]
for the admissible pairs. The grid statistic and its maximizer are
\begin{equation}
\label{eq:gridstat}
\begin{aligned}
\mathcal{C}^{(M)}_{s,e}(b)
&:= \max\big\{ \big| \langle X, \hat\psi^{g_i,g_j}_b \rangle \big| :
 (i,j) \in \mathcal{P}_M(s,e), \ b \in B_{g_i,g_j} \big\}, \\
\hat b &:= \min\arg\max_b \mathcal{C}^{(M)}_{s,e}(b).
\end{aligned}
\end{equation}
where $\hat\psi^{g_i,g_j}_b$ denotes the unit contrast vector of
Section~\ref{sup:toolkit} formed on the sub-window $(g_i,g_j]$. This is the
routine $\textsc{GridSearch}(X,s,e,M)$, and \eqref{eq:gridstat} returns the
pair $(\hat b, W)$ with $W = \mathcal{C}^{(M)}_{s,e}(\hat b)$.

At the population level, for a rescaled window $(\alpha,\beta)$ put
$\gamma_i = \alpha + (i-1)(\beta-\alpha)/(M-1)$, $i = 1,\ldots,M$, and define
the \emph{pooled profile} and its leftmost maximizer
\begin{equation}
\label{eq:pooled}
\bar D^{(M)}_{\alpha\beta}(v) := \max_{i<j \, : \, \gamma_i < v < \gamma_j}
\big| \bar D_{\gamma_i\gamma_j}(v) \big| ,
\qquad
c_M(\alpha,\beta) := \min \arg\max_v \bar D^{(M)}_{\alpha\beta}(v),
\end{equation}
with $\Lambda_M(\alpha,\beta) := \max_v \bar D^{(M)}_{\alpha\beta}(v)$. Here
$\bar D_{\gamma\gamma'}$ is the profile of Section~4.2 of the main paper formed
on $(\gamma,\gamma')$, built from the kinks in $V(\gamma,\gamma') =
\{l : \gamma < q_l < \gamma'\}$. Because that profile is defined through the
unnormalized inner product of $L^2(\gamma,\gamma')$, values on sub-windows of
different lengths are on a common scale, as required in \eqref{eq:pooled}; this
is the population counterpart of the observation in Section~3.2 of the main
paper that unit-norm contrast vectors make sub-interval contrasts directly
comparable. Since $(\gamma_1,\gamma_M) = (\alpha,\beta)$ is itself an
admissible pair, $\Lambda_M(\alpha,\beta) \ge \Lambda(\alpha,\beta) > 0$
whenever $V(\alpha,\beta) \ne \emptyset$, and
$\Lambda_M(\alpha,\beta) = 0$ otherwise.

At a node with a non-empty live set, call a pair $(i,j)$ \emph{attaining} if
the largest absolute profile value on its sub-window equals the pooled
maximum. Write
\[
\mathcal{P}^*(\alpha,\beta)
:= \left\{(i,j):1\le i<j\le M,\ 
\max_v\big|\bar D_{\gamma_i\gamma_j}(v)\big|
=\Lambda_M(\alpha,\beta)\right\}
\]
for the set of all such pairs.

\begin{defin}[Population LABS-Grid]
\label{def:gpop}
Fix $M, R \ge 2$. $\mathrm{POP}_{M,R}(\alpha,\beta)$ is
\defpop{} with $c(\alpha,\beta)$ replaced by $c_M(\alpha,\beta)$ in the
proposal step and by $c_R(\alpha',\beta')$ in the re-test step. Write
$\mathcal{T}_{M,R}(g)$ for the set of windows at which $\mathrm{POP}_{M,R}$ is
invoked, starting from $(0,1)$, and $\mathcal{G}_{M,R}(g)$ for the set of grid
points of all those windows together with the grid points of the associated
re-test windows.
\end{defin}

\begin{prop}[Population exactness]
\label{prop:gpopexact}
If $\mathcal{T}_{M,R}(g)$ is finite and $c_M, c_R$ are well defined at the
relevant windows, then $\mathrm{POP}_{M,R}(\alpha,\beta) = \{q_j : j \in
V(\alpha,\beta)\}$ at every node; in particular
$\mathrm{POP}_{M,R}(0,1) = \{q_1,\ldots,q_N\}$.
\end{prop}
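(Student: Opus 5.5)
I will follow the proof of \proppop{} in Section~\ref{sup:pop_proof} line by line, inducting on the height of the subtree of $\mathcal{T}_{M,R}(g)$ rooted at $(\alpha,\beta)$. This height is finite by hypothesis. Write $V:=V(\alpha,\beta)$ and $k:=|V|$. Throughout, the only changes from \defpop{} are that the proposal split is $c:=c_M(\alpha,\beta)$ and the re-test split is $c_R(\alpha',\beta')$. So the argument rests on two facts about the pooled profile \eqref{eq:pooled}. First, $\Lambda_M(\alpha,\beta)=0$ exactly when $V(\alpha,\beta)=\emptyset$. Second, when $V(\alpha',\beta')=\{j\}$, the leftmost maximizer $c_R(\alpha',\beta')$ equals $q_j$.

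The case $k=0$ is immediate, since the call returns $\emptyset$. For $k=1$ I first prove the one-kink identity $c_K(\alpha,\beta)=q_j$ for every $K\ge2$. Every sub-window $(\gamma_i,\gamma_j)$ either has $V(\gamma_i,\gamma_j)=\emptyset$, in which case its profile vanishes, or has $V(\gamma_i,\gamma_j)=\{j\}$. In the latter case the continuum form of Lemma~\ref{lem:rep}(c) shows that $|\bar D_{\gamma_i\gamma_j}|$ is maximized only at $q_j$, with value $|d_j|\,\|\Psi^{\gamma_i\gamma_j}_{q_j}\|$. Hence the pooled profile satisfies $\bar D^{(K)}_{\alpha\beta}(v)<\Lambda_K(\alpha,\beta)$ for $v\ne q_j$. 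This holds because at each $v\ne q_j$ only finitely many pairs contribute, each strictly below its own maximum, and each of those maxima is at most $\Lambda_K$. Therefore $c_K=q_j$ uniquely. The two children $(\alpha,q_j)$ and $(q_j,\beta)$ then have empty live sets, so $\mathcal L=\mathcal R=\emptyset$ and the call returns $\{q_j\}$.

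For $k\ge2$ I split into the two cases of \proppop{}. In Case A, $c\notin\{q_l\}$. Here $V(\alpha,c)$ and $V(c,\beta)$ partition $V$, and the induction hypothesis gives $\mathcal L$ and $\mathcal R$ exactly. The bounds $\alpha'$ and $\beta'$ are then consecutive elements of $\{\alpha\}\cup\{q_j:j\in V\}\cup\{\beta\}$ with at least one of them a kink, so $V(\alpha',\beta')=\emptyset$ and the re-test branch returns $\mathcal L\cup\mathcal R$. In Case B, $c=q_{j^*}$ with $j^*\in V$, because $c\in(\alpha,\beta)$ is a maximizer of a profile built only from the kinks in $(\alpha,\beta)$. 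The children exclude $j^*$ and together carry $k-1\ge1$ kinks, so the look-ahead branch is entered. The window $(\alpha',\beta')$ then has $V(\alpha',\beta')=\{j^*\}$, and the one-kink identity with $K=R$ gives $c_R(\alpha',\beta')=q_{j^*}$. The returned set is $\{q_j:j\in V\}$ in all cases. The root statement follows because $V(0,1)=\{1,\dots,N\}$.

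The main obstacle is the one-kink identity for the pooled profile. I must check that a grid sub-window containing $q_j$ strictly inside cannot make the maximum of $\bar D^{(K)}$ attained elsewhere. That is, I need each sub-window's strict unique maximum at $q_j$ to survive the pooling maximum. I also need the edge conventions (sub-windows where $q_j$ coincides with a grid point $\gamma_i$) to be handled correctly. In that situation $j\notin V(\gamma_i,\gamma_{j'})$ by strictness, so the profile is zero, which is harmless. A secondary point is to confirm that the leftmost-maximizer convention plays no role, since the maximizer is unique.
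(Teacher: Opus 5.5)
Your proof is correct and matches the paper's argument. The paper likewise says that the proof of Proposition~4.1 carries over verbatim once two facts are checked: detection is unchanged, and on a one-kink window the pooled profile (for either grid size) is uniquely maximized at the kink, since sub-windows containing it strictly inside peak only there while all others vanish identically. Your explicit treatment of Cases~A and~B, and of grid points coinciding with the kink, spells out what the paper leaves implicit.
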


\begin{proof}
The proof of \proppop{} uses the maximizer only through its position relative
to the kinks, and applies verbatim once two facts are checked.

First, detection is unchanged: $\Lambda_M(\alpha,\beta) > 0$ if and only if
$V(\alpha,\beta) \ne \emptyset$, as noted above.

Second, if $V(\alpha,\beta) = \{j\}$ then $c_M(\alpha,\beta) = q_j$.
Indeed, a sub-window with $q_j$ in its interior has $q_j$ as its only live
kink, so by Lemma~\ref{lem:rep}(c) its profile attains its maximum at $q_j$ and
nowhere else; a sub-window without $q_j$ in its interior has an identically
zero profile. The pooled profile is therefore maximized at $q_j$ and nowhere
else, whatever the relative sizes of the individual maxima. The same argument
applies to $c_R$ on the re-test window, which in Case~B of the proof of
\proppop{} includes exactly one live kink.
\end{proof}

\subsection{Assumptions}
\label{sup:gridass}

\begin{assumption}[Termination, grid]
\label{ass:gterm}
$\mathrm{POP}_{M,R}$ started at $(0,1)$ terminates, that is,
$\mathcal{T}_{M,R}(g)$ is finite.
\end{assumption}

\begin{assumption}[Non-degeneracy, grid]
\label{ass:gnondeg}
\mbox{}\par
At every node of $\mathcal{T}_{M,R}(g)$ with a non-empty live set, the pooled profile
$\bar D^{(M)}_{\alpha\beta}$ has a unique
maximizer $c_M(\alpha,\beta)$ on $(\alpha,\beta)$, and for every
$(i,j)\in\mathcal{P}^*(\alpha,\beta)$,
\[
\partial_v^2 \big| \bar D_{\gamma_i\gamma_j} \big| \big( c_M(\alpha,\beta)
\big) \;<\; 0 .
\]
\end{assumption}

Two remarks on the form of these conditions. First, no condition is imposed on
$R$; the reason is given in Remark~\ref{rem:noR} below. Second, when
$V(\alpha,\beta)$ is a singleton, Assumption~\ref{ass:gnondeg} is automatic:
the second step of the proof of Proposition~\ref{prop:gpopexact} shows that the
maximizer is then unique and equal to the kink, and non-degeneracy at that
point follows from Lemma~\ref{lem:sep}, exactly as in the case $k=1$ of
Proposition~\ref{prop:track_prop}. The condition therefore has content only at
nodes including two or more live kinks.

Under Assumptions~\ref{ass:gterm} and~\ref{ass:gnondeg}, and again for
$N \ge 1$, the sets of nodes, grid points and attaining pairs are finite. We
now define the finite-set extrema used in the grid proof. Let $\mathcal Q$ be
the set of triples $((\alpha,\beta),i,j)$ for which
$(\alpha,\beta)\in\mathcal{T}_{M,R}(g)$ has a non-empty live set and
$(i,j)\notin\mathcal{P}^*(\alpha,\beta)$:
\begin{align}
\label{eq:kappaM}
\kappa_M &:= \min\Big( \{\delta_{\min}\} \cup \big\{ |q_l - \gamma| \, : \,
\gamma \in \mathcal{G}_{M,R}(g), \ 1 \le l \le N, \ q_l \ne \gamma \big\}
\notag \\
&\hspace{5.2em} \cup \big\{ |c_M(\alpha,\beta) - q_l| \, : \,
(\alpha,\beta) \in \mathcal{T}_{M,R}(g), \ c_M(\alpha,\beta) \ne q_l \big\}
\Big), \\
\iota &:= \min \Big\{ \Lambda_M(\alpha,\beta) -
\textstyle\max_v |\bar D_{\gamma_i\gamma_j}(v)| \, : \,
((\alpha,\beta),i,j)\in\mathcal Q \Big\} .
\end{align}
The minimum defining $\iota$ is taken only over nodes with a non-empty live
set and non-attaining pairs. If there is no such pair, set $\iota=1$. Also set
\[
\Lambda_{\min}:=
\min_{(\alpha,\beta)\in\mathcal{T}_{M,R}(g):\,V(\alpha,\beta)\ne\emptyset}
\Lambda_M(\alpha,\beta).
\]
Then $\kappa_M>0$ because every distance included in its definition is
positive and there are finitely many of them. Similarly,
$\Lambda_{\min}>0$ by \eqref{eq:pooled}, and $\iota>0$ because every
non-attaining pair has a maximum strictly below $\Lambda_M(\alpha,\beta)$.

It remains to define the constants controlling curvature, interior margin and
endpoint perturbations. These constants are not defined directly from the
pooled profile. A maximum of smooth individual profiles need not be
differentiable at a point where two profiles have the same value but different
derivatives and the identity of the larger profile changes. We instead work
with each individual profile. For each node $(\alpha,\beta)$ with a non-empty
live set and each $(i,j)\in\mathcal{P}^*(\alpha,\beta)$, the individual absolute
profile $|\bar D_{\gamma_i\gamma_j}|$ has its unique maximum at
$c_M(\alpha,\beta)$. Indeed, a second maximizer would also maximize the pooled
profile, contrary to Assumption~\ref{ass:gnondeg}. The same assumption gives a
strictly negative second derivative there. Apply the construction in
Lemma~\ref{lem:conseq} to each such individual profile on
$(\gamma_i,\gamma_j)$. Define $\varkappa$, $\varrho_0$ and $\varsigma_0$ as,
respectively, the minima of the resulting positive curvature constants,
positive relative interior margins and positive endpoint-perturbation radii.
Define $\Lambda_{\max}$ and $C_L$ as, respectively, the maxima of the
individual profile amplitudes and the finite endpoint-Lipschitz constants.
All five extrema are over the finitely many node--pair combinations, so the
three minima are strictly positive and the two maxima are finite. These are
the constants used when Lemma~\ref{lem:conseq} is applied to an attaining pair
in Lemma~\ref{lem:gloc}.

The grid points must be included in $\kappa_M$ because they are endpoints of
the sub-windows on which the contrast is evaluated, even though they are not
endpoints of recursion windows.

\subsection{Consistency}
\label{sup:gridthm}

Throughout, $(s,e]$ tracks $(\alpha,\beta) \in \mathcal{T}_{M,R}(g)$ at scale
$C_\star$ in the sense of Definition~\ref{def:track}, and we work on the event $E_n$ of
Lemma~\ref{lem:noise}. Note first that the sample and population grids
correspond: since $|s - n\alpha| \le C_\star \bar r_n$ and
$|e - n\beta| \le C_\star\bar r_n$, rounding gives
\begin{equation}
\label{eq:gridtrack}
| g_i - n \gamma_i | \;\le\; C_\star \bar r_n + 1, \qquad i = 1,\ldots,M,
\end{equation}
so that each sample sub-window $(g_i,g_j]$ tracks the population sub-window
$(\gamma_i,\gamma_j)$ at scale $C_\star + 1$. Lemma~\ref{lem:track}, applied
with $\kappa_M$ in place of $\kappa$, then gives the dichotomy on every
sub-window: for $l \in V(\gamma_i,\gamma_j)$ one has
$x_l(g_i,g_j) \ge \tfrac12\kappa_M n$, and otherwise
$x_l(g_i,g_j) \le C_\star\bar r_n + 2$.

\begin{lemma}[No detection, grid]
\label{lem:gnodet}
Fix $C_\star$ and $M \ge 2$. There is $n_0$ such that for $n \ge n_0$, on
$E_n$: if $(s,e]$ is any window with $x_j(s,e) \le C_\star\bar r_n + 1$ for
every $j$, then $\max_b \mathcal{C}^{(M)}_{s,e}(b) < \lambda$. Thus the value
$W$ returned by $\textnormal{\textsc{GridSearch}}(X,s,e,M)$ fails the proposal condition
$W>\lambda$ in Algorithm~S1, and the call returns $\emptyset$ without making a
split.
\end{lemma}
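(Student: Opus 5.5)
The plan is to reduce the grid statement to the single-window bound of Lemma~\ref{lem:nodet}, applied separately to each admissible sub-window of the grid, and then take a maximum over the at most $\binom{M}{2}$ pairs. By \eqref{eq:gridstat},
\[
\max_b \mathcal{C}^{(M)}_{s,e}(b) = \max_{(i,j)\in\mathcal{P}_M(s,e)} \ \max_{b\in B_{g_i,g_j}} \big|\langle X,\hat\psi^{g_i,g_j}_b\rangle\big| .
\]
So it suffices to show that every sub-window $(g_i,g_j]$ satisfies the hypothesis of Lemma~\ref{lem:nodet}. Specifically, we need $x_l(g_i,g_j)\le C_\star\bar r_n+1$ for every kink $l$, with the same $C_\star$.

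The key step is the monotonicity of the edge distance under passing to a sub-window. Fix $(i,j)$ and a kink $l$.
\begin{itemize}
\item If $\tau_l\notin B_{g_i,g_j}$, then $x_l(g_i,g_j)=0$ and there is nothing to prove.
\item Otherwise $g_i+2\le\tau_l\le g_j-1$. Since $s\le g_i<g_j\le e$, we also have $\tau_l\in B_{s,e}$. Then
\[
\tau_l-g_i\le\tau_l-s \quad\text{and}\quad g_j+1-\tau_l\le e+1-\tau_l,
\]
so $x_l(g_i,g_j)\le x_l(s,e)\le C_\star\bar r_n+1$.
\end{itemize}
This is exactly the observation made at the start of Section~\ref{sup:gridtheory}: a kink near an edge of the window is near an edge of, or outside, every sub-window.

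Next, Lemma~\ref{lem:rep}(b) and Corollary~\ref{cor:gap}(a) give, uniformly over sub-windows,
\[
\max_b|\langle f,\hat\psi^{g_i,g_j}_b\rangle|\le\sum_l\ell_l(g_i,g_j)=O\{n^{-1/4}(\log n)^{3/4}\}.
\]
The constant here depends only on $C_\star$, $\bar d$ and $N$, not on the window. The noise part is controlled by the first event of $E_n$, which by Lemma~\ref{lem:noise} is a maximum over \emph{all} triples $(s,e,b)$ with $0\le s<e\le n$. In particular it covers every grid sub-window, whatever $M$ is. Hence each sub-window contrast is at most $o(1)+Z_n$. The threshold satisfies $\lambda-Z_n=(\Theta-1)\sigma(8\log n)^{1/2}\to\infty$, so the maximum over the finitely many pairs is below $\lambda$ once $n\ge n_0$, with $n_0$ depending only on $C_\star$ and $g$.

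It remains to handle the case in which the grid search can be evaluated but some pairs are inadmissible. Those pairs are simply excluded from $\mathcal{P}_M(s,e)$, and the maximum over an empty set does not arise, since evaluability means $\mathcal{P}_M(s,e)\neq\emptyset$. If the search cannot be evaluated, Algorithm~S1 returns $\emptyset$ immediately. In every case $W<\lambda$ or the guard fires, so the call returns $\emptyset$ without making a split.

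I expect no serious obstacle. The only point needing care is the bookkeeping in $x_l$: it is defined with the asymmetric edge distances $\tau_l-s$ and $e+1-\tau_l$, together with the membership condition $\tau_l\in B$. Both must be checked to be monotone under inclusion of windows, as done above, so that Corollary~\ref{cor:gap}(a) can be applied with a uniform constant.
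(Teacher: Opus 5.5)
Your proof is correct and follows the paper's argument. It uses the edge-distance monotonicity $x_l(g_i,g_j)\le x_l(s,e)$ on grid sub-windows, then Lemma~\ref{lem:rep}(b) with Corollary~\ref{cor:gap}(a) for the noiseless part, the first event of $E_n$ (which covers every sub-interval) for the noise, and a maximum over the at most $\binom{M}{2}$ admissible pairs. Your explicit check that $\tau_l\in B_{g_i,g_j}$ implies $\tau_l\in B_{s,e}$, and your handling of the evaluability guard, are bookkeeping details the paper leaves implicit.
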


\begin{proof}
Let $(i,j) \in \mathcal{P}_M(s,e)$ and let $\tau_l$ lie in the interior of
$(g_i,g_j]$. Since $g_i \ge s$ and $g_j \le e$, both $\tau_l - g_i \le \tau_l - s$ and $g_j + 1 - \tau_l \le e + 1 - \tau_l$, so
\begin{equation}
\label{eq:grid-edge-distance}
x_l(g_i,g_j) \;\le\; x_l(s,e) \;\le\; C_\star\bar r_n + 1 .
\end{equation}
By Corollary~\ref{cor:gap}(a), the signal strength of $\tau_l$ on $(g_i,g_j]$ is
$O\{n^{-1/4}(\log n)^{3/4}\}$. By Lemma~\ref{lem:rep}(b) the noiseless
contrast on that sub-window is at most $N$ times this, and adding
$Z_n$ and maximizing over the at most $\binom{M}{2}$ admissible pairs gives
$\max_b \mathcal{C}^{(M)}_{s,e}(b) \le o(1) + \sigma(8\log n)^{1/2} < \lambda$
for $n$ large, since $M$ is fixed and $\Theta > 1$.
\end{proof}

Formula~\eqref{eq:grid-edge-distance} is the step that extends the no-detection
argument from the original window to every grid sub-window: shrinking a
window can only move an interior kink closer to an endpoint. Thus a kink that
has insufficient signal strength on $(s,e]$ also has insufficient signal
strength on each sub-window on which the grid search evaluates the contrast.

\begin{lemma}[Detection and localization, grid]
\label{lem:gloc}
\mbox{}\par
Let $(\alpha,\beta)$ be a node of $\mathcal{T}_{M,R}(g)$ with a non-empty live
set, and let $(s,e]$ track it at scale $C_\star$. Under
Assumptions~\ref{ass:gterm} and~\ref{ass:gnondeg} there are $n_0$ and $C_1$
such that for $n \ge n_0$, on $E_n$,
\[
\max_b \mathcal{C}^{(M)}_{s,e}(b) \;>\; \lambda,
\qquad
\big| \hat b - n\, c_M(\alpha,\beta) \big| \;\le\; C_1 \bar r_n .
\]
\end{lemma}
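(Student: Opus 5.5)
The proof reduces the grid statement to the single-window machinery of Lemmas~\ref{lem:transfer} and~\ref{lem:loc}, applied sub-window by sub-window. We then compare across the finitely many admissible pairs, using the gap $\iota$ to discard non-attaining pairs.

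By \eqref{eq:gridtrack}, each sample sub-window $(g_i,g_j]$ tracks $(\gamma_i,\gamma_j)$ at scale $C_\star+1$. Lemma~\ref{lem:track} with $\kappa_M$ then splits the kinks into those in $V(\gamma_i,\gamma_j)$, with $x_l\ge\tfrac12\kappa_M n$, and the rest, whose total signal strength is $o(1)$.

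\emph{Step 1: uniform approximation on every sub-window.} For every pair $(i,j)\in\mathcal P_M(s,e)$ we show
\[
\max_b|\langle X,\hat\psi^{g_i,g_j}_b\rangle| = n^{1/2}\max_v|\bar D_{\gamma_i\gamma_j}(v)| + O\{(\log n)^{1/2}\}.
\]
If $V(\gamma_i,\gamma_j)=\emptyset$, the left side is at most $o(1)+Z_n$ by Lemma~\ref{lem:rep}(b) and Corollary~\ref{cor:gap}(a). Otherwise we use the Riemann bound \eqref{eq:riemann}. Its error is $O(1)$ uniformly in $b$, and the endpoint perturbation of order $\bar r_n/n$ moves the population maximum by $O(\bar r_n/n)$ through the continuity in Lemma~\ref{lem:conseq}(b) or, for non-attaining pairs, through plain continuity of the profile in the endpoints. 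We add the remainder $\varrho=o(1)$ and the noise $Z_n$. Since there are at most $\binom M2$ pairs, this holds uniformly on $E_n$.

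\emph{Step 2: detection and restriction to attaining pairs.} Take any attaining pair $(i,j)\in\mathcal P^*(\alpha,\beta)$. Step~1 shows that its maximum contrast is at least $n^{1/2}\Lambda_{\min}-O\{(\log n)^{1/2}\}>\lambda$. For a non-attaining pair, the maximum is at most $n^{1/2}\{\Lambda_M(\alpha,\beta)-\iota\}+O\{(\log n)^{1/2}\}$, strictly below the attaining value for large $n$. Hence the global maximizer $\hat b$ comes from some attaining pair. The leftmost convention is harmless here: the argument below bounds the location of the maximizer within every attaining pair, so ties among attaining pairs do not matter.

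\emph{Step 3: localization within an attaining pair.} For $(i,j)\in\mathcal P^*$, the individual profile $|\bar D_{\gamma_i\gamma_j}|$ has its unique, non-degenerate maximum at $c_M(\alpha,\beta)$, with the constants fixed in Section~\ref{sup:gridass}. Lemma~\ref{lem:transfer}, applied to the window $(\gamma_i,\gamma_j)$ tracked at scale $C_\star+1$, yields $b^*_{ij}$ with $|b^*_{ij}-nc_M|\le(C_0'+C_L'(C_\star+1))\bar r_n$, together with hypotheses (i)--(iii) of Lemma~\ref{lem:loc}.

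Lemma~\ref{lem:loc} localizes only the maximizer within that sub-window. We must therefore check that $\hat b$, which maximizes the contrast over the attaining pair $(i,j)$ that produces it, is also the maximizer of that pair's own contrast. This holds tautologically: $\hat b$ maximizes the contrast on that sub-window, since otherwise a larger value would exist. Lemma~\ref{lem:loc} then gives $|\hat b-b^*_{ij}|\le C_{1}\bar r_n$, and the triangle inequality yields the claim with the constant $C_1+C_0'+C_L'(C_\star+1)$, maximized over the finitely many attaining pairs.

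The main obstacle is Step~2 when two pairs both attain $\Lambda_M$ but correspond to different sub-windows. Ties cannot mislocate $\hat b$, because each attaining pair is localized to the same point $nc_M$. The delicate part is the uniformity in Step~1: the perturbation of $\max_v|\bar D|$ for non-attaining pairs must be $o(n^{1/2}\iota)$, which needs only continuity of the profile in the endpoints and not differentiability.
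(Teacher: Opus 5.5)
Your proposal is correct and follows essentially the same route as the paper. You approximate each sub-window maximum via \eqref{eq:riemann} and Lemma~\ref{lem:track} with $\kappa_M$, use the gap $\iota$ to force the global maximizer onto an attaining pair, and then apply Lemmas~\ref{lem:conseq}, \ref{lem:transfer} and~\ref{lem:loc} on that pair, with the transfer offset absorbed into $C_1$. That is exactly why $C_1$ depends on the tracking scale, as Definition~\ref{def:gradii} records.

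The one imprecision is in Step~1. For non-attaining pairs, plain continuity in the endpoints gives an error of only $n^{1/2}o(1)$, not the $O\{(\log n)^{1/2}\}$ you state. This weaker form is the one the paper displays, and it is all the $\iota$-comparison needs, as you yourself observe at the end.
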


\begin{proof}
Detection does not require Assumption~\ref{ass:gnondeg}. Let
$(i,j)\in\mathcal{P}^*(\alpha,\beta)$. Its population maximum is
$\Lambda_M(\alpha,\beta)\ge\Lambda_{\min}$ by the definition of
$\Lambda_{\min}$. The maximum in \eqref{eq:gridstat} is at least the maximum
over this pair, which by \eqref{eq:riemann} applied on $(g_i,g_j]$ is at least
$n^{1/2}\{\Lambda_M(\alpha,\beta)-o(1)\}-Z_n
\ge\tfrac12\Lambda_{\min}n^{1/2}$, and this exceeds $\lambda$ for large $n$.
The full-window amplitude $\Lambda(\alpha,\beta)$ is no larger than the pooled
amplitude at the same node and may be strictly smaller than the minimum pooled
amplitude across all nodes. It therefore cannot replace
$\Lambda_M(\alpha,\beta)$ in this lower bound while retaining the constant
$\Lambda_{\min}$. One could introduce a second, possibly smaller, positive
constant by taking the minimum of $\Lambda(\alpha,\beta)$ over the grid-tree
nodes, but it is unnecessary because the grid statistic searches the
attaining pairs directly.

For localization, write $\Lambda_M = \Lambda_M(\alpha,\beta)$ and abbreviate
$\mathcal{P}^*(\alpha,\beta)$ to $\mathcal{P}^*$. By
\eqref{eq:gridtrack} and \eqref{eq:riemann} applied on each sub-window,
\[
\max_b \big| \langle X, \hat\psi^{g_i,g_j}_b \rangle \big|
\;=\; n^{1/2} \Big\{ \max_v \big| \bar D_{\gamma_i\gamma_j}(v) \big|
+ o(1) \Big\} + O\{(\log n)^{1/2}\}
\]
uniformly over the finitely many pairs. Hence for $(i,j) \notin \mathcal{P}^*$
the left side is at most $n^{1/2}(\Lambda_M - \iota + o(1))$, while for
$(i,j) \in \mathcal{P}^*$ it is at least $n^{1/2}(\Lambda_M - o(1))$. For $n$
large the two ranges are disjoint, so the maximum in \eqref{eq:gridstat} is
attained at a pair in $\mathcal{P}^*$.

Let $(i,j) \in \mathcal{P}^*$ be that pair. If $v$ maximizes
$|\bar D_{\gamma_i\gamma_j}|$ then $\bar D^{(M)}_{\alpha\beta}(v) = \Lambda_M$,
so $v = c_M(\alpha,\beta)$ by the uniqueness in
Assumption~\ref{ass:gnondeg}: every pair in $\mathcal{P}^*$ attains its maximum
at $c_M(\alpha,\beta)$ and there only. Assumption~\ref{ass:gnondeg} guarantees the negative second derivative at that point, so Lemma~\ref{lem:conseq} holds for the sub-window $(\gamma_i,\gamma_j)$ with $V(\gamma_i,\gamma_j)$ in place of $V(\alpha,\beta)$, and Lemma~\ref{lem:transfer}, which is stated for any fixed rescaled window at which those conclusions hold and not only for nodes of $\mathcal{T}(g)$, applies to $(g_i,g_j]$. That the sample sub-window tracks the population one at scale $C_\star+1$ is \eqref{eq:gridtrack}, and the kink dichotomy on it is Lemma~\ref{lem:track} with $\kappa_M$ in place of $\kappa$. Since
$c_M(\alpha,\beta)$ lies in the interior of $(\gamma_i,\gamma_j)$ with margin
at least $\varrho_0(\gamma_j - \gamma_i)$, Lemma~\ref{lem:loc} gives
$|\hat b - n c_M(\alpha,\beta)| \le C_1\bar r_n$, with $C_1$ the largest of the
finitely many constants so obtained.
\end{proof}

\begin{lemma}[Grid re-test on a one-kink window]
\label{lem:grt}
Let $(\alpha',\beta')$ be one of the finitely many Case~B re-test windows arising in the proof of Proposition~\ref{prop:gpopexact}, so that $V(\alpha',\beta') = \{j^*\}$, and let $(s',e']$ track it at some fixed scale $K$. There is a constant $C^{M,R}_{\mathrm{rt}}$, not depending on $K$, and an $n_0 = n_0(K)$, such that for $n \ge n_0$, on $E_n$,
\[
\max_b \mathcal{C}^{(R)}_{s',e'}(b) \;\ge\; \tfrac12 \Lambda^{M,R}_{\mathrm{det}}\, n^{1/2} \;>\; \lambda,
\qquad
\big| \hat b' - \tau_{j^*} \big| \;\le\; C^{M,R}_{\mathrm{rt}} \bar r_n .
\]
\end{lemma}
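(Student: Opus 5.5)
The plan is to reduce the grid re-test to the single-interval one-kink argument of the case $k=1$ in Proposition~\ref{prop:track_prop}, applied on whichever grid sub-window wins. First I will fix the $R$-point grid $g'_1=s'<\cdots<g'_R=e'$ on $(s',e']$ and its population counterpart $\gamma'_1=\alpha'<\cdots<\gamma'_R=\beta'$. By the analogue of \eqref{eq:gridtrack}, each $(g'_i,g'_j]$ tracks $(\gamma'_i,\gamma'_j)$ at scale $K+1$. Since $\mathcal{G}_{M,R}(g)$ contains these grid points, Lemma~\ref{lem:track} with $\kappa_M$ in place of $\kappa$ gives the dichotomy on every sub-window. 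Either $q_{j^*}$ is interior to $(\gamma'_i,\gamma'_j)$, in which case $x_{j^*}(g'_i,g'_j)\ge\tfrac12\kappa_M n$; or it is not, in which case its distance to an edge is at most $(K+1)\bar r_n+2$. All other kinks lie outside $(\alpha',\beta')$, or near its edges, and so they contribute $o(1)$ signal strength on every sub-window.

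\emph{Detection.} Take the full pair $(1,R)$. On it, $\tau_{j^*}$ is the only strong kink, so Lemma~\ref{lem:rep}(c) gives noiseless contrast at $b=\tau_{j^*}$ equal to $\ell_{j^*}(s',e')-o(1)$. By the discretization estimate \eqref{eq:riemann}, this equals $n^{1/2}|d_{j^*}|\,\|\Psi^{\alpha'\beta'}_{q_{j^*}}\|+o(n^{1/2})\ge n^{1/2}(\Lambda^{M,R}_{\mathrm{det}}-o(1))$. Subtracting $Z_n$ yields the first claim, since $\lambda=O((\log n)^{1/2})$.

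\emph{Localization.} Here the key observation is that no competition between pairs needs to be resolved, unlike in Lemma~\ref{lem:gloc}. Let $(i,j)$ be the pair at which $\hat b'$ is attained. If $q_{j^*}$ is not interior to $(\gamma'_i,\gamma'_j)$, then every kink is weak on $(g'_i,g'_j]$, so the statistic there is at most $o(1)+Z_n<\lambda$, contradicting the detection bound. Hence $q_{j^*}$ is interior to $(\gamma'_i,\gamma'_j)$. It is the unique live kink there, at distance at least $\tfrac12\kappa_M n$ from both edges of $(g'_i,g'_j]$. I will then apply Lemma~\ref{lem:loc} on $(g'_i,g'_j]$ with $U=\{j^*\}$ and $b^*=\tau_{j^*}$, exactly as in the $k=1$ case:
\begin{itemize}
\item hypothesis (i) holds with $\varrho_1=\kappa_M/2$;
\item hypothesis (ii) holds with $A=\ell_{j^*}(g'_i,g'_j)\ge c_\ell\underline d(\kappa_M/2)^{3/2}n^{1/2}$ and $\varkappa'=c_0$, via Lemma~\ref{lem:sep} and $r_0=0$;
\item hypothesis (iii) follows from the dichotomy.
\end{itemize}
The sample maximizer over all pairs is at least the maximizer over this pair. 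Lemma~\ref{lem:loc} is stated for the maximizer over a single window, so I will rerun its basic inequality \eqref{eq:basic} with $u$ the winning vector on $(g'_i,g'_j]$ and $u^*=\hat\psi^{g'_i,g'_j}_{\tau_{j^*}}$. Because the winner satisfies $\langle X,u\rangle\ge\langle X,u^*\rangle$ with both vectors on the same sub-window, the proof goes through verbatim.

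This produces $C_1$ depending only on $a_0,\varkappa',\varrho_1,\sigma,c_0,C_0$, that is, on $g$, $\sigma$ and $\kappa_M$, and not on $K$. Taking the maximum over the finitely many pairs and the finitely many Case~B windows defines $C^{M,R}_{\mathrm{rt}}$. The dependence on $K$ enters only through $n_0(K)$, which is needed for the tracking errors $(K+1)\bar r_n$ to be $o(n)$.

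The main obstacle I expect is the step identifying the winning pair. Care is needed where the grid point $\gamma'_i$ coincides with or lies near $q_{j^*}$. This is handled by the inclusion of grid points in $\kappa_M$, which separates the two regimes. The remaining risk is ensuring that the $o(1)$ contributions from edge kinks on sub-windows are uniform over pairs, but with $R$ fixed this is a finite maximum.
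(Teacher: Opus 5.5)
Your proposal is correct, but its localization step takes a different route from the paper's.

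The paper identifies the winning pair. It writes the one-kink amplitude as $\|\Psi^{\alpha'\beta'}_{q_{j^*}}\|^2=\tfrac13 H(A,B)^3$ with $H(A,B)=AB/(A+B)$. Strict monotonicity of $H$ in both arguments makes the full window $(\gamma'_1,\gamma'_R)$ the unique amplitude maximizer among the $R$-grid sub-windows. Over the finitely many re-test configurations this yields a positive gap $\iota^{M,R}_{\mathrm{rt}}$. The gap argument of Lemma~\ref{lem:gloc} then puts the sample maximum on the pair $(1,R)$ for large $n$, and Lemma~\ref{lem:loc} is applied on $(s',e']$ exactly as in the case $k=1$.

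You never determine which pair wins. You compare the no-detection bound with the detection bound to show that any winning pair must contain $q_{j^*}$ strictly inside with a $\kappa_M$-margin. You then apply Lemma~\ref{lem:loc} on that sub-window. This is legitimate for three reasons: the global maximizer also maximizes the single-window contrast on its own pair, the comparator $u^*$ lives on that same sub-window, and $a_0$, $\varkappa'=c_0$, $\varrho_1$ are uniform over the finitely many pairs. So the random choice of pair costs nothing on $E_n$.

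Your route is more elementary. It needs neither the amplitude formula, nor the monotonicity of $H$, nor the gap constant $\iota^{M,R}_{\mathrm{rt}}$. It also shows that the paper's statement that localization needs a positive gap between pairs is not actually necessary in the one-kink case, because every sub-window containing the kink has its population maximizer at the kink.

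The paper's route gives a structural fact that yours does not. Asymptotically the $R$-grid re-test selects the full re-test window and so coincides with the $R=2$ re-test, which is the interpretation offered in Remark~\ref{rem:noR}.
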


\begin{proof}
The population re-test window $(\alpha',\beta')$ contains the single live kink
$q_{j^*}$. Hence, by the second step of the proof of
Proposition~\ref{prop:gpopexact}, its pooled $R$-grid profile is maximized at
$q_{j^*}$ and nowhere else, whatever the relative sizes of the individual
sub-window maxima. The pooled maximum equals
$|d_{j^*}|\,\|\Psi^{\alpha'\beta'}_{q_{j^*}}\|$, which is at least
$\Lambda^{M,R}_{\mathrm{det}}$ by \eqref{eq:LamdetMR}. Detection follows as in
Lemma~\ref{lem:gloc}, from \eqref{eq:riemann} applied to a pair attaining this
pooled maximum. For localization a positive gap between an attaining pair and
the other pairs is needed, and the detection constant, being an amplitude
rather than a difference, does not provide one. Here the gap follows from
monotonicity of the one-kink amplitude. Writing
$A=q_{j^*}-\alpha'$ and $B=\beta'-q_{j^*}$, the scaling relation of
Lemma~\ref{lem:gram} gives
\[
\big\|\Psi^{\alpha'\beta'}_{q_{j^*}}\big\|^2 \;=\; \frac{A^3B^3}{3(A+B)^3} \;=\; \tfrac13 H(A,B)^3,
\qquad H(A,B) := \frac{AB}{A+B},
\]
and $\partial_A H = B^2/(A+B)^2 > 0$, $\partial_B H = A^2/(A+B)^2 > 0$. The one-kink amplitude is therefore strictly increasing under enlargement of the window at either end, so among the $R$-grid sub-windows of $(\alpha',\beta')$ that contain $q_{j^*}$ the full window $(\gamma'_1,\gamma'_R) = (\alpha',\beta')$ is the unique maximizer, the others containing $q_{j^*}$ being proper sub-windows and those not containing it having amplitude zero. To state the gap, set
\[
A_{ij} \;:=\;
\begin{cases}
|d_{j^*}| \, \big\|\Psi^{\gamma'_i\gamma'_j}_{q_{j^*}}\big\| , & \gamma'_i < q_{j^*} < \gamma'_j, \\[2pt]
0, & \text{otherwise},
\end{cases}
\]
which is defined for every admissible pair, including those that do not contain the kink and on which the profile vanishes identically. Since there are finitely many pairs and finitely many re-test configurations,
\[
\iota^{M,R}_{\mathrm{rt}} \;:=\; \min_{\text{re-tests}} \Big\{ A_{1R} \;-\; \max_{(i,j) \ne (1,R)} A_{ij} \Big\} \;>\; 0 ,
\qquad \max\emptyset := 0 .
\]
When $R = 2$ there is only the pair $(1,2)$, the inner maximum is over the empty set, and $\iota^{M,2}_{\mathrm{rt}} = \min A_{12} > 0$, which is the case $R = 2$ recommended in Remark~\ref{rem:noR}. When $R > 2$, strict monotonicity of $H$ makes every proper sub-window containing the kink strictly smaller, and the pairs not containing it contribute zero, so the minimum is again positive. The gap argument of Lemma~\ref{lem:gloc} then applies with $\iota^{M,R}_{\mathrm{rt}}$ in place of $\iota$: for $n$ large the maximum in \eqref{eq:gridstat} is attained on the full pair. On it $q_{j^*}$ is interior with a fixed margin, and Lemma~\ref{lem:loc} applies with $U = \{j^*\}$, $b^* = \tau_{j^*}$ and $\varkappa' = c_0$, exactly as in the case $k=1$ of Proposition~\ref{prop:track_prop}. No node-level amplitude or curvature assumption is used: a window with one live kink has a unique and non-degenerate pooled maximizer for every grid size. The constant $C^{M,R}_{\mathrm{rt}}$ does not depend on the tracking scale $K$, because the parameters $a_0$, $\varkappa' = c_0$, $\varrho_1$ and $r_0$ fed into Lemma~\ref{lem:loc} are determined by $\Lambda^{M,R}_{\mathrm{det}}$, $\iota^{M,R}_{\mathrm{rt}}$, $\kappa_M$ and the fixed margins alone; $K$ enters only through the requirement that $2K\bar r_n/n$ be below $\varsigma_0$ and that the absorbed kinks have signal strength $o(1)$, both of which hold for all large $n$ at each fixed $K$. 
\end{proof}

\begin{defin}[Grid tracking radii]
\label{def:gradii}
The radii of Definition~\ref{def:radii} cannot be used unchanged, because the constant provided by Lemma~\ref{lem:gloc} at a node depends on the scale at which that node is tracked: its proof applies Lemma~\ref{lem:transfer} on grid sub-windows tracking population sub-windows at scale $K_v+1$. The construction is instead recursive in the same finite tree. Write $G_v(K)$ for the constant that Lemma~\ref{lem:gloc} provides at $v \in \mathcal{T}_{M,R}(g)$ when its window tracks $v$ at scale $K$. Set $K_{\mathrm{root}} := 0$ and, for each child $v'$ of $v$, $K_{v'} := \max\{K_v,\, G_v(K_v)\}$; then $C_\star := \max\{\max_v K_v,\, C^{M,R}_{\mathrm{rt}}\}$, with $C^{M,R}_{\mathrm{rt}}$ as in Lemma~\ref{lem:grt}. The constants need not decrease from parent to child. Assumption~\ref{ass:gterm} makes $\mathcal{T}_{M,R}(g)$ finite, so all recursively defined constants and their maximum are finite.
\end{defin}

\begin{prop}[Recursion on tracking windows, grid]
\label{prop:gtrack}
Assume \ref{ass:gterm} and \ref{ass:gnondeg}. There is $n_0$ such that for $n \ge n_0$, on $E_n$, the conclusion of Proposition~\ref{prop:track_prop} holds for LABS-Grid with parameters $M$ and $R$, for every $v \in \mathcal{T}_{M,R}(g)$ and every $(s,e]$ tracking it at scale $K_v$, the radii being those of Definition~\ref{def:gradii}.
\end{prop}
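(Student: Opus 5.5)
The argument will mirror the proof of Proposition~\ref{prop:track_prop} line by line. I will induct on the height of the subtree of $\mathcal{T}_{M,R}(g)$ rooted at $v=(\alpha,\beta)$, which is finite by Assumption~\ref{ass:gterm}. The substitutions are: Lemma~\ref{lem:gnodet} for Lemma~\ref{lem:nodet}, Lemma~\ref{lem:gloc} for the combination of Lemmas~\ref{lem:transfer} and~\ref{lem:loc}, and Lemma~\ref{lem:grt} for the Case~B re-test. The structural case split is supplied by Proposition~\ref{prop:gpopexact} in place of \proppop{}. The order of quantifiers is the same as before. The radii $K_v$ and $C_\star$ of Definition~\ref{def:gradii} are fixed first. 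Then $n_0$ is taken as the maximum of the finitely many thresholds produced by these lemmas, applied at each node at scale $K_v$, and at each Case~B re-test window at scale $C_\star$. Because there are finitely many nodes, each with one $n_0$, this maximum is finite.

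\emph{Case $k=0$.} Lemma~\ref{lem:track}, used with $\kappa_M$ (the grid points being among the separated points of \eqref{eq:kappaM}), gives $x_j(s,e)\le K_v\bar r_n+1$ for every $j$. Lemma~\ref{lem:gnodet} then makes the proposal fail, and the call returns $\emptyset$.

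\emph{Case $k\ge1$.} Lemma~\ref{lem:gloc} at scale $K_v$ shows that the proposal passes and that $|\hat b-nc_M(\alpha,\beta)|\le G_v(K_v)\bar r_n$. Each child inherits one endpoint unchanged and takes $\hat b$ as its other endpoint. Hence the children track $(\alpha,c_M)$ and $(c_M,\beta)$ at scale $K_{v'}$, and the induction hypothesis applies to both. When $k=1$, we have $c_M=q_j$ by Proposition~\ref{prop:gpopexact}. Both children then have empty live sets, both return $\emptyset$, the look-ahead branch is skipped, and the call returns $\{\hat b\}$ with $|\hat b-\tau_j|\le C_\star\bar r_n$. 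This uses the fact that $|nq_j-\tau_j|\le 1$, which is absorbed into $\bar r_n$.

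When $k\ge2$, two cases follow as in the earlier proof. In Case~A ($c_M\notin\{q_l\}$), the children return all live kinks, each within $C_\star\bar r_n$. The re-test window $(s',e']$ then lies between consecutive detected kinks, or between a node endpoint and the extreme kink. By the $\kappa_M$-separation, every kink satisfies $x_j(s',e')\le C_\star\bar r_n+1$. Lemma~\ref{lem:gnodet}, applied with $R$ in place of $M$ and at scale $C_\star$, shows that the $R$-grid re-test fails, so the call returns $\mathcal L\cup\mathcal R$. In Case~B ($c_M=q_{j^*}$), the re-test window tracks the population re-test window $(\alpha',\beta')$ of Proposition~\ref{prop:gpopexact} at scale $C_\star$. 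Lemma~\ref{lem:grt} gives detection and $|\hat b'-\tau_{j^*}|\le C^{M,R}_{\rm rt}\bar r_n\le C_\star\bar r_n$. In both cases the cardinality is $k$. Order preservation follows from $\delta_{\min}n\gg C_\star\bar r_n$.

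The main obstacle is making sure the scales are consistent. Lemma~\ref{lem:gloc} works at scale $K_v+1$ on the grid sub-windows, so its constant depends on the scale; this is exactly why $G_v(K_v)$ enters Definition~\ref{def:gradii}. I also need to check that the re-test window tracks $(\alpha',\beta')$ at a scale for which Lemma~\ref{lem:grt} holds uniformly. That lemma's constant is independent of $K$, and only $n_0(K)$ depends on the scale. Since the re-test scale is at most $C_\star$, which is fixed before $n_0$ is chosen, there is no circularity.
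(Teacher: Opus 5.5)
Your proposal is correct and follows the paper's own proof essentially step for step. It inducts on subtree height in $\mathcal{T}_{M,R}(g)$, substitutes Lemma~\ref{lem:gnodet}, Lemma~\ref{lem:gloc} and Lemma~\ref{lem:grt} for their non-grid counterparts, and takes the Case~A/B split from Proposition~\ref{prop:gpopexact}, with the radii of Definition~\ref{def:gradii} fixed before $n_0$. Your explicit bookkeeping of the quantifier order simply spells out what the paper leaves implicit: the scale-dependent $G_v(K_v)$ at proposals, and at the re-test the $K$-independent $C^{M,R}_{\mathrm{rt}}$ with only $n_0(K)$ depending on the scale.
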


\begin{proof}
The induction is that of Proposition~\ref{prop:track_prop}, on the height of the subtree of $\mathcal{T}_{M,R}(g)$, with Lemma~\ref{lem:gnodet} in place of Lemma~\ref{lem:nodet} and Lemma~\ref{lem:gloc} in place of the combination of Lemmas~\ref{lem:transfer} and~\ref{lem:loc}. The tracking radii are those of Definition~\ref{def:gradii} above. Only the three places where the grid enters need comment.

\emph{Case $k = 0$.} Lemma~\ref{lem:track} gives
$x_j(s,e) \le C_\star\bar r_n + 1$ for every $j$, and
Lemma~\ref{lem:gnodet} returns $\emptyset$.

\emph{Case $k = 1$, $V(\alpha,\beta) = \{j\}$.} By
Proposition~\ref{prop:gpopexact}, $c_M(\alpha,\beta) = q_j$, and
Lemma~\ref{lem:gloc} gives $|\hat b - \tau_j| \le C_1\bar r_n$ without
an additional non-degeneracy condition: with one live kink the pooled profile
has its unique maximum at $q_j$ by Proposition~\ref{prop:gpopexact}, and
Lemma~\ref{lem:sep} gives negative curvature there. The children
track $(\alpha,q_j)$ and $(q_j,\beta)$, both with empty live sets, so both
return $\emptyset$ by the case $k=0$, and the call returns $\{\hat b\}$.

\emph{Case $k \ge 2$.} Lemma~\ref{lem:gloc} gives
$|\hat b - n c_M(\alpha,\beta)| \le C_1\bar r_n$, so the children track
$(\alpha, c_M)$ and $(c_M,\beta)$ at their assigned scales from
Definition~\ref{def:gradii}. These child nodes have smaller subtree height, so
the induction hypothesis says that each recursive call returns exactly one
estimate for every live kink in that child, each within
$C_\star\bar r_n$ of the corresponding kink. The
division into Cases~A and~B is by whether $c_M(\alpha,\beta)$ is a kink, as in
Proposition~\ref{prop:track_prop}. Proposition~\ref{prop:gpopexact} guarantees
that the two population children contain all current live kinks except, in
Case~B, the kink at the split. It also guarantees that the re-test window
contains no live kink in Case~A and exactly the omitted kink in Case~B. In
Case~A the re-test window therefore includes no live
kink and every kink lies within $C_\star\bar r_n + 1$ of one of its edges, so
Lemma~\ref{lem:gnodet}, applied with $R$ in place of $M$, shows that the
re-test statistic is below $\lambda$. In Case~B the re-test window carries the single live kink $\tau_{j^*}$ at distance at least $\tfrac12\kappa_M n$ from both edges. It is in general not a node of $\mathcal{T}_{M,R}(g)$, and its amplitude may be smaller than every node amplitude, which is why $\Lambda^{M,R}_{\mathrm{det}}$ was introduced; Lemma~\ref{lem:gloc}, which is stated at nodes, therefore cannot be applied to it. Lemma~\ref{lem:grt} is used instead; it shows that the re-test statistic exceeds $\lambda$ and that $|\hat b' - \tau_{j^*}| \le C^{M,R}_{\mathrm{rt}}\bar r_n \le C_\star\bar r_n$. Order preservation is unchanged.
\end{proof}

\begin{theorem}[Consistency of LABS-Grid]
\label{thm:grid}
Fix $M \ge 2$ and $R \ge 2$. Let $X_1,\ldots,X_n$ follow the model of Section~4 of the main paper, and suppose Assumptions~\ref{ass:gterm} and~\ref{ass:gnondeg} hold for $g$ with these $M$ and $R$. Let
$\hat{\mathcal{T}}_n$ be the set returned by LABS-Grid, Algorithm~S1, with the
contrast of the main paper and threshold $\lambda_n$. Then the conclusion of
\mainthm{} holds verbatim, with a constant $C$ depending on $g$, $\sigma$, $M$
and $R$.
\end{theorem}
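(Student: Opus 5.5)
The plan mirrors the proof of \mainthm{} in Section~\ref{sup:thmproof}, with Proposition~\ref{prop:gtrack} playing the role of Proposition~\ref{prop:track_prop}. Work on the event $E_n$ of Lemma~\ref{lem:noise}, which has probability tending to one. Note that its two uniform bounds are maxima over \emph{all} triples $(s,e,b)$ and quadruples $(s,e,b,b')$ with $0\le s<e\le n$. They therefore cover every grid sub-window $(g_i,g_j]$ produced by any proposal or re-test grid, for any fixed $M,R$, and no enlargement of the union bound or change of threshold is needed. Fix the radii of Definition~\ref{def:gradii} first. Since $\mathcal{T}_{M,R}(g)$ is finite by Assumption~\ref{ass:gterm}, the constant $C_\star$ is finite and depends on $g$, $\sigma$, $M$ and $R$. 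Then choose $n_0$ as the maximum of the finitely many thresholds required by Lemmas~\ref{lem:gnodet}, \ref{lem:gloc}, \ref{lem:grt} and Proposition~\ref{prop:gtrack}.

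\emph{Case $N\ge1$.} The initial call $\textsc{LABS-Grid}(X,0,n,\lambda,M,R)$ is on $(0,n]$. This window tracks the root $(0,1)$ at scale $K_{\mathrm{root}}=0$, both endpoints being exact, which also satisfies the boundary clause of Definition~\ref{def:track}. Moreover $V(0,1)=\{1,\ldots,N\}$ because every $q_j\in(0,1)$. Proposition~\ref{prop:gtrack} applied at the root then gives, on $E_n$ and for $n\ge n_0$, a returned set of cardinality $N$ whose ordered elements satisfy $|\hat\tau_{(j)}-\tau_j|\le C_\star\bar r_n$. Matching the $i$-th ordered estimate with $\tau_{j_i}=\tau_i$ uses the order preservation established inside that proposition. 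This order preservation holds because consecutive kinks are at least $\delta_{\min}n-1>2C_\star\bar r_n$ apart for large $n$. The conclusion follows with $C=C_\star$.

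\emph{Case $N=0$.} Here $g$ is affine and no window contains a kink, so $x_j(0,n)=0$ vacuously for every $j$. Lemma~\ref{lem:gnodet} with the proposal grid $M$ then shows that the grid statistic on $(0,n]$ is at most $Z_n<\lambda$. Indeed, the noiseless contrast vanishes identically on every sub-window by Lemma~\ref{lem:rep}(c), so the bound is simply the pointwise event of $E_n$ maximised over the $\binom{M}{2}$ pairs. The root proposal therefore fails and $\hat{\mathcal{T}}_n=\emptyset$.

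The only delicate point is to confirm that the constants invoked are genuinely finite and scale-consistent, so that $n_0$ exists. Specifically, $G_v(K_v)$ must be finite for each of the finitely many nodes, and $C^{M,R}_{\mathrm{rt}}$ must not depend on the tracking scale, as Lemma~\ref{lem:grt} asserts. In addition, the requirement $2(K_v+1)\bar r_n/n<\varsigma_0$ must hold eventually for each fixed $K_v$. I expect this bookkeeping, rather than any new estimate, to be the main obstacle. Once it is checked, the theorem is a direct corollary of Proposition~\ref{prop:gtrack} and Lemma~\ref{lem:gnodet}.
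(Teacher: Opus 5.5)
Your proposal is correct and follows the same route as the paper's proof. On $E_n$, the root window $(0,n]$ tracks $(0,1)$ exactly with $V(0,1)=\{1,\ldots,N\}$, so Proposition~\ref{prop:gtrack} gives the conclusion with $C=C_\star$, and for $N=0$ Lemma~\ref{lem:gnodet} gives $\hat{\mathcal{T}}_n=\emptyset$. Your additional remarks, on $E_n$ already covering every grid sub-window and on the order of fixing the radii of Definition~\ref{def:gradii} before $n_0$, only make explicit bookkeeping that the paper's proof handles implicitly.
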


\begin{proof}
As in Section~\ref{sup:thmproof}: the initial call is on $(0,n]$, which tracks
the root $(0,1)$ of $\mathcal{T}_{M,R}(g)$ exactly and has
$V(0,1) = \{1,\ldots,N\}$, so Proposition~\ref{prop:gtrack} applies at the
root. If $N = 0$ then no window contains a kink and Lemma~\ref{lem:gnodet}
gives $\hat{\mathcal{T}}_n = \emptyset$.
\end{proof}

\subsection{Remarks}
\label{sup:gridremarks}

\begin{remark}[No condition on $R$]
\label{rem:noR}
Assumptions~\ref{ass:gterm} and~\ref{ass:gnondeg} restrict $M$ only, and
Theorem~\ref{thm:grid} holds for every $R \ge 2$. The reason is that the
re-test window is never in the position that requires an assumption. In Case~A
it includes no live kink, and Lemma~\ref{lem:gnodet} needs no hypothesis beyond
the dichotomy; in Case~B it includes exactly one, and a window with one live kink has a unique and non-degenerate pooled maximizer, at the kink, for every grid size. The look-ahead has already reduced the re-test to a single-change-point problem, for which the conclusion holds at every grid size. In particular $R = 2$ suffices, which is the formal counterpart
of the observation in Section~3.2 of the main paper that a single-interval
evaluation should suffice at the re-test stage.
\end{remark}

\begin{remark}[The threshold does not depend on $M$]
\label{rem:thresholdM}
The same $\lambda_n$ serves every $M$. This is because the event $E_n$ of
Lemma~\ref{lem:noise} controls $|\langle\varepsilon,\hat\psi_b\rangle|$
simultaneously over all $O(n^3)$ triples $(s,e,b)$, which already includes
every sub-interval any grid can produce; enlarging the collection searched by a
fixed factor $\binom{M}{2}$ does not change the order of the maximum. The
observation in Section~3.2 of the main paper, that the null distribution of the
proposal statistic grows with $M$, concerns the sharpness of the constant in
$\lambda_n$ rather than its order, and remains relevant in finite samples.
\end{remark}

\begin{remark}[What the grid changes]
\label{rem:gridchanges}
The grid does not weaken the conditions; it changes the population recursion to
which they are applied. A window carrying several live kinks may have a
sub-window that isolates one of them, and if the contrast on that sub-window
exceeds the contrast on every other, then $c_M(\alpha,\beta)$ is a kink where
$c(\alpha,\beta)$ was not. In the language of
Proposition~\ref{prop:gpopexact} this moves a node from Case~A to Case~B: the
split is made at a change-point rather than between change-points, and the
look-ahead recovers the change-point at the re-test rather than discarding a
spurious estimate. This may reduce the recursion depth. It also means that
Assumptions~\ref{ass:gterm} and~\ref{ass:gnondeg} are specific to $M$:
they may hold for one $M$ and fail for another, and neither implies
\assterm{} or \assnondeg{}.
\end{remark}

\begin{remark}[Recovering \mainthm{}]
\label{rem:M2}
When $M = 2$ the only admissible pair is $(1,2)$, so
$\bar D^{(2)}_{\alpha\beta} = |\bar D_{\alpha\beta}|$,
$c_2 = c$, $\mathcal{T}_{2,R}(g) = \mathcal{T}(g)$ and
$\mathcal{G}_{2,R}(g)$ consists of the window endpoints, so $\kappa_2 = \kappa$
and $\mathcal{P}^*$ is a singleton with $\iota = 1$.
Assumptions~\ref{ass:gterm} and~\ref{ass:gnondeg} reduce to \assterm{} and
\assnondeg{}, and Theorem~\ref{thm:grid} reduces to \mainthm{}.
\end{remark}

\section{Consistency under strengthened SIC selection}
\label{sup:ssic}

\mainthm{} and Theorem~\ref{thm:grid} concern LABS run at a single threshold in
a range that depends on unknown quantities. The procedure used in Section~5 of
the main paper instead computes a solution path over a grid of thresholds and
selects from it by a strengthened Schwarz criterion. This section shows that
the selected model is consistent, with the same rate.

The argument has two parts. First, the path contains a candidate change-point
set that satisfies the exact-count and localization conclusions of
\mainthm{}. Second, the criterion does not prefer any other set on the path: a
set that misplaces a change-point pays in residual sum of squares, and a set
with superfluous change-points pays in penalty. The second comparison is what
requires the exponent in the penalty to exceed one.

\subsection{The procedure}
\label{sup:ssicproc}

For $\mathcal{T}\subset\{1,\ldots,n-1\}$, let
\[
\mathcal{M}(\mathcal{T})
:=\operatorname{span}\big\{\mathbf 1,t,(t-u)_+:u\in\mathcal{T}\big\}.
\]
This is the space of vectors obtained by restricting continuous
piecewise-linear functions with knots in $\mathcal{T}$ to the observation grid
$\{1,\ldots,n\}$. Here an affine function is one of the form $a+bt$. The
dimension is at most $|\mathcal{T}|+2$. If $1\notin\mathcal{T}$, the displayed
vectors are linearly independent: their second differences identify the
coefficients of the hinge functions $(t-u)_+$ separately, after which the
constant and linear coefficients must also vanish. If $1\in\mathcal{T}$, then
$(t-1)_+=t-1$ for every observed $t$, so this hinge vector already lies in
$\operatorname{span}\{\mathbf 1,t\}$ and does not add a dimension. For every
$u\ge2$, by contrast, $(t-u)_+$ has a change of slope on the observation grid
and is not affine on the whole grid. The proof
below uses only the upper bound $\dim\mathcal{M}(\mathcal{T})\le
|\mathcal{T}|+2$, in the union bound in Lemma~\ref{lem:proj}. Let
$P_{\mathcal{T}}$ be orthogonal projection onto
$\mathcal{M}(\mathcal{T})$, and let
\[
\mathrm{RSS}(\mathcal{T}) := \big\| (I - P_{\mathcal{T}}) X \big\|^2 ,
\qquad
\mathrm{sSIC}(\mathcal{T}) := n \log\{\mathrm{RSS}(\mathcal{T})/n\}
+ \big(2|\mathcal{T}|+3\big) \xi_n ,
\]
with $\xi_n := (\log n)^{\gamma}$ and $\gamma > 1$ fixed; Section~5 of the main
paper takes $\gamma = 1.01$. Write
\[
\hat\sigma := \mathrm{MAD}\big(\Delta^2 X / \sqrt6\big),
\qquad
\lambda_a := a\, \hat\sigma\, (2\log n)^{1/2},
\qquad
\Pi_n := \big\{ \hat{\mathcal{T}}(\lambda_a) : a \in A_n \big\},
\]
where $\hat{\mathcal{T}}(\lambda)$ is the output of LABS, or of LABS-Grid, at
threshold $\lambda$, and $A_n$ is the grid of multipliers. The selected set is
\[
\hat{\mathcal{T}}^{\,\mathrm{sSIC}} := \arg\min_{\mathcal{T} \in \Pi_n}
\mathrm{sSIC}(\mathcal{T}),
\]
ties being broken by taking the smallest $|\mathcal{T}|$ and then the smallest
set in lexicographic order.

For a candidate set write
\[
b(\mathcal{T}) := \big\| (I - P_{\mathcal{T}}) f \big\|^2 ,
\qquad
\delta_j(\mathcal{T}) := \min\big\{ |\tau_j - u| : u \in \mathcal{T} \cup
\{0,n\} \big\} ,
\]
for the approximation error and the distance from the $j$th change-point to the
nearest knot.

\subsection{Conditions}
\label{sup:ssiccond}

In addition to \assterm{} and \assnondeg{}, or to their grid counterparts, we
require the following.

\begin{assumption}[Noise]
\label{ass:ssicnoise}
$\varepsilon_1,\ldots,\varepsilon_n$ are independent and identically distributed with mean zero, variance $\sigma^2$, and $\sigma$-sub-Gaussian, and $\hat\sigma/\sigma \to 1$ in probability. The two roles of $\sigma$ have to coincide here. Section~4 of the main paper assumes a sub-Gaussian parameter $\sigma$, and the threshold band of Remark~\ref{rem:uniformlam} is expressed in terms of that parameter, whereas $\hat\sigma$ estimates the standard deviation; if the two differed by a factor $K > 1$, the band would have to start at $(1+\eta)K\sigma(8\log n)^{1/2}$ and the constants below would depend on $K$. Assuming both equal to $\sigma$, which holds for Gaussian errors, avoids carrying $K$ through the argument.
\end{assumption}

\begin{assumption}[Threshold grid]
\label{ass:ssicgrid}
Suppose first that $N \ge 1$. There are fixed $\eta > 0$ and $\varsigma \in (0, \tfrac12\Lambda_{\mathrm{det}})$, or $\varsigma \in (0,\tfrac12\Lambda^{M,R}_{\mathrm{det}})$ in the grid case, such that, with probability tending to one, $A_n$ contains a value $a$ with $(1+\eta)\,\sigma(8\log n)^{1/2} \le \lambda_a \le \varsigma n^{1/2}$, that is, with $\lambda_a$ in the band of Remark~\ref{rem:uniformlam}. If $N = 0$, in which case $\Lambda_{\mathrm{det}}$ is a minimum over an empty set and is undefined, there is a fixed $\eta > 0$ such that, with probability tending to one, $A_n$ contains a value $a$ with $\lambda_a \ge (1+\eta)\sigma(8\log n)^{1/2}$, at which LABS returns the empty set. The upper restriction is important as a threshold of order $L n^{1/2}$ with $L$ above the population contrast at the root satisfies the lower bound but causes LABS to return the empty set, so without it the path need not contain a consistent model at all. The factor $1+\eta$ gives a fixed multiplicative gap above the noise bound, meaning that $\eta$ is positive and does not depend on $n$. This gap is needed because $\lambda_a$ is random through $\hat\sigma$; by Remark~\ref{rem:uniformlam} the conclusion of \mainthm{} then holds simultaneously for all thresholds in that band, so it may be applied at the random $\lambda_a$.
\end{assumption}

\begin{assumption}[Bounded path]
\label{ass:ssicpath}
$\max_{\mathcal{T} \in \Pi_n} |\mathcal{T}| \le K_n$ for a sequence with
$K_n \log n = o(n)$.
\end{assumption}

Assumption~\ref{ass:ssicnoise} strengthens the noise condition of Section~4 of
the main paper from independent to independent and identically distributed with
a finite variance, which is what makes $\hat\sigma$ and $\|\varepsilon\|^2$
well-behaved; the convergence of $\hat\sigma$ holds for the median absolute deviation
with the Gaussian consistency factor when the errors are Gaussian, the $N$
contaminated second differences at the kinks being of size $O(n^{-1})$ and
finite in number.

Assumption~\ref{ass:ssicgrid} is discussed in Remark~\ref{rem:gridrange}.
Assumption~\ref{ass:ssicpath} excludes candidate models whose dimension is
comparable to $n$, for which $n\log\{\mathrm{RSS}(\mathcal{T})/n\}$ is not
informative. It is satisfied by any implementation that caps the number of
estimated change-points, and it holds automatically for those $\lambda_a$ that
exceed $\sigma(8\log n)^{1/2}$, for which \mainthm{} gives
$|\hat{\mathcal{T}}(\lambda_a)| = N$.

\subsection{Two lemmas}
\label{sup:ssiclem}

\begin{lemma}[Approximation error]
\label{lem:approx}
Let $\mathcal{T} \subset \{1,\ldots,n-1\}$ and write $\delta_j =
\delta_j(\mathcal{T})$. Then:
\begin{enumerate}
\item[\emph{(a)}] $b(\mathcal{T}) \le N \bar d^{\,2} n^{-1}
\sum_{j=1}^N \delta_j^2$;
\item[\emph{(b)}] if $\varrho < \delta_{\min}/2$, so that the interval $I_j := (\tau_j - \lfloor\varrho n\rfloor,\ \tau_j + \lfloor\varrho n\rfloor]$, whose endpoints are integers, contains no change-point other than $\tau_j$, and if $I_j$ contains exactly one point of $\mathcal{T} \cup \{0,n\}$ and $\delta_j \le \varrho n / 2$, then
$b(\mathcal{T}) \ge \tfrac18 c_\ell^2 c_0\, \underline d^{\,2}\, \varrho\, \delta_j^2 / n$, with $c_0$ and $c_\ell$ the constants of Lemmas~\ref{lem:sep} and~\ref{lem:lev};
\item[\emph{(c)}] if $\varrho < \delta_{\min}/2$ and $I_j$ contains no point of $\mathcal{T} \cup \{0,n\}$, then $b(\mathcal{T}) \ge \tfrac18 c_\ell^2 \underline d^{\,2} \varrho^3 n$.
\end{enumerate}
\end{lemma}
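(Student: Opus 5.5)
For (a) I will exhibit an explicit member of $\mathcal{M}(\mathcal{T})$ that is close to $f$, since $b(\mathcal{T})$ is at most the squared distance from $f$ to any such vector. By Lemma~\ref{lem:rep}(a) applied on $(0,n]$,
\[
f_t = a + ct + \sum_{j}\Delta_j (t-\tau_j)_+ .
\]
For each $j$, let $u_j\in\mathcal{T}\cup\{0,n\}$ be the knot attaining $\delta_j$. Replace $(t-\tau_j)_+$ by $(t-u_j)_+$; when $u_j\in\{0,n\}$ this hinge is affine on the grid, so it still lies in $\mathcal{M}(\mathcal{T})$. The difference $(t-\tau_j)_+-(t-u_j)_+$ is bounded by $\delta_j$ in absolute value. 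It vanishes on one side of both knots and is constant, of size $\delta_j$, on the other side. Subtracting that constant, which lies in $\mathcal{M}$, I obtain a residual supported on at most $\delta_j$ points and bounded by $\delta_j$. This only gives $\delta_j^3$ per term, which is not the stated form. The simpler route is the crude bound: the residual of the whole sum is at most $\sum_j |\Delta_j|\,\delta_j$ at each of the $n$ points. Squaring, applying Cauchy--Schwarz and using $|\Delta_j|\le\bar d/n$ gives
\[
b\le n\cdot N\,\bar d^2 n^{-2}\sum_j\delta_j^2 ,
\]
which is exactly (a).

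For (b) and (c) I will restrict to the window $I_j$, which I call $W'=(s,e]$ with $s=\tau_j-\lfloor\varrho n\rfloor$ and $e=\tau_j+\lfloor\varrho n\rfloor$. Restricting vectors to $W'$ can only decrease the squared norm of the residual. On $W'$, any $h\in\mathcal{M}(\mathcal{T})$ restricts to an affine function plus at most one hinge at the unique knot $u\in I_j$; in (c) there is no such hinge. Lemma~\ref{lem:rep}(a) gives $f=\text{affine}+\Delta_j(t-\tau_j)_+$ on $W'$. After projecting out affine functions, the residual is therefore at least the squared distance from $\Delta_j\psi_{\tau_j}$ to $\mathrm{span}\{\psi_u\}$. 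That squared distance equals
\[
\Delta_j^2\|\psi_{\tau_j}\|^2\bigl(1-\rho(\tau_j,u)^2\bigr).
\]
In (c) it is simply $\Delta_j^2\|\psi_{\tau_j}\|^2$. Lemma~\ref{lem:lev} with $x=\lfloor\varrho n\rfloor$ gives $\|\psi_{\tau_j}\|^2\ge c_\ell^2x^3$, and $|\Delta_j|\ge\underline d/n$. Hence
\[
b\ge c_\ell^2\underline d^2\lfloor\varrho n\rfloor^3/n^2\ge\tfrac18c_\ell^2\underline d^2\varrho^3 n
\]
for large $n$, using $\lfloor\varrho n\rfloor\ge\varrho n/2$. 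This proves (c).

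In (b) I need to check one thing first: if the knot $u$ is an endpoint $0$ or $n$, the hinge there is affine on $W'$ and the bound from (c) applies. Otherwise I use $1-\rho^2\ge1-\rho$ together with Lemma~\ref{lem:sep}. With $x=\lfloor\varrho n\rfloor$ and $|u-\tau_j|=\delta_j\le\varrho n/2\le x$ (up to rounding), the minimum in Lemma~\ref{lem:sep} is $(\delta_j/x)^2$. This gives
\[
b\ge \frac{\underline d^2}{n^2}\,c_\ell^2x^3\,c_0\frac{\delta_j^2}{x^2}=c_\ell^2c_0\underline d^2\frac{x\delta_j^2}{n^2}\ge\tfrac12 c_\ell^2c_0\underline d^2\varrho\,\delta_j^2/n ,
\]
which is stronger than the claimed $\tfrac18$.

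The main obstacle is the boundary bookkeeping in (b) and (c). First, $u$ must lie in $B_{s,e}$ for Lemma~\ref{lem:sep} to apply; if $u$ is within two points of an edge of $W'$, its hinge is nearly affine there, and I would handle this by a continuity argument or by folding it into case (c). Second, the rounding in $\delta_j\le x$ must be controlled. The slack factor $\tfrac18$ is meant to absorb both of these.
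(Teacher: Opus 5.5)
Your proof is correct and is essentially the paper's. It uses the same approximant and sup-norm/Cauchy--Schwarz bound in (a), and in (b)--(c) the same restriction to $I_j$, reduction to $\Delta_j^2\|\psi_{\tau_j}\|^2(1-\rho^2)$ and appeal to Lemmas~\ref{lem:lev} and~\ref{lem:sep}. Your exact cancellation $L^3(\delta_j/L)^2=L\delta_j^2$ even yields the constant $\tfrac12$ in (b), where the paper bounds the two factors separately and gets $\tfrac18$.

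Two small corrections. First, the discarded aside in (a) is false: after subtracting the constant, the residual is nonzero on the whole other side, and a $\delta_j^3/n^2$ bound would contradict (b). Second, your boundary worries in (b) are moot: $\delta_j\le\varrho n/2<\lfloor\varrho n\rfloor$, and the only points of $I_j$ outside $B_{s,e}$ are $s+1$ and $e$, whose hinges are exactly affine on $I_j$, so the bound of (c) covers them.
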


\begin{proof}
(a) Let $\hat\tau_j \in \mathcal{T} \cup \{0,n\}$ attain $\delta_j$ and take $g \in \mathcal{M}(\mathcal{T})$ with the same affine part as $f$ and knot coefficients $+\Delta_j$ at $\hat\tau_j$, other coefficients zero. Then
$h := f - g = \sum_j \Delta_j\{(t-\tau_j)_+ - (t-\hat\tau_j)_+\}$ up to an
affine function, and each bracket is bounded in modulus by $\delta_j$, so
$\|h\|_\infty \le \sum_j |\Delta_j| \delta_j$ and, by Cauchy--Schwarz and $|\Delta_j| \le \bar d/n$ from Section~4 of the main paper, $b(\mathcal{T}) \le \|h\|^2 \le n (\sum_j |\Delta_j|\delta_j)^2 \le
N \bar d^{\,2} n^{-1} \sum_j \delta_j^2$.

(b) Let $\hat\tau_j$ be the unique point of
$\mathcal{T}\cup\{0,n\}$ in $I_j$. To lower-bound $b(\mathcal{T})$, restrict
the squared approximation error to the coordinates in $I_j$. The signal $f$
has $\tau_j$ as its only kink there because
$\varrho<\delta_{\min}/2$. Every function in
$\mathcal{M}(\mathcal{T})$, when restricted to $I_j$, has
$\hat\tau_j$ as its only possible knot, since knots outside $I_j$ contribute
only an affine function on this interval. Therefore
$b(\mathcal{T}) \ge |\Delta_j|^2 \|\psi^{I_j}_{\tau_j}\|^2
\{1 - \rho(\tau_j,\hat\tau_j)^2\}$, where $\psi^{I_j}$ and $\rho$ are formed on
$I_j$. Writing $L := \lfloor \varrho n\rfloor$, so that $\tau_j$ is at distance $L$ from each end of $I_j$, Lemma~\ref{lem:lev} gives $\|\psi^{I_j}_{\tau_j}\|^2 \ge c_\ell^2 L^3 \ge c_\ell^2 (\varrho n)^3/8$ once $\varrho n \ge 2$, and
Lemma~\ref{lem:sep} gives $1-\rho^2 \ge 1-\rho \ge
c_0 \{\delta_j/(\varrho n)\}^2$. Multiplying the lower bounds for
$\|\psi^{I_j}_{\tau_j}\|^2$ and $1-\rho^2$, and then using
$|\Delta_j| \ge \underline d/n$, gives the bound.

(c) As in (b), but now $f$ restricted to $I_j$ is compared with affine functions only, so $b(\mathcal{T}) \ge |\Delta_j|^2\|\psi^{I_j}_{\tau_j}\|^2 \ge c_\ell^2\underline d^{\,2} n^{-2} L^3 \ge c_\ell^2\underline d^{\,2}\varrho^3 n/8$. Here and in (b) the factors $1/8$ and, in (b), $1/2$ in the separation ratio $\delta_j/L \le 2\delta_j/(\varrho n)$ are the price of using integer endpoints; no rate is affected.
\end{proof}

Note the two-sided form of Lemma~\ref{lem:approx}: part (a) says that a set
localized to within $\eta$ has approximation error at most of order $\eta^2/n$,
so a set localized to within $\bar r_n$ has error $O(\log n)$; part (b) says
that this is the right order, so that error $O(\log n)$ forces localization
$O(\bar r_n)$. 

\begin{lemma}[Uniform control of projected noise]
\label{lem:proj}
Under Assumption~\ref{ass:ssicnoise}, $\varepsilon_t/\sigma$ is
$1$-sub-Gaussian. The Hanson--Wright inequality
\citep[Theorem~1.1]{rv13} therefore has an absolute positive constant that is
independent of $n$, of the projection and of the error distribution within
this normalized sub-Gaussian class. Consequently, there is a universal
constant $C_P$ and an event $F_n$ with $\mathbb{P}(F_n)\to1$ on which
\[
\big\| P_{\mathcal{S}} \varepsilon \big\|^2 \;\le\;
C_P\, \sigma^2\, (d+2) \log n
\quad \text{whenever } |\mathcal{S}| = d \le n/2,
\]
and
\[
\Big| \|\varepsilon\|^2 - n\sigma^2 \Big| \;\le\; \tfrac12 n \sigma^2 .
\]
\end{lemma}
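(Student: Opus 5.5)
The plan is to prove the two bounds separately, each as a sub-Gaussian tail estimate followed by a union bound; normalizing by $\sigma$ reduces everything to $1$-sub-Gaussian coordinates.

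\emph{Projected noise.} Fix $\mathcal{S}$ with $|\mathcal{S}|=d$, and let $P=P_{\mathcal{S}}$ be the projection onto $\mathcal{M}(\mathcal{S})$. Its rank is $r\le d+2$, and it satisfies $\|P\|_F^2=r$, $\|P\|_{\mathrm{op}}\le1$ and $\mathrm{tr}\,P=r$. We have $\|P\varepsilon\|^2=\varepsilon^\top P\varepsilon$ and $\mathbb{E}\,\varepsilon^\top P\varepsilon=\sigma^2 r$. Hanson--Wright then gives
\[
\mathbb{P}\big(\varepsilon^\top P\varepsilon-\sigma^2r>\sigma^2 t\big)\le\exp\{-c\min(t^2/r,\,t)\}.
\]
I take $t=K(d+2)\log n$ with $K\ge1$. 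Then $t^2/r\ge t$, so the bound is $\exp\{-cK(d+2)\log n\}=n^{-cK(d+2)}$.

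For the union bound, the number of sets with $|\mathcal{S}|=d$ is at most $\binom{n}{d}\le n^d$. Summing over $d\le n/2$ gives
\[
\sum_d n^{d-cK(d+2)}\le\sum_{d\ge0}n^{-(d+2)}\cdot n\to0
\]
once $cK\ge 2$, since then $d-cK(d+2)\le -d-4$. This yields the first event with $C_P=1+K$, which is universal because $c$ is.

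\emph{Total noise.} Hanson--Wright with $P=I$, or equivalently Bernstein's inequality for the sub-exponential variables $\varepsilon_t^2-\sigma^2$, gives
\[
\mathbb{P}\big(|\|\varepsilon\|^2-n\sigma^2|>n\sigma^2/2\big)\le2\exp(-c'n)\to0.
\]
I set $F_n$ to be the intersection of the two events.

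The main care point is that Hanson--Wright as stated in \citet{rv13} is for centered coordinates with $\psi_2$-norm bounded by $K_0$. Here $\sigma$-sub-Gaussianity, meaning $\mathbb{E}e^{t\varepsilon}\le e^{\sigma^2t^2/2}$, bounds the $\psi_2$-norm of $\varepsilon_t/\sigma$ by an absolute constant, so the constant $c$ is absolute. Independence is what Hanson--Wright requires; identical distribution is not needed here. I also expect the restriction $d\le n/2$ to be used only to keep the sum finite, and even that is harmless.
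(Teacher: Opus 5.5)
Your proposal is correct and follows the paper's proof: apply Hanson--Wright to the projection of rank at most $d+2$ with $t$ of order $(d+2)\log n$, then take a union bound over the at most $n^d$ sets of each size. The only difference is the second bound. The paper invokes the law of large numbers for $\varepsilon_t^2$, whereas your Hanson--Wright/Bernstein argument with $A=I$ gives an exponential tail and needs only independence, not identical distribution.
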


\begin{proof}
For fixed $\mathcal{S}$, $\|P_{\mathcal{S}}\varepsilon\|^2$ is a quadratic form in $\varepsilon$ whose matrix has rank at most $d+2$ and operator norm one, so the Hanson--Wright inequality gives $\mathbb{P}(\|P_{\mathcal{S}}\varepsilon\|^2 > \sigma^2(d+2) + t) \le 2\exp[-c\min\{t^2/\{\sigma^4 (d+2)\},\, t/\sigma^2\}]$. Taking $t = C_P\sigma^2 (d+2)\log n$ with $C_P$ large enough makes this at most $2\exp\{-3(d+2)\log n\}$, and there are at most $\binom{n}{d} \le \exp(d\log n)$ sets
of size $d$, so the first event fails with probability at most $2\sum_{d\le n/2} \exp\{-(2d+6)\log n\} \le 4n^{-6}$. The second is the law of large
numbers for $\varepsilon_t^2$, whose mean is $\sigma^2$ and whose variance is
finite under Assumption~\ref{ass:ssicnoise}.
\end{proof}

\subsection{The theorem}
\label{sup:ssicthm}

\begin{theorem}[Consistency of sSIC-selected LABS]
\label{thm:ssic}
Let the model and assumptions of \mainthm{} hold, together with
Assumptions~\ref{ass:ssicnoise} to~\ref{ass:ssicpath}, and let $\gamma > 1$.
Write $\hat N = |\hat{\mathcal{T}}^{\,\mathrm{sSIC}}|$ and let $\hat\tau_{(1)} < \cdots < \hat\tau_{(\hat N)}$ be the selected change-points. If $N \ge 1$ there is $C < \infty$, depending only on $g$, $\sigma$ and $\gamma$, such that \[ \mathbb{P}\Big( \hat N = N \ \text{ and } \ \max_{1 \le j \le N} \big| \hat\tau_{(j)} - \tau_j \big| \le C (n\log n)^{1/2} \Big) \;\longrightarrow\; 1 , \] and if $N = 0$ then $\mathbb{P}(\hat N = 0) \to 1$.
The same holds for LABS-Grid with fixed $M, R \ge 2$, with \mainthm{} replaced
by Theorem~\ref{thm:grid} and its assumptions in
Assumption~\ref{ass:ssicgrid}.
\end{theorem}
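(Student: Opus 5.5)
The plan is to work on the intersection of $E_n$ (Lemma~\ref{lem:noise}), $F_n$ (Lemma~\ref{lem:proj}), the event that $\hat\sigma/\sigma\in(1-\epsilon,1+\epsilon)$, and the event of Assumption~\ref{ass:ssicgrid}. All four have probability tending to one. On this intersection everything is deterministic, and the proof has two parts.

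\emph{Part 1: the path contains a good model.} Assumption~\ref{ass:ssicgrid} provides a multiplier $a\in A_n$ with $\lambda_a$ in the band of Remark~\ref{rem:uniformlam}. By that remark, the set $\mathcal{T}^*:=\hat{\mathcal{T}}(\lambda_a)\in\Pi_n$ satisfies $|\mathcal{T}^*|=N$ and $\max_j\delta_j(\mathcal{T}^*)\le C_\star\bar r_n$. When $N=0$, $\mathcal{T}^*=\emptyset$.

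Lemma~\ref{lem:approx}(a) then gives $b(\mathcal{T}^*)=O(\log n)$. Since $f\in\mathcal{M}(\{\tau_j\})$, the vector $(I-P_{\mathcal{T}^*})f$ lies in $\mathcal{M}(\mathcal{T}^*\cup\{\tau_1,\ldots,\tau_N\})$. The cross term is therefore at most $b^{1/2}\|P_{\mathcal{T}^*\cup\{\tau_j\}}\varepsilon\|=O(\log n)$ by Lemma~\ref{lem:proj}. Hence
\[
\mathrm{RSS}(\mathcal{T}^*)=\|\varepsilon\|^2-\|P_{\mathcal{T}^*}\varepsilon\|^2+O(\log n)\le\|\varepsilon\|^2+O(\log n),
\]
and in particular $\mathrm{RSS}(\mathcal{T}^*)/n\to\sigma^2$.

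\emph{Part 2: every other path element loses.} It suffices to show that $\mathrm{sSIC}(\mathcal{T})>\mathrm{sSIC}(\mathcal{T}^*)$ for every $\mathcal{T}\in\Pi_n$ that is not \emph{good}, where good means $|\mathcal{T}|=N$ and $\max_j\delta_j(\mathcal{T})\le C\bar r_n$. The basic tool is the following decomposition, valid for any $\mathcal{T}$ with $|\mathcal{T}|\le K_n$:
\[
\mathrm{RSS}(\mathcal{T})\ \ge\ \|\varepsilon\|^2-\|P_{\mathcal{T}\cup\{\tau_j\}}\varepsilon\|^2+b(\mathcal{T})-2\,b(\mathcal{T})^{1/2}\|P_{\mathcal{T}\cup\{\tau_j\}}\varepsilon\|.
\]
By Lemma~\ref{lem:proj}, $\|P_{\mathcal{T}\cup\{\tau_j\}}\varepsilon\|^2\le C_P\sigma^2(|\mathcal{T}|+N+2)\log n$. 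With $x=\mathrm{RSS}(\mathcal{T})/\mathrm{RSS}(\mathcal{T}^*)$, we have $n\log x\ge n(x-1)/\max(x,1)$, and we split into three cases.

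\emph{Case (i): $|\mathcal{T}|>N$.} Here $b(\mathcal{T})\ge0$, and the cross term can be absorbed via $b-2b^{1/2}y\ge-y^2$. This gives
\[
\mathrm{RSS}(\mathcal{T})-\mathrm{RSS}(\mathcal{T}^*)\ge-C(|\mathcal{T}|+N+2)\log n .
\]
Since $|\mathcal{T}|\log n\le K_n\log n=o(n)$, the ratio tends to one, so $n\log x\ge-C'(|\mathcal{T}|+N+2)\log n$. The penalty difference is $2(|\mathcal{T}|-N)\xi_n$. Because $|\mathcal{T}|-N\ge1$, we have $(|\mathcal{T}|+N+2)\le(2N+3)(|\mathcal{T}|-N)$, and $\xi_n/\log n\to\infty$ since $\gamma>1$, so the penalty wins. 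This is the only place where $\gamma>1$ is used.

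\emph{Case (ii): $|\mathcal{T}|\le N$ and some $I_j$ contains no knot.} The penalty difference is at least $-2N\xi_n$. Lemma~\ref{lem:approx}(c) gives $b(\mathcal{T})\ge cn$. The noise terms are $O(\log n)$, so $x\ge1+c'$ and $n\log x\gtrsim n\gg N\xi_n$.

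\emph{Case (iii): $|\mathcal{T}|\le N$ and every $I_j$ contains at least one knot.} Since $|\mathcal{T}|\le N$ and the $I_j$ are disjoint, each $I_j$ contains exactly one knot and $|\mathcal{T}|=N$. If some $\delta_j>\varrho n/2$, argue as in case (ii) using Lemma~\ref{lem:approx}(b) with $\varrho$ halved. Otherwise Lemma~\ref{lem:approx}(b) gives $b(\mathcal{T})\ge c\,\delta_j^2/n$. Because the penalties are then equal, it suffices that $b(\mathcal{T})-2b^{1/2}O((\log n)^{1/2})>C''\log n$. This holds once $\delta_j\ge C\bar r_n$ with $C$ chosen large relative to $C_P$, $\sigma$, $N$ and the constant in $b(\mathcal{T}^*)$; this choice of $C$ fixes the constant in the theorem.

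Together the three cases show that the minimizer is good. The case $N=0$ is case (i) alone, with $\mathcal{T}^*=\emptyset$. For LABS-Grid, only Part 1 changes, using Theorem~\ref{thm:grid} and $\Lambda^{M,R}_{\mathrm{det}}$.

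The main obstacle is bookkeeping of the noise cross term uniformly over the path. The key device is that $(I-P_{\mathcal{T}})f\in\mathcal{M}(\mathcal{T}\cup\{\tau_j\})$, which reduces that term to the projected-noise bound of Lemma~\ref{lem:proj}. It also requires checking that the ratio linearization of the logarithm is legitimate in case (i), which is where $K_n\log n=o(n)$ enters.
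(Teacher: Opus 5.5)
Your proposal is correct and is essentially the paper's proof. It uses the same ingredients:
\begin{itemize}
\item the same good set $\mathcal{T}^\star$ from Remark~\ref{rem:uniformlam};
\item the same control of the cross term through $(I-P_{\mathcal{T}})f\in\mathcal{M}(\mathcal{T}\cup\mathcal{T}_0)$ and Lemma~\ref{lem:proj};
\item the same over-fitting argument via $q+N+2\le(2N+3)(q-N)$ and $\gamma>1$;
\item Lemma~\ref{lem:approx}(b),(c) for under-fitting and localization.
\end{itemize}
Your fixed-window case split simply bypasses the paper's intermediate $(n^2\log n)^{1/3}$ localization in Step~7.

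There are two harmless slips. First, $\log x\ge (x-1)/\max(x,1)$ fails for $x<1$; use $\log x\ge 1-1/x$ together with $x\ge 1-o(1)$. Second, the far-knot subcase of (iii) needs Lemma~\ref{lem:approx}(c) with $\varrho/2$, not part (b).
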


\begin{proof}
Let $G_n$ be the intersection of three events. The first is the uniform
contrast-noise event $E_n$ from Lemma~\ref{lem:noise}. The second is the
projected-noise event $F_n$ from Lemma~\ref{lem:proj}. The third is the
threshold-grid event asserted in Assumption~\ref{ass:ssicgrid}. Work on $G_n$,
whose probability tends to one. Write
$\mathcal{T}_0 = \{\tau_1,\ldots,\tau_N\}$, $q = |\mathcal{T}|$ and
$\Phi_q^2 := C_P\sigma^2(q+N+2)\log n$, and recall $\bar r_n = (n\log n)^{1/2}$.
Constants $c_1, c_2, \ldots$ depend only on $g$ and $\sigma$.

\medskip
\noindent\emph{Step 1: the path contains a good set.} By
Assumption~\ref{ass:ssicgrid} there is $a \in A_n$ with $\lambda_a$ admissible,
so \mainthm{} applies at that threshold and
$\mathcal{T}^\star := \hat{\mathcal{T}}(\lambda_a) \in \Pi_n$ satisfies
$|\mathcal{T}^\star| = N$ and $\delta_j(\mathcal{T}^\star) \le C_\star \bar r_n$
for every $j$. Lemma~\ref{lem:approx}(a) gives
$b(\mathcal{T}^\star) \le N^2 \bar d^{\,2} C_\star^2 \log n =: c_1 \log n$.

\medskip
\noindent\emph{Step 2: an upper bound for $\mathrm{RSS}(\mathcal{T}^\star)$.}
Put $\mathcal{A}^\star = \mathcal{T}^\star \cup \mathcal{T}_0$, so that
$f \in \mathcal{M}(\mathcal{A}^\star)$ and
$(I-P_{\mathcal{T}^\star})f \in \mathcal{M}(\mathcal{A}^\star)$. Hence
$|\langle (I-P_{\mathcal{T}^\star})f, \varepsilon\rangle| \le
b(\mathcal{T}^\star)^{1/2} \|P_{\mathcal{A}^\star}\varepsilon\|$, and
$|\mathcal{A}^\star| \le 2N$, so Lemma~\ref{lem:proj} gives
$\|P_{\mathcal{A}^\star}\varepsilon\|^2 \le C_P\sigma^2 (2N+2)\log n$. Therefore
\begin{equation}
\label{eq:rssstar}
\mathrm{RSS}(\mathcal{T}^\star) \;\le\; \|\varepsilon\|^2 +
b(\mathcal{T}^\star) + 2 b(\mathcal{T}^\star)^{1/2}
\|P_{\mathcal{A}^\star}\varepsilon\| \;\le\; \|\varepsilon\|^2 + c_2 \log n .
\end{equation}

\medskip
\noindent\emph{Step 3: a lower bound for $\mathrm{RSS}(\mathcal{T})$.} Let
$\mathcal{T} \in \Pi_n$ and $\mathcal{A} = \mathcal{T} \cup \mathcal{T}_0$.
Since $\mathcal{M}(\mathcal{T}) \subseteq \mathcal{M}(\mathcal{A})$ and
$P_{\mathcal{A}}f = f$,
\[
\mathrm{RSS}(\mathcal{T}) = \mathrm{RSS}(\mathcal{A}) +
\big\| (P_{\mathcal{A}} - P_{\mathcal{T}})X \big\|^2
\;\ge\; \|\varepsilon\|^2 - \|P_{\mathcal{A}}\varepsilon\|^2 +
\Big( b(\mathcal{T})^{1/2} - \|P_{\mathcal{A}}\varepsilon\| \Big)_+^2 ,
\]
using $(P_{\mathcal{A}}-P_{\mathcal{T}})X = (I-P_{\mathcal{T}})f +
(P_{\mathcal{A}}-P_{\mathcal{T}})\varepsilon$ and
$\mathrm{RSS}(\mathcal{A}) = \|(I-P_{\mathcal{A}})\varepsilon\|^2$. By
Lemma~\ref{lem:proj} and Assumption~\ref{ass:ssicpath},
$\|P_{\mathcal{A}}\varepsilon\|^2 \le \Phi_q^2 \le C_P\sigma^2(K_n+N+2)\log n = o(n)$, so
\begin{equation}
\label{eq:rsslower}
\mathrm{RSS}(\mathcal{T}) \;\ge\; \|\varepsilon\|^2 - \Phi_q^2 +
\big( b(\mathcal{T})^{1/2} - \Phi_q \big)_+^2 .
\end{equation}

\medskip
\noindent\emph{Step 4: sets with large approximation error.} Suppose
$b(\mathcal{T}) \ge \epsilon n$ for a fixed $\epsilon > 0$. By
\eqref{eq:rsslower} and $\Phi_q^2 = o(n)$,
$\mathrm{RSS}(\mathcal{T}) \ge \|\varepsilon\|^2 + \tfrac12\epsilon n$ for $n$
large, while \eqref{eq:rssstar} bounds
$\mathrm{RSS}(\mathcal{T}^\star)$ by $\|\varepsilon\|^2 + c_2\log n$. Since
$\|\varepsilon\|^2 \le \tfrac32 n\sigma^2$ on $F_n$,
\[ n \log \frac{\mathrm{RSS}(\mathcal{T})}{\mathrm{RSS}(\mathcal{T}^\star)} \;\ge\; n \log\Big( 1 + \frac{\epsilon n/2 - c_2\log n}{\|\varepsilon\|^2 + c_2\log n} \Big) \;\ge\; n \log\Big(1 + \frac{\epsilon}{5\sigma^2}\Big) \]
for $n$ large, the denominator being at most $2n\sigma^2$ on $F_n$,
which exceeds $2N\xi_n$ for $n$ large, and the penalty difference is at least
$-2N\xi_n$. So $\mathrm{sSIC}(\mathcal{T}) > \mathrm{sSIC}(\mathcal{T}^\star)$
and $\mathcal{T}$ is not selected.

In particular no $\mathcal{T}$ with $q < N$ is selected: by the pigeonhole
principle some $\tau_j$ has no point of $\mathcal{T}\cup\{0,n\}$ within
$\delta_{\min}n/3$, since the $N+1$ gaps between consecutive elements of
$\mathcal{T}_0 \cup \{0,n\}$ have length at least $\delta_{\min}n$, so
Lemma~\ref{lem:approx}(c) gives $b(\mathcal{T}) \ge c_3 n$.

\medskip
\noindent\emph{Step 5: sets with small approximation error.} Suppose
$b(\mathcal{T}) < \epsilon n$. Then by \eqref{eq:rssstar},
\eqref{eq:rsslower} and Lemma~\ref{lem:proj}, both
$\mathrm{RSS}(\mathcal{T})$ and $\mathrm{RSS}(\mathcal{T}^\star)$ lie between
$\tfrac13 n\sigma^2$ and $2n\sigma^2$ for $n$ large and $\epsilon$ small.
Writing $D := \mathrm{RSS}(\mathcal{T}) - \mathrm{RSS}(\mathcal{T}^\star)$ and
using $\log x \ge 1 - 1/x$,
\[
n \log \frac{\mathrm{RSS}(\mathcal{T})}{\mathrm{RSS}(\mathcal{T}^\star)}
\;\ge\; \frac{n D}{\mathrm{RSS}(\mathcal{T})}
\;\ge\; \frac{3D}{\sigma^2} \ \ \text{if } D \le 0, \qquad \ge\; \frac{D}{2\sigma^2} \ \ \text{if } D > 0 ,
\]
and by \eqref{eq:rssstar} and \eqref{eq:rsslower},
\begin{equation}
\label{eq:Dbound}
D \;\ge\; \big( b(\mathcal{T})^{1/2} - \Phi_q \big)_+^2 - \Phi_q^2
- c_2 \log n .
\end{equation}

\medskip
\noindent\emph{Step 6: no over-fitting.} Suppose $q > N$ and $b(\mathcal{T}) < \epsilon n$. If $D > 0$ then the logarithmic term is non-negative by the second branch of Step~5, while the penalty difference $2(q-N)\xi_n$ is strictly positive, and $\mathcal{T}$ is not selected. Assume therefore $D \le 0$; dropping the non-negative term in \eqref{eq:Dbound} and using the first branch of Step~5,
\begin{align*}
\mathrm{sSIC}(\mathcal{T}) - \mathrm{sSIC}(\mathcal{T}^\star)
&\;\ge\; -\frac{3}{\sigma^2}\big( \Phi_q^2 + c_2\log n \big) + 2(q-N)\xi_n \\
&\;\ge\; -3C_P(q+N+2)\log n - \frac{3c_2\log n}{\sigma^2} + 2(q-N)\xi_n .
\end{align*}
Since $q \ge N+1$ gives $q + N + 2 \le (2N+3)(q-N)$, the right side is at least $(q-N)\{2\xi_n - 3C_P(2N+3)\log n - 3c_2\sigma^{-2}\log n\}$, which is positive
for $n$ large because $\xi_n/\log n = (\log n)^{\gamma-1} \to \infty$. So
$\mathcal{T}$ is not selected.

\medskip
\noindent\emph{Step 7: localization.} By Steps~4 and~6 the selected set
$\hat{\mathcal{T}} := \hat{\mathcal{T}}^{\,\mathrm{sSIC}}$ has
$|\hat{\mathcal{T}}| = N$, which is the first assertion. Its penalty equals
that of $\mathcal{T}^\star$, so
$\mathrm{sSIC}(\hat{\mathcal{T}}) \le \mathrm{sSIC}(\mathcal{T}^\star)$ forces
$D \le 0$ in Step~5, and \eqref{eq:Dbound} with $q = N$ gives
\[
\begin{aligned}
\big( b(\hat{\mathcal{T}})^{1/2} - \Phi_N \big)_+^2
&\le \Phi_N^2 + c_2\log n, \\
b(\hat{\mathcal{T}})
&\le \big( 2\Phi_N + (c_2\log n)^{1/2} \big)^2
\le c_4 \log n.
\end{aligned}
\]
since $\Phi_N^2 = C_P\sigma^2(2N+2)\log n$. Lemma~\ref{lem:approx}(c), whose constant we write as $c_5 := \tfrac18 c_\ell^2 \underline d^{\,2}$, applied with any $\varrho$ exceeding $(c_4\log n/(c_5n))^{1/3}$, then shows that every $\tau_j$ has a point of $\hat{\mathcal{T}} \cup \{0,n\}$ within $\Gamma_n' := C_6 (n^2\log n)^{1/3}$ of it, where $C_6 > (c_4/c_5)^{1/3}$ is any fixed constant. Since $\Gamma_n' = o(n)$ and the $\tau_j$ are $\delta_{\min}n$ apart, the
$N$ points of $\hat{\mathcal{T}}$ are in one-to-one correspondence with the
$\tau_j$, and each interval $I_j = (\tau_j - \delta_{\min}n/3,\ \tau_j +
\delta_{\min}n/3)$ contains exactly one of them. Lemma~\ref{lem:approx}(b) with
$\varrho = \delta_{\min}/3$ now gives
\[
\tfrac{1}{24} c_\ell^2 c_0\, \underline d^{\,2}\delta_{\min}\, \delta_j(\hat{\mathcal{T}})^2 / n \;\le\; b(\hat{\mathcal{T}}) \;\le\; c_4 \log n ,
\]
so $\delta_j(\hat{\mathcal{T}}) \le C (n\log n)^{1/2}$ for every $j$, which is
the second assertion. If $N = 0$ then $\mathcal{T}^\star = \emptyset$ by \mainthm{}, $b(\mathcal{T}) = 0$ for every candidate, and Step~6 applies to every $\mathcal{T}$ with $q \ge 1$, so the empty set is selected. The statement for LABS-Grid is identical, Step~1 using Theorem~\ref{thm:grid} in place of \mainthm{}.
\end{proof}

\subsection{Remarks}
\label{sup:ssicrem}

\begin{remark}[Why the penalty exponent must exceed one]
\label{rem:gamma}
Step~6 is the only place where $\gamma > 1$ is used, and it cannot be dispensed with there. Adding $q - N$ superfluous knots, chosen from $n$ positions in a
data-dependent way, reduces the residual sum of squares by an amount that can
be as large as a constant multiple of $\sigma^2 (q-N)\log n$, the $\log n$
arising from the $\binom{n}{q}$ possible choices in Lemma~\ref{lem:proj}. The
penalty must therefore charge more than a constant multiple of $\log n$ for
each knot. The ordinary Schwarz criterion, with $\xi_n = \log n$, charges
exactly of that order and does not separate the two, whereas
$\xi_n = (\log n)^\gamma$ with $\gamma > 1$ does, by the factor
$(\log n)^{\gamma-1} \to \infty$. This is the reason for the strengthened form
used in Section~5 of the main paper.
\end{remark}

\begin{remark}[The range of the threshold grid]
\label{rem:gridrange}
The lower restriction in Assumption~\ref{ass:ssicgrid} is
\[
\lambda_a = a\hat\sigma(2\log n)^{1/2}
> \sigma(8\log n)^{1/2}.
\]
Since $\hat\sigma/\sigma \to 1$, this requires some $a \in A_n$ with $a > 2$.
The constant $8$ comes from the
union bound over all $O(n^3)$ triples $(s,e,b)$ in Lemma~\ref{lem:noise} and is
not sharp: LABS evaluates the contrast on $O(\hat N)$ windows, not on all of
them, as noted in Section~3.1 of the main paper, so the maximum that
$\lambda$ must exceed is over a much smaller collection. What
Theorem~\ref{thm:ssic} establishes is that the selection is consistent once the
grid reaches into the range in which the single-threshold theorem applies;
identifying the smallest such range for this contrast is a separate question,
and the simulation evidence indicates that the values used in practice are
adequate.
\end{remark}

\begin{remark}[Rate and the required sample size]
\label{rem:ssicrate}
The rate is the same as in \mainthm{}: selection by the criterion does not
alter its order. This is because the penalty difference between two
candidate sets of size $N$ is zero, so the localization in Step~7 is governed
by the $O(\log n)$ terms alone and not by $\xi_n$. The value of $n$ from which
the proof excludes over-fitting may nevertheless be large, because Step~6
requires $(\log n)^{\gamma-1}$ to exceed a fixed multiple of $2N+3$.
\end{remark}

\begin{remark}[What is not covered]
\label{rem:ssicnot}
Theorem~\ref{thm:ssic} treats selection over a path generated by thresholding.
It does not justify the use of $\hat\sigma$ inside the criterion itself, since
$\mathrm{sSIC}$ as defined does not involve $\hat\sigma$; nor does it cover
selection over a path generated in any other way, or the choice of $\gamma$.
Assumption~\ref{ass:ssicpath} is a restriction on the path rather than a
property of the algorithm. A dimension bound of this order is needed in the
proof of Theorem~\ref{thm:ssic} to keep the projected-noise term in
Lemma~\ref{lem:proj} of order $o(n)$. Section~\ref{sup:sizepath} imposes the
bound by construction rather than assuming it, and in doing so also removes
Assumption~\ref{ass:ssicgrid}.
\end{remark}

\subsection{A solution path indexed by the number of change-points}
\label{sup:sizepath}

Assumptions~\ref{ass:ssicgrid} and~\ref{ass:ssicpath} are auxiliary: the first
asks that the grid of multipliers reach far enough, the second that the path
not contain models whose dimension is comparable to $n$. Both can be removed,
without altering the algorithm, by generating the path over the whole range of
thresholds rather than over a grid, indexing it by the number of change-points
rather than by the threshold, and truncating it at a cap. This is the usual
setting for information criteria, which are ordinarily shown to be consistent
over a finite collection of candidate models of bounded size. The threshold
then plays no role beyond that of an internal device for generating candidates.
Nothing in Sections~\ref{sup:ssicproc} to~\ref{sup:ssicrem} is changed; what
follows is an alternative route to the same conclusion.

\subsubsection*{Reduction to one candidate per model size}

\begin{lemma}[Reduction to one model per size]
\label{lem:reduce}
Let $\mathcal{C}$ be a finite collection of candidate sets and for each $k$ let
$\mathcal{T}_k \in \arg\min\{\mathrm{RSS}(\mathcal{T}) : \mathcal{T} \in
\mathcal{C}, |\mathcal{T}| = k\}$ when that set is non-empty. Then
\[
\min_{\mathcal{T} \in \mathcal{C}} \mathrm{sSIC}(\mathcal{T})
\;=\; \min_{k} \mathrm{sSIC}(\mathcal{T}_k),
\]
and any minimizer of the right side is a minimizer of the left.
\end{lemma}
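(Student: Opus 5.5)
The statement is elementary: it depends only on the fact that the penalty in $\mathrm{sSIC}$ is a function of $|\mathcal{T}|$ alone and that $x\mapsto n\log(x/n)$ is strictly increasing on $(0,\infty)$. I would therefore argue directly, with no probabilistic input. First partition $\mathcal{C}$ into the finitely many non-empty size classes $\mathcal{C}_k:=\{\mathcal{T}\in\mathcal{C}:|\mathcal{T}|=k\}$. Within $\mathcal{C}_k$ the penalty $(2k+3)\xi_n$ is constant. Since $\log$ is increasing, $\mathrm{sSIC}(\mathcal{T})\ge\mathrm{sSIC}(\mathcal{T}_k)$ for every $\mathcal{T}\in\mathcal{C}_k$, because $\mathrm{RSS}(\mathcal{T})\ge\mathrm{RSS}(\mathcal{T}_k)$ by the choice of $\mathcal{T}_k$. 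Hence $\min_{\mathcal{T}\in\mathcal{C}_k}\mathrm{sSIC}(\mathcal{T})=\mathrm{sSIC}(\mathcal{T}_k)$.

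Next take the minimum over $k$. Since $\mathcal{C}$ is the disjoint union of the $\mathcal{C}_k$, we get
\[
\min_{\mathcal{T}\in\mathcal{C}}\mathrm{sSIC}(\mathcal{T})=\min_k\min_{\mathcal{T}\in\mathcal{C}_k}\mathrm{sSIC}(\mathcal{T})=\min_k\mathrm{sSIC}(\mathcal{T}_k),
\]
and all minima exist because the sets involved are finite. For the second claim, let $k^\ast$ minimize the right side. Then $\mathcal{T}_{k^\ast}\in\mathcal{C}$ and $\mathrm{sSIC}(\mathcal{T}_{k^\ast})$ equals the left-hand minimum, so $\mathcal{T}_{k^\ast}$ minimizes the left side.

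The only point needing care is the degenerate case $\mathrm{RSS}(\mathcal{T})=0$, where the logarithm equals $-\infty$. I would handle it by working in the extended reals, where monotonicity of $\log$ still holds, so both steps go through unchanged. I would also note in passing that ties within a size class do not matter, since any choice of $\mathcal{T}_k$ gives the same criterion value. Compatibility with the tie-breaking rule of Section~\ref{sup:ssicproc} is not asserted by the lemma and need not be addressed here.
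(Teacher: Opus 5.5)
Your proposal is correct and follows the paper's own argument: the paper's one-line proof likewise rests on the fact that $\mathrm{sSIC}(\mathcal{T})$ depends on $\mathcal{T}$ only through $\mathrm{RSS}(\mathcal{T})$ and $|\mathcal{T}|$, and is increasing in $\mathrm{RSS}$ for fixed size. You have simply written out the partition into size classes and the extended-real treatment of $\mathrm{RSS}=0$ explicitly.
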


\begin{proof}
$\mathrm{sSIC}(\mathcal{T})$ depends on $\mathcal{T}$ only through
$\mathrm{RSS}(\mathcal{T})$ and $|\mathcal{T}|$, and is increasing in the
former for fixed $|\mathcal{T}|$.
\end{proof}

Lemma~\ref{lem:reduce} is elementary but it is what makes the reindexing
legitimate: selecting by $\mathrm{sSIC}$ over a collection of candidate sets is
the same as retaining the best set of each size and then choosing the size. No
information is lost by presenting the path as a sequence indexed by $k$.

\subsubsection*{The path}

\begin{defin}[Size-indexed path]
\label{def:sizepath}
Fix a cap $K$. Let
\[
\Pi_n^{K} := \big\{ \hat{\mathcal{T}}(\lambda) : \lambda \ge 0 \big\}
\cap \big\{ \mathcal{T} : |\mathcal{T}| \le K \big\},
\]
where $\hat{\mathcal{T}}(\lambda)$ is the output of LABS, or of LABS-Grid with
fixed $M$ and $R$, at threshold $\lambda$. Let $\mathcal{K}_n :=
\{|\mathcal{T}| : \mathcal{T} \in \Pi_n^K\}$, let $\mathcal{T}_k$ be a
minimum-$\mathrm{RSS}$ element of $\Pi_n^K$ of size $k$, and set
\[
\hat{\mathcal{T}}^{\,K} := \arg\min_{k \in \mathcal{K}_n}
\mathrm{sSIC}(\mathcal{T}_k) .
\]
\end{defin}

Since $\hat{\mathcal{T}}(\lambda)$ is piecewise constant in $\lambda$, with
breakpoints among the finitely many contrast values the algorithm computes,
$\Pi_n^K$ is a finite collection and $\hat{\mathcal{T}}^{\,K}$ is well defined.

\subsubsection*{Computing the path}

The proposal made at a node does not depend on the threshold. At a node with
interval $(s,e]$ the proposal $\hat b$ and its value $W$ are determined by the
data on that interval alone, and the two children are $(s,\hat b]$ and
$(\hat b,e]$; the threshold decides only which nodes are reached and what each
returns. The intervals and proposals therefore form a single binary tree, the
\emph{proposal tree} $\mathcal{F}$, and the same tree serves every $\lambda$.

The look-ahead takes no part in building $\mathcal{F}$. Were
\algmain{} itself run at $\lambda = 0$, every call would recurse to the
shortest evaluable intervals, so each child would return a point immediately
adjacent to its parent's proposal, each re-test interval would have length
$O(1)$, and the re-test would carry no information; those of length less than
three would not admit the contrast at all. The construction below therefore separates the two, building
$\mathcal{F}$ from the proposals alone and applying the look-ahead afterwards,
at each threshold in turn, where the re-test intervals are the ones \algmain{}
actually uses.

\begin{enumerate}
\item \emph{Proposal tree.} Starting from $(0,n]$ and recursing on
$(s,\hat b]$ and $(\hat b,e]$ whenever $B_{s,e} \ne \emptyset$, record at each
node its interval, its proposal and the proposal value. No threshold is used
and no re-test is performed. Nodes are expanded lazily, only as far as step~3
requires.

\item \emph{Output at a given $\lambda$.} Evaluate bottom-up on $\mathcal{F}$:
$\mathrm{out}(v) = \emptyset$ if $W_v \le \lambda$, and otherwise the return of
\algmain{} at $v$, formed from the children's outputs $\mathcal{L}$ and
$\mathcal{R}$, with $s'$, $e'$ and the re-test as there. This reproduces
$\hat{\mathcal{T}}(\lambda)$ exactly. The look-ahead enters here and only here,
and since $(s',e']$ depends on $\lambda$ through $\mathcal{L}$ and
$\mathcal{R}$, the re-test contrasts are recomputed at each threshold.

\item \emph{Threshold sweep.} Here a sweep means evaluating the output as
$\lambda$ decreases through its breakpoints down to zero. Record the output
after each pass and expand $\mathcal{F}$ as deeper nodes become active. Early
stopping is discussed below.

\item \emph{Selection.} Retain, for each $k \le K$, the recorded set of size
$k$ with the smallest $\mathrm{RSS}$, and minimize $\mathrm{sSIC}$ over $k$.
\end{enumerate}

The breakpoints of $\lambda \mapsto \hat{\mathcal{T}}(\lambda)$ lie among the
proposal values and the re-test values, and may be assembled exactly, bottom-up:
those at a node are the breakpoints of its two children together with $W_v$ and
one re-test value for each of the resulting sub-intervals of $\lambda$. In an
implementation it is simpler to sweep a grid of thresholds, refining it until no
new set of size at most $K$ appears.

Step~1 expands only the part of $\mathcal{F}$ reached by the smallest threshold
used, so its cost is that of one run of LABS at that threshold. Each pass in step~2 costs $O(n\log n)$ under balanced splits, the re-test intervals at a given depth being disjoint, so $P$ passes cost $O(Pn\log n)$ typically and $O(Pn^2)$ in the worst case, matching the corresponding costs for LABS itself. The tree must be expanded far
enough for the thresholds being swept. Truncating $\mathcal{F}$ by depth is not a substitute
for truncating the path by size.

Stopping the sweep once the output size first exceeds $K$ as $\lambda$
decreases is a heuristic rule. It would be valid if the map
$\lambda\mapsto|\hat{\mathcal{T}}(\lambda)|$ were non-increasing, equivalently
if the output size could only increase as the threshold is lowered. This
monotonicity can fail when the two children return sets of equal size but in
different positions: the re-test interval can then differ between two
thresholds, and the parent estimate may be retained at the higher threshold
but discarded at the lower one. In that case early stopping can omit members
of $\Pi_n^K$ occurring at smaller $\lambda$. Sweeping to $\lambda=0$ is always
valid and is what Definition~\ref{def:sizepath} and
Theorem~\ref{thm:ssicK} require. If the stated monotonicity does hold, once the
output size exceeds $K$ it cannot later return to $K$ or below, so early
stopping omits no member of $\Pi_n^K$.

\subsubsection*{Consistency}

\begin{theorem}[sSIC over the size-indexed path]
\label{thm:ssicK}
Let the model and assumptions of \mainthm{} hold, together with
Assumption~\ref{ass:ssicnoise}, and let $\gamma > 1$. Let $K = K_n \to \infty$
with $K_n \log n = o(n)$. Then $\hat{\mathcal{T}}^{\,K_n}$ of
Definition~\ref{def:sizepath} satisfies the conclusion of
Theorem~\ref{thm:ssic}. No analogue of Assumption~\ref{ass:ssicgrid} or
Assumption~\ref{ass:ssicpath} is required. The same holds for LABS-Grid with
fixed $M, R \ge 2$, with \mainthm{} replaced by Theorem~\ref{thm:grid}.
\end{theorem}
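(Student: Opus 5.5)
The plan is to reduce Theorem~\ref{thm:ssicK} to the argument of Theorem~\ref{thm:ssic}, using Lemma~\ref{lem:reduce} to pass between the collection $\Pi_n^{K}$ and its size-indexed representatives $\mathcal{T}_k$. Inspecting Steps~2--7 of that proof, Assumption~\ref{ass:ssicgrid} is used only in Step~1, to guarantee that the path contains a good set $\mathcal{T}^\star$ with $|\mathcal{T}^\star|=N$ and $\delta_j(\mathcal{T}^\star)\le C_\star\bar r_n$. Assumption~\ref{ass:ssicpath} is used only to make $\Phi_q^2=o(n)$ uniformly over candidates. The second use is immediate here: every element of $\Pi_n^K$ has size at most $K_n$ by construction, and $K_n\log n=o(n)$. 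Lemma~\ref{lem:proj} then applies to every $\mathcal{A}=\mathcal{T}\cup\mathcal{T}_0$, since $|\mathcal{A}|\le K_n+N\le n/2$ for large $n$. So the work is to produce the good set without assuming anything about a threshold grid.

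For the good set, I would use the fact that $\Pi_n^K$ is built from all thresholds $\lambda\ge0$. Fix $\eta>0$ and $\varsigma\in(0,\tfrac12\Lambda_{\mathrm{det}})$, and take the deterministic threshold $\lambda=(1+\eta)\sigma(8\log n)^{1/2}$. This value lies in the band of Remark~\ref{rem:uniformlam} for large $n$. On $E_n$, Theorem~\ref{thm:labs_consistency} (or Theorem~\ref{thm:grid}) then gives an output $\mathcal{T}^\star=\hat{\mathcal{T}}(\lambda)$ with $|\mathcal{T}^\star|=N$ and the stated localization. Since $N$ is fixed and $K_n\to\infty$, we have $N\le K_n$ eventually, so $\mathcal{T}^\star\in\Pi_n^K$. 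No randomness from $\hat\sigma$ enters, because the threshold is not tied to $\hat\sigma$; this also removes the need for the identification of the two roles of $\sigma$ at this step, although Assumption~\ref{ass:ssicnoise} is still needed for $F_n$. The size-$N$ representative $\mathcal{T}_N$ has $\mathrm{RSS}(\mathcal{T}_N)\le\mathrm{RSS}(\mathcal{T}^\star)$, so the upper bound \eqref{eq:rssstar} transfers to $\mathcal{T}_N$ without change. Step~1 only requires some path member satisfying \eqref{eq:rssstar} with penalty equal to that of a size-$N$ set, and $\mathcal{T}_N$ provides this. When $N=0$, the same threshold yields $\emptyset\in\Pi_n^K$.

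Next I would rerun Steps~3--7 over the finite family $\{\mathcal{T}_k:k\in\mathcal{K}_n\}$, or equivalently over all of $\Pi_n^K$; Lemma~\ref{lem:reduce} says the minimizer is the same. Step~4 excludes sets with $b(\mathcal{T})\ge\epsilon n$, and in particular all sets of size below $N$. Step~6 excludes sizes above $N$, using $\xi_n/\log n\to\infty$ and the uniform bound $\Phi_q^2\le C_P\sigma^2(q+N+2)\log n$. The bound $q+N+2\le(2N+3)(q-N)$ holds for every $q\le K_n$, so the factor multiplying $q-N$ is uniform in $q$ even though $q$ may grow with $n$. Step~7 then localizes the selected size-$N$ set at rate $\bar r_n$ via Lemma~\ref{lem:approx}.

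The main obstacle is making the uniformity in $q$ in Steps~4 and~6 airtight now that $q$ ranges up to $K_n\to\infty$ rather than a bounded set. In Step~4, the bound $\Phi_q^2=o(n)$ must be uniform in $q$, which follows from $K_n\log n=o(n)$. The penalty difference there can be as large as $2K_n\xi_n$, not merely $2N\xi_n$, but this only helps for $q>N$, and for $q<N$ it is at most $2N\xi_n$. The other point to check is that the RSS comparisons in Step~5 (RSS between $\tfrac13n\sigma^2$ and $2n\sigma^2$) hold uniformly on $F_n$. Once these are confirmed, the proof of Theorem~\ref{thm:ssic} carries over verbatim.
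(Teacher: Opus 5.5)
Your proposal is correct and follows essentially the same route as the paper. The cap gives Assumption~\ref{ass:ssicpath} by construction, and a threshold admitted by Theorem~\ref{thm:labs_consistency} yields a good set $\mathcal{T}^\star$, which lies in $\Pi_n^{K_n}$ once $K_n\ge N$ because the path is generated over all $\lambda\ge0$; Lemma~\ref{lem:reduce} and Steps~2--7 of Theorem~\ref{thm:ssic} then finish the argument, and your fixing of the deterministic threshold $(1+\eta)\sigma(8\log n)^{1/2}$ and comparing against $\mathcal{T}_N$, where the paper uses $\mathrm{sSIC}(\hat{\mathcal{T}}^{\,K_n})\le\mathrm{sSIC}(\mathcal{T}_N)\le\mathrm{sSIC}(\mathcal{T}^\star)$, are cosmetic variations.
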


\begin{proof}
Assumption~\ref{ass:ssicpath} holds with the stated $K_n$ by construction,
since every element of $\Pi_n^{K_n}$ has at most $K_n$ points.

For Step~1 of the proof of Theorem~\ref{thm:ssic}, let $\lambda$ be any
threshold in the range admitted by \mainthm{}, which is non-empty. Then
$\mathcal{T}^\star := \hat{\mathcal{T}}(\lambda)$ has $|\mathcal{T}^\star| = N$
and $\delta_j(\mathcal{T}^\star) \le C_\star\bar r_n$ for every $j$, on an
event of probability tending to one. Since $K_n \to \infty$ we have
$N \le K_n$ for $n$ large, so $\mathcal{T}^\star \in \Pi_n^{K_n}$; no condition
on a grid of multipliers is needed, because the path is generated over all
$\lambda \ge 0$. By Lemma~\ref{lem:reduce},
\[
\mathrm{sSIC}\big( \hat{\mathcal{T}}^{\,K_n} \big) \;\le\;
\mathrm{sSIC}(\mathcal{T}_N) \;\le\; \mathrm{sSIC}(\mathcal{T}^\star) .
\]
Steps~2 to~7 of the proof of Theorem~\ref{thm:ssic} are unchanged: they compare
an arbitrary candidate with $\mathcal{T}^\star$, and use only that the selected
set has $\mathrm{sSIC}$ no larger than that of $\mathcal{T}^\star$ and that
every candidate has at most $K_n$ points.
\end{proof}

\subsubsection*{Remarks}

\begin{remark}[Choice of the cap]
\label{rem:capchoice}
The conditions $K_n\to\infty$ and $K_n\log n=o(n)$ make the cap
asymptotically non-binding while requiring $K_n=o(n/\log n)$. For example,
$K_n=\lceil n/(\log n)^2\rceil$ is admissible. In practice a fixed
$K$ exceeding any plausible number of change-points is what is used, and
Theorem~\ref{thm:ssicK} then holds for every signal with $N \le K$. This is the
same convention under which the Schwarz criterion is ordinarily shown to be
consistent, namely over a finite collection of models of bounded size.
\end{remark}

\begin{remark}[What is gained, and what is lost]
\label{rem:sizegain}
Relative to Theorem~\ref{thm:ssic}, both auxiliary conditions are removed:
Assumption~\ref{ass:ssicgrid} because the threshold sweep covers every
threshold, so the admissible range of \mainthm{} is reached by construction,
and
Assumption~\ref{ass:ssicpath} because the cap is imposed rather than assumed.
What is not removed is the requirement that the admissible range be non-empty,
which is \assterm{} and \assnondeg{} together with
Lemma~\ref{lem:noise}; the threshold sweep locates a suitable threshold
without knowing where it is, but it does not create one. Its computational
cost is a factor $P$ on the running time, in place of the $|A_n|$ runs of the
procedure of
Section~\ref{sup:ssicproc}.
\end{remark}

\section{Connections with other look-ahead methods}
\label{sup:lookahead}

This section expands on the brief discussion in the Introduction of the main paper, giving a fuller account of the connections between the look-ahead mechanism in LABS and related ideas in other areas of algorithm design.

\subsection{Constraint satisfaction and SAT solvers}

In constraint satisfaction problems (CSPs), a backtracking algorithm assigns values to variables one at a time, recursing when an assignment is made and backtracking when a dead-end is reached. Techniques for improving backtracking algorithms are traditionally classified into two categories \citep{kondrak1997, dechter2003}:

\begin{itemize}
    \item Look-ahead schemes attempt to foresee the effects of the current assignment on future, not yet assigned, variables. The simplest look-ahead technique is forward checking, which removes from the domains of unassigned variables any values that would be inconsistent with the current partial assignment. More sophisticated techniques such as maintaining arc consistency propagate constraints more extensively, potentially detecting dead-ends earlier. A prominent special case is the Boolean satisfiability problem (SAT), where look-ahead solvers \citep{hvm09} tentatively propagate the consequences of candidate variable assignments, detecting unit propagations and failures, before committing to a branching decision.

    \item Look-back schemes analyze dead-ends when they occur to extract useful information. The simplest look-back is chronological backtracking; more sophisticated techniques such as conflict-directed backjumping identify the variables responsible for the failure and backtrack directly to them, skipping irrelevant intermediate variables.
\end{itemize}

The distinction drawn in this literature is that look-ahead uses information about future parts of the search to improve current decisions, while look-back uses information about past failures to improve backtracking decisions. Empirical studies have found that stronger look-ahead can reduce the benefit of look-back techniques \citep{kondrak1997}.

LABS employs a mechanism with features of both. Like look-ahead, it uses information from child subproblems to refine decisions at the parent level: the child recursions can be viewed as exploring future parts of the search tree, and the change-points they detect provide information that improves the parent's decision. Like look-back, it uses information gathered after an initial decision, in the proposal phase, to revise that decision: the refinement phase re-evaluates the parent change-point in the light of what was discovered in the children.

There are also differences. In CSP and SAT look-ahead, information flows from the current assignment to future variables through constraint propagation; the algorithm does not actually recurse into future subproblems. In LABS, by contrast, we do recurse into the child subproblems and use the actual results of those recursions, not just propagated constraints, to refine the parent.

The connection to CSP also suggests a broader perspective: LABS can be viewed as a form of intelligent backtracking for the change-point detection problem. Just as conflict-directed backjumping in CSPs avoids exploring irrelevant parts of the search tree by identifying culprit variables, LABS avoids detecting spurious change-points by using child information to identify narrower, more relevant intervals. The recursive structure handles the bookkeeping automatically, much as the call stack in a recursive backtracking algorithm handles the state management for backjumping.

\subsection{Decision tree induction}

In the decision tree literature the connection to LABS is particularly direct. Standard CART \citep{bfso84} selects splits greedily: at each node, the best split is chosen without considering what splits will follow in the child nodes. \cite{ms95} study what happens when the algorithm instead looks one or more levels ahead, tentatively making a split, building the child trees, and using the quality of the resulting subtree to evaluate the parent split. They find that look-ahead can degrade performance in classification trees, a phenomenon they term pathology, and characterize conditions under which it helps and under which it hurts. \cite{em07} develop this idea further with anytime algorithms that invest varying amounts of look-ahead depending on the available computation time.

The structural parallel with LABS is that in both settings a recursive partitioning algorithm tentatively makes a split, or proposes a change-point, recurses into the children, and uses the results to refine or validate the parent decision. A difference is that in tree look-ahead the purpose is to choose a better split variable or threshold from scratch, whereas in LABS the child recursions serve to narrow the interval on which the parent change-point is re-evaluated, exploiting the specific geometry of the change-point detection problem.

\subsection{Numerical linear algebra}

In numerical linear algebra, \cite{pt85} propose a look-ahead modification of the Lanczos algorithm for unsymmetric matrices, in which the procedure tentatively explores further Krylov vectors before committing to a basis selection, thereby circumventing potential breakdowns. The look-ahead similarly involves tentative exploration before commitment, but the structure being explored, namely Krylov subspaces, and the purpose, namely avoiding numerical breakdown, are different from ours.

\subsection{Bayesian sequential analysis}

In Bayesian sequential analysis, the term look-ahead refers to procedures that, at each stage of sequential sampling, evaluate whether to stop collecting data or to continue by examining the expected Bayes risk a fixed number of steps into the future; see \cite{b85}, Section 7.4.6. Our use of the term is unrelated to this sequential-sampling context: in LABS, looking ahead refers to examining the results of deeper recursion before confirming a candidate change-point, not to deciding when to stop gathering observations.

\subsection{Further research directions}

Several questions remain open. First, one could seek sufficient conditions on the signal for \assterm{} and \assnondeg{}, or a formulation that does not require them. The case in which $N$ grows with $n$ also requires separate analysis because the population recursion is then no longer a fixed finite object. Second, LABS could be extended to other signal models, including piecewise-polynomial signals of higher degree and signals with both level and slope changes. Third, it would be useful to determine the smallest range of thresholds for which a solution path computed on a fixed grid is guaranteed to contain a consistent model, and to develop more efficient path-construction algorithms. Finally, further work could establish conditions under which increasing $M$ makes the assumptions of Section~\ref{sup:gridass} easier to satisfy and study how a suitable finite-sample threshold depends on $M$. The present consistency result does not address either question.

\subsection{Summary}

The strategy of full recursion followed by refinement is what distinguishes LABS among these uses of look-ahead. Rather than propagating constraints or information forward, we solve the child subproblems completely and then use the solutions to define a narrower interval for re-evaluating the parent. This is possible because the child recursions are computationally cheap, taking at most as long as the parent, and because the structure of the change-point problem allows the child solutions to inform the parent refinement directly.

\end{document}